\documentclass[11pt,oneside,a4paper]{article}   	
\usepackage{jheppub}
\usepackage{microtype}

\usepackage[all]{xy}
\usepackage{graphicx}				
\usepackage{epsfig}
\usepackage{amssymb}
\usepackage{amsthm}
\usepackage{physics}
\usepackage{enumerate}
\usepackage{float}
\usepackage{scalerel}
\usepackage{bm}
\usepackage{amsfonts} 
\usepackage{appendix}
\usepackage{verbatim}
\usepackage[T1]{fontenc}

\usepackage{mathtools}
\usepackage{tikz}
\usetikzlibrary{arrows,decorations.markings}

\usepackage[capitalise]{cleveref}
\usepackage{dsfont}

\usepackage[T1]{fontenc}

\DeclarePairedDelimiter\ceil{\lceil}{\rceil}
\DeclarePairedDelimiter\floor{\lfloor}{\rfloor}

\newtheorem{theorem}{Theorem}[section]

\newtheorem{lemma}[theorem]{Lemma}
\newtheorem{definition}{Definition}[theorem]

\crefname{case}{Case}{Cases}

\DeclareMathOperator{\poly}{poly}
\renewcommand{\paragraph}[1]{\textbf{\textit{#1}}}

\newcommand{\hyper}{\mathbb{H}^3}
\newcommand{\Hbulk}{H_\mathrm{bulk}}
\newcommand{\Hboundary}{H_\mathrm{boundary}}

\newcommand{\wmin}{w_{\min}}
\newcommand{\rmin}{r_{\min}}
\newcommand{\Hgeneric}{H_\mathrm{generic}}

\newcommand{\identity}{\mathds{1}}
\newcommand{\Hcode}{H_{\mathcal{C}}}
\newcommand{\Hnotcode}{H_{\overline{\mathcal{C}}}}
\newcommand{\boundaryHilbert}{\mathcal{H}_\mathrm{boundary}}
\newcommand{\Hbulkzero}{H_{\text{bulk},0}}
\newcommand{\Htildebulk}{\tilde{H}_{\text{bulk}}}

\title{Toy Models of Holographic Duality between local Hamiltonians}


\author{Tamara Kohler}
\author{and Toby Cubitt}
\affiliation{Department of Computer Science, University College London, UK}
\emailAdd{tamara.kohler.16@ucl.ac.uk}
\emailAdd{t.cubitt@ucl.ac.uk}

\keywords{holography, duality, tensor networks, quantum error correction, Hamiltonian simulation}

\arxivnumber{1810.08992}

\abstract{Holographic quantum error correcting codes (HQECC) have been proposed as toy models for the AdS/CFT correspondence, and exhibit many of the features of the duality.
  HQECC give a mapping of states and observables.
  However, they do not map local bulk Hamiltonians to local Hamiltonians on the boundary.
  In this work, we combine HQECC with Hamiltonian simulation theory to construct a bulk-boundary mapping between local Hamiltonians, whilst retaining all the features of the HQECC duality. 
  This allows us to construct a 
  duality between \emph{models}, encompassing the relationship between bulk and boundary energy scales and time dynamics.

  It also allows us to construct a map in the reverse direction: from local boundary Hamiltonians to the corresponding local Hamiltonian in the bulk.
  Under this boundary-to-bulk mapping, the bulk geometry emerges as an approximate, low-energy, effective theory living in the code-space of an (approximate) HQECC on the boundary.
  At higher energy scales, this emergent bulk geometry is modified in a way that matches the toy models of black holes proposed previously for HQECC.
  Moreover, the duality on the level of dynamics shows how these toy-model black holes can form dynamically.}

\begin{document}
\maketitle
\flushbottom





\section{Introduction}

The AdS/CFT correspondence is a postulated duality between quantum gravity in $(d+1)$-dimensional, asymptotically anti-de-Sitter (AdS) space, and a conformal field theory (CFT) defined on its $d$-dimensional boundary~\cite{Maldacena:2003}.
It has provided insight into theories of quantum gravity, and has also been used as a tool for studying strongly-interacting quantum field theories.
Recently it has been shown that important insight into the emergence of bulk locality in AdS/CFT can be gained through the theory of quantum error correcting codes~\cite{Almheiri:2015}.
This idea has been used to construct holographic quantum error correcting codes (HQECC)~\cite{Pastawski:2015,Swingle:2012,Swingle:2012a,pluperfect,Hayden:2016,arpan_2016,osborne-17}, which realise many of the interesting structural features of AdS/CFT.

Holographic quantum codes give a map from bulk to boundary Hilbert space, hence also from observables in the bulk to corresponding boundary observables.
But the AdS/CFT correspondence is also a mapping between \emph{models}, not just between states and observables; it relates quantum theories of gravity in the bulk to conformal field theories in one dimension lower on the boundary.
For holographic code models, this means realising a mapping between local Hamiltonians in the bulk and local Hamiltonians on the boundary.

Since holographic quantum codes give a mapping from any bulk operator to the boundary, one can certainly map any local bulk Hamiltonian to the boundary.
But this gives a completely non-local boundary Hamiltonian, with global interactions that act on the whole boundary Hilbert space at once.
Local \emph{observables} deep in the bulk are expected to map under AdS/CFT duality to non-local boundary observables, so this is fine -- indeed, expected -- for observables.
But a global Hamiltonian acting on the entire boundary Hilbert space has lost all relation to the boundary geometry; there is no meaningful sense in which it acts in one dimension lower.
Indeed, for these toy models on finite dimensional spins, any Hamiltonian whatsoever can be realised using a global operator.
For the correspondence between bulk and boundary models to be meaningful, the local Hamiltonian describing the bulk physics needs to map to a \emph{local} Hamiltonian on the boundary.
For this reason,~\cite{Pastawski:2015,pluperfect,Hayden:2016,arpan_2016} study the mapping of observables and states in their construction, and do not apply it to Hamiltonians.

By standing on the shoulders of the holographic quantum code results, in particular the HaPPY code~\cite{Pastawski:2015}, and combining stabilizer code techniques with the recent mathematical theory of Hamiltonian simulation~\cite{cubitt:2017} and techniques from Hamiltonian complexity theory, we build on these previous results to construct a full holographic duality between quantum many-body models in 3D hyperbolic space and models living on its 2D boundary.
(We focus on 3D/2D dualities for our explicit constructions, as the smallest dimension where our simulation techniques can be applied, but the techniques extend to boundary dimensions~$\geq2$.)
This allows us to extend the toy models of holographic duality in previous HQECC to encompass local Hamiltonians, and in doing so enables us to say something about how energy scales and dynamics in the bulk are reflected in the boundary.
It also allows us to explore the duality in the other direction: from boundary to bulk.
This gives insight into how the hyperbolic bulk geometry emerges as the geometry of a low-energy effective theory, and how this effective bulk geometry gets distorted at higher energies.

The remainder of the paper is set out as follows. In \cref{summary} we present our main result, and give an overview of the proof.
In \cref{discuss} we discuss the implications of our results, including a toy model of black hole formation within these HQECC. The conclusions are presented in \cref{discussion}. The technical background and rigorous mathematical proofs of all the results are given in \cref{prelim} and \cref{technical}, respectively.

\section{Main results} \label{summary}
In this paper we construct an explicit duality between quantum systems in 3D hyperbolic space, $\hyper$, and quantum systems on its 2D boundary, which encompasses states, observables, and local Hamiltonians. The map is a quantum error correcting code, where the logical Hilbert space is a set of `bulk' qudits, which are embedded in a tessellation of $\hyper$. The physical Hilbert space is a set of `boundary' qudits, which lie on the 2D boundary of $\hyper$. Every state and observable in the bulk/logical Hilbert space is mapped to a corresponding state / observable in the boundary/physical Hilbert space. The error correcting properties of the map means that it is possible to recover from erasure of part of the boundary Hilbert space, as in previous HQECC toy models.

Under our mapping, any local Hamiltonian in the bulk is mapped approximately to a 2-local, nearest-neighbour Hamiltonian in the boundary (where a $k$-local Hamiltonian is a sum over terms which each act non-trivially on at most $k$-qudits, and nearest-neighbour means the interactions are only between neighbouring). In the language of error correction, this means that the code subspace of our quantum error correcting code is approximately the low-energy subspace of a 2-local Hamiltonian $\Hboundary$, where time evolution in the code subspace is also governed by $\Hboundary$.\footnote{Note that this result does not contradict recent results in \cite{Woods:2019,Faist:2019} regarding the incompatibility of continuous symmetries and quantum error correction, as our $\Hboundary$ contains high-weight terms.}

It is important to emphasise that, as in the case of tensor network constructions of HQECC~\cite{Pastawski:2015,pluperfect,Hayden:2016,arpan_2016}, the duality we construct does not per se have anything to do with quantum gravity.
It gives a holographic duality for \emph{any} local quantum Hamiltonian, not specifically Hamiltonians modelling quantum gravity.
However, this duality does exhibit some of the structural features of the AdS/CFT correspondence.
Notably, entanglement wedge reconstruction and redundant encoding are seen in the construction.
The Ryu-Takayanagi formula is also approximately obeyed for connected bulk regions.\footnote{All these features are inherited from \cite{Pastawski:2015}, which our construction builds on.}

Therefore, one natural application of this construction is to toy models of the AdS/CFT correspondence.
This requires choosing a bulk Hamiltonian, $\Hbulk$, which models semi-classical gravity.
Applying our holographic duality to this particular choice of bulk Hamiltonian, the time dynamics and energetic properties of the toy model do then exhibit certain of the features expected of AdS/CFT, in addition to the static features inherited from the underlying HQECC construction (see \cref{discuss} for details).
However, this toy model certainly does not capture every aspect of AdS/CFT duality.
In particular, the boundary model we obtain is not a conformal field theory or Lorentz-invariant.
And it is constructed for non-relativistic quantum mechanical systems in Euclidean space, in which time appears as an external parameter, not relativistic quantum systems in Minkowski space.
We make no attempt in this work to understand whether AdS space can be embedded in some suitable way into $\hyper$.
(Indeed, a more fruitful approach for future research is likely to be to apply the \emph{techniques} we have developed in
AdS space, rather than attempting to use the duality on $\hyper$ directly.)

A complete toy model of AdS/CFT duality would have to address these and many other aspects, as well as incorporating gravity more fully.
Our holographic duality is one more step towards such a toy model, going beyond previous HQECC constructions, but as yet still a long way short of a full, mathematically rigorous construction of AdS/CFT.

\subsection{Rigorous statement of the result}
Our main results are encapsulated in the following theorem:

\begin{theorem} \label{main_intro}
  Let $\hyper$ denote 3D hyperbolic space, and let $B_r(x)\subset\mathbb{H}^3$ denote a ball of radius $r$ centred at $x$.
  Consider any arrangement of $n$ qudits in $\hyper$ such that, for some fixed $r$, at most $k$ qudits and at least one qudit are contained within any $B_r(x)$.
  Let $L$ denote the minimum radius ball $B_L(0)$ containing all the qudits (which wlog we can take to be centred at the origin).
  Let $\Hbulk = \sum_Z h_Z$ be any local Hamiltonian on these qudits, where each $h_Z$ acts only on qudits contained within some $B_r(x)$.

  Then we can construct a Hamiltonian $\Hboundary$ on a 2D boundary manifold $\mathcal{M}\in\mathbb{H}^3$ with the following properties:
  \begin{enumerate}
  \item%
    $\mathcal{M}$ surrounds all the qudits, has diameter $O\left(\max(1,\frac{\ln(k)}{r}) L + \log\log n\right)$, and is homeomorphic to the Euclidean 2-sphere.
  \item%
    The Hilbert space of the boundary consists of a triangulation of $\mathcal{M}$ by triangles of $O(1)$ area, with a qubit at the centre of each triangle, and a total of $O\left(n(\log n)^4\right)$ triangles/qubits.
  \item%
    Any local observable/measurement $M$ in the bulk has a set of corresponding observables/measurements $\{M'\}$ on the boundary with the same outcome. A local bulk operator $M$ can be reconstructed on a boundary region $A$ if $M$ acts within the greedy entanglement wedge of $A$, denoted $\mathcal{E}[A]$.\footnote{The entanglement wedge, $\mathcal{E}_A$ is a bulk region constructed from the minimal area surface used in the Ryu-Takayanagi formula. It has been suggested that on a given boundary region, $A$, it should be possible to reconstruct all operators which lie in $\mathcal{E}_A$~\cite{Headrick:2014}. The greedy entanglement wedge is a discretised version defined in~\cite[Definition~8]{Pastawski:2015}}
  \item%
    $\Hboundary$ consists of 2-local, nearest-neighbour interactions between the boundary qubits.
    Furthermore, $\Hboundary$ can be chosen to have full local $SU(2)$ symmetry; i.e.\ the local interactions can be chosen to all be Heisenberg interactions: $\Hboundary = \sum_{\langle i,j\rangle} \alpha_{ij} (X_iX_j + Y_iY_j + Z_iZ_j)$.
  \item%
    $\Hboundary$ is a $(\Delta_L,\epsilon,\eta)$-simulation of $\Hbulk$ in the rigorous sense of~\cite[Definition~23]{cubitt:2017}, with $\epsilon,\eta = 1/\poly(\Delta_L)$, $\Delta_L = \Omega\left(\norm{\Hbulk}^6\right)$, and where the interaction strengths in $\Hboundary$ scale as  $\max_{ij}\abs{\alpha_{ij}} = O\left(\Delta_L^{\poly(n\log(n))}\right)$.
  \end{enumerate}
\end{theorem}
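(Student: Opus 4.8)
The plan is to build $\Hboundary$ in two conceptual stages and then glue them together using the fact that $(\Delta,\epsilon,\eta)$-simulation in the sense of~\cite{cubitt:2017} is transitive. In stage~(i) I realise the HQECC as an isometry $V$ from the bulk qudits into a set of boundary qudits and show that $\Hbulk$ is simulated, with only exponentially small error, by an intermediate boundary Hamiltonian $H^{(0)}$ built directly from the code; in stage~(ii) I use perturbative Hamiltonian-simulation gadgets to replace $H^{(0)}$ by a genuinely $2$-local, nearest-neighbour, Heisenberg Hamiltonian. Composing the two simulations and optimising the free parameters so that the total error is $1/\poly(\Delta_L)$ at the claimed cutoff $\Delta_L=\Omega(\norm{\Hbulk}^6)$ then delivers all five properties at once.

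For stage~(i) I would fix a uniform tessellation of $\hyper$ whose cells carry a holographic stabilizer code built from perfect tensors, and coarse-grain the $n$ bulk qudits into it: the hypothesis that every $B_r(x)$ contains between $1$ and $k$ qudits is exactly what makes the occupied region uniformly filled at scale $r$, so that after rescaling the tessellation by the factor $\max(1,\frac{\ln k}{r})$ a bounded number of original qudits lands in each cell and the occupied region is $O(\max(1,\frac{\ln k}{r})L)$ cells across. Contracting tensors inward from a cutoff surface defines $V$, and the HaPPY-type analysis of~\cite{Pastawski:2015} supplies the erasure-correction and greedy-entanglement-wedge reconstruction of part~3 --- in particular each term $h_Z$, supported in some $B_r(x)$, pushes forward to a boundary operator $h_Z'$ supported on the entanglement wedge of a bounded boundary region. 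Taking $H^{(0)}=\Delta\,\Hnotcode+\sum_Z h_Z'$, with $\Hnotcode$ a weighted sum of code stabilizers penalising states outside the code space by $\ge\Delta$, the perturbative projection lemma shows $H^{(0)}$ is a $(\Delta,O(\norm{\Hbulk}^2/\Delta),O(\norm{\Hbulk}/\Delta))$-simulation of $\Hbulk$ with isometry $V$.

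For stage~(ii), $H^{(0)}$ lives on $\poly(n)$ boundary qudits with terms that are high-weight but each geometrically confined, and I would run the standard chain of perturbative gadgets: weight-reduction gadgets down to $3$- and then $2$-local; subdivision and crossing gadgets to turn the interaction graph into a nearest-neighbour graph on a triangulated surface; and finally the result that the Heisenberg interaction is a universal simulator, converting any $2$-local Hamiltonian into a nearest-neighbour Heisenberg Hamiltonian $\Hboundary=\sum_{\langle i,j\rangle}\alpha_{ij}(X_iX_j+Y_iY_j+Z_iZ_j)$ on a $2$D lattice, which gives the $SU(2)$ symmetry of part~4. Each gadget layer multiplies the qubit count by a polylogarithmic factor --- producing the $O(n(\log n)^4)$ triangles of $O(1)$ area, one qubit each --- and requires its internal energy scale to dominate the previous layer, which is precisely what drives $\max_{ij}\abs{\alpha_{ij}}=O(\Delta_L^{\poly(n\log n)})$; keeping each layer's error a $1/\poly$ fraction of the running cutoff and composing gives the stated $\epsilon,\eta=1/\poly(\Delta_L)$.

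The remaining, and hardest, part is geometric. The natural ``boundary'' of a hyperbolic tessellation is a crinkled, near-fractal surface, so I would need to (a) exhibit an embedded surface $\mathcal{M}\subset\hyper$ that encloses all the qudits, is homeomorphic to the Euclidean $2$-sphere, and has diameter $O(\max(1,\frac{\ln k}{r})L+\log\log n)$ --- the $\log\log n$ correction being the logarithm of the polylogarithmic ancilla overhead once it is laid out in the exponentially-growing geometry --- and (b) verify that the triangulation carrying $\Hboundary$ can be drawn on $\mathcal{M}$ with $O(1)$-area triangles and only nearest-neighbour couplings, i.e.\ that the gadget layout respects the metric of $\mathcal{M}$. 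Coordinating the gadget construction with this layout, while tracking how the errors and energy scales propagate through the composed simulations so that parts~1--5 all hold with the stated parameters, is the crux of the argument; the remainder is an assembly of~\cite{Pastawski:2015} and~\cite{cubitt:2017} together with standard hyperbolic volume-counting estimates.
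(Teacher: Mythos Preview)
Your two-stage outline matches the paper's architecture (HQECC followed by perturbative gadget reduction, then Heisenberg universality), and you correctly flag the geometric embedding as the crux. However, there are several genuine technical gaps between your sketch and what the argument actually needs.

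First, your stage~(i) is miscast. The encoding $H^{(0)} = \Delta_S H_S + H'$ with $H' \Pi_{\mathcal{C}} = V\Hbulk V^\dagger$ is a \emph{perfect} simulation of $\Hbulk$ below $\Delta_S$, not an approximate one obtained from a projection lemma; there is no $O(\norm{\Hbulk}^2/\Delta)$ error at this step. More importantly, your claim that each $h_Z$ ``pushes forward to a boundary operator $h_Z'$ supported on \dots\ a bounded boundary region'' is false in the relevant sense: a bulk term at depth $x$ from the boundary maps to a boundary operator of weight $O(\tau^{R-x})$, where $\tau>1$ is the growth rate of the tessellation, so terms near the centre have weight $\Theta(n)$. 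The precise weight profile --- $O(\tau^x)$ terms of weight $O(\tau^{R-x})$ for each $x$ --- is exactly what drives the $O(\log n)$ rounds of subdivision gadgets and ultimately the $(\log n)^4$ ancilla count (the fourth power coming specifically from the crossing-removal step, not from a uniform polylog-per-layer blowup as you suggest).

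Second, and more seriously, you cannot invoke ``standard'' perturbation gadgets. The HQECC in $\hyper$ requires perfect tensors with at least $8$ legs, which forces qu\emph{d}its of prime dimension $p\ge 7$ (in the paper's concrete construction, $p=11$). Generalised Pauli operators on qudits are not Hermitian, so the Oliveira--Terhal subdivision, $3$-to-$2$, fork and crossing gadgets do not apply as written; one must build new qudit gadgets that act on Hermitian combinations $P+P^\dagger$. For these to be usable, the tensor network must preserve the \emph{Pauli rank} of operators it pushes to the boundary --- this is a non-trivial property that holds for stabilizer perfect tensors (built from classical Reed--Solomon codes) but not for arbitrary perfect tensors, and establishing it is a key technical ingredient you have not mentioned.

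Third, for the geometry you identify as hardest, the paper's method is to work with a Coxeter polytope tessellation and use the combinatorics of the word-length function on the Coxeter group (descent sets, the deletion condition) to prove directly that the boundary surface at radius $R$ is a closed, connected, orientable $2$-manifold with every loop contractible through the bulk --- hence a $2$-sphere by the classification of surfaces. This is quite different from trying to smooth a ``crinkled, near-fractal'' surface after the fact; the discrete-group structure does the work.
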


This result allows us to extend toy models of holographic duality such as~\cite{Pastawski:2015,pluperfect,Hayden:2016,arpan_2016} to include a mapping between local Hamiltonians.
In doing so we show that the expected relationship between bulk and boundary energy scales can be realised by local boundary models. In particular, in our construction toy models of static black holes (as originally proposed in~\cite{Pastawski:2015}) correspond to high-energy states of the local boundary model, as would be expected in AdS/CFT.

Moreover, in our toy model we can say something about how dynamics in the bulk correspond to dynamics on the boundary. Even without writing down a specific bulk Hamiltonian, we are able to demonstrate that the formation of a (toy model) static black hole in the bulk corresponds to the boundary unitarily evolving to a state outside of the code space of the HQECC, as expected in AdS/CFT (see \cref{black_hole_formation} for details).

Finally, the Hamiltonian simulation construction allows us to derive the mapping in the other direction.
Given any local boundary Hamiltonian, one can derive a corresponding bulk Hamiltonian using rigorous formulations of perturbation theory.
Constructing boundary-to-bulk mappings is an important goal of full AdS/CFT, where the boundary CFT is better understood, and one of the aims is to understand properties of quantum gravity in the bulk which are less well understood.
Our results are a small step in this direction, though as emphasised above they are still a very long way from a full AdS/CFT model.

\subsection{Proof overview} \label{proof_overview}

To construct the bulk/boundary map between Hilbert spaces, observables and local Hamiltonians described by \cref{main_intro}, we combine new tensor network constructions of HQECC inspired by~\cite{Pastawski:2015}, with techniques originally developed in Hamiltonian complexity theory.
The key ingredient that allows us to construct bulk/boundary mappings that preserve locality of the Hamiltonians are generalisations of so-called ``perturbation gadgets'', which were originally developed to prove computational complexity results, together with the recently developed theoretical framework of analogue Hamiltonian simulation~\cite{cubitt:2017} which allows us to show that these give rise to a full duality between the bulk and boundary physics.

``Perturbation gadgets'' give a mathematically rigorous version of a concept that is well known in theoretical physics by other names. The quantum many-body models that are studied in condensed matter physics are understood to be effective theories that approximate the correct physics at low energies. For example, the Born-Oppenheimer approximation assumes the motion of atomic nuclei can be treated independently of their electron clouds, allowing effective models of just the electronic structure of molecules and materials to be derived by assuming the nuclei locations are fixed. These effective models are accurate at low energies. Similarly, since Wilson's seminal work~\cite{Wilson1,Wilson2}, quantum field theories in high-energy physics are believed to be effective, low-energy theories that emerge from some deeper, underlying model. In atomic and optical physics, one frequently performs perturbation expansions to some finite order to derive effective interaction Hamiltonians.

Perturbation gadgets apply the same general idea to build up Hamiltonians out of one type of interaction, that give rise to low-energy effective Hamiltonians composed of a different type of interaction. The main difference to the standard perturbation theory taught at undergraduate physics is to keep track of rigorous bounds on the approximation errors, rather than to simply truncate a perturbation series at finite order. A typical example of such a gadget is the Hamiltonian depicted in \cref{fig:gadget}, which consists only of two-body interactions, but whose low-energy effective Hamiltonian approximates to high accuracy a many-body interaction.

\begin{figure}
\centering
\begin{tikzpicture}
 \filldraw[black] (0,-0.5) circle (2pt) node[below]{A};
 \draw (1,0) circle (2pt) node[below]{$w$};
 \filldraw[black] (2,-0.50) circle (2pt) node[below]{$B$};
 \draw (0,-0.5) -- (0.9,0); \draw (1.1,0) -- (2,-0.5) ;
 \filldraw[black] (1,1) circle (2pt) node[above]{$C$};
 \draw (0,-0.5)--(1,1);
 \draw (2,-0.5)--(1,1);
 \draw (1,1) -- (1,0.1);
 \draw (0,-0.5) -- (2,-0.5);
 \node at (1,-1.5) {(a)};

 \filldraw[black] (6,-0.5) circle (2pt) node[below]{A};
 \filldraw[black] (8,-0.50) circle (2pt) node[below]{$B$};
 \filldraw[black] (7,1) circle (2pt) node[above]{$C$};
\draw (7,0) ellipse (1.5cm and 1.5cm);

\end{tikzpicture}
\caption{(a) The interaction graph of a Hamiltonian, $H$, consisting only of two body interactions, whose low-energy effective Hamiltonian approximates to high accuracy the 3-body interaction depicted in (b).
The white vertex in (a) represents an `ancilla qubit' - these don't appear in the low-energy effective Hamiltonian as they are projected into a one-dimensional subspace in this regime.
Let $H = \Delta^{\frac{1}{3}}\frac{(-A + B)^2}{2} + \frac{(A^2 + B^2)\otimes C}{2} - \Delta^{\frac{2}{3}}\left( C \otimes \ket{1}\bra{1}_w +  \frac{(-A+B)\otimes X_w}{\sqrt{2}}\right)$ and let $H_{\text{eff}} = A \otimes B \otimes C$.
If we restrict to the subspace with energy below $\Delta$, $H_{\text{eff}} \simeq H$ up to errors of order $\frac{1}{\Delta}$. (More precisely: $|| H|_\Delta - H_{\text{eff}} || \leq \frac{1}{\Delta}$.)
The perturbation gadgets used to construct this effective Hamiltonian were developed in \cite{oliveira:2005}.} \label{fig:gadget}
\end{figure}
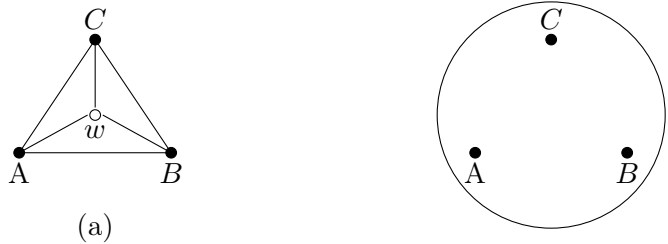

In our result, we use similar perturbation gadget methods to show that the highly non-local Hamiltonian that results from mapping a bulk Hamiltonian to the boundary using tensor network constructions, can be approximated to arbitrarily high accuracy as the emergent, low-energy effective Hamiltonian arising from a two-body, nearest-neighbour, local Hamiltonian on the boundary. The Hamiltonian simulation theory developed in~\cite{cubitt:2017} allows us to prove that this approximates the entire physics of the bulk.

\cite{Pastawski:2015} constructs a HQECC by building a tensor network composed out of perfect tensors, arranged in a tessellation of hyperbolic 2-space by pentagons. This gives a map from 2D bulk to 1D boundary.
However, the perturbative Hamiltonian simulation constructions of~\cite{cubitt:2017} only work in 2D or higher, which means we require at least a 3D bulk and 2D boundary.\footnote{The simulation techniques from \cite{cubitt:2017} cannot be used in 1D as they require a 2D interaction graph. See \cref{appendix_6} for details of the interaction graphs involved.}
We must therefore generalise the holographic tensor network codes to a space with a $\geq$ 2D boundary -- so a $\geq$ 3D bulk -- as a first step.
As it is the smallest dimension in which our techniques work, we focus on constructing explicit 3D/2D dualities. But the techniques we have developed readily extend to any boundary dimension~$\geq2$.
When working in $\mathbb{H}^2$ it is possible to use the Poincare disc model to visualise the tessellations and determine their properties.
However, in $\mathbb{H}^3$ this is more difficult, and generalising the HQECC to 3D and higher requires a more systematic approach.
We use hyperbolic Coxeter groups\footnote{Coxeter groups were previously used in~\cite{arpan_2016} to describe tensor networks in $\mathbb{H}^2$ for toy models of holographic dualities.} to analyse honeycombings (higher-dimensional tessellations) of $\mathbb{H}^3$.
(These techniques also generalise beyond 3D.)
A Coxeter system is a pair $(W,S)$, where $W$ is a group, generated by a set $S \subset W$ of involutions, subject only to relations of the form $(s_is_j)^{m_{ij}} = 1$ where $m_{ii} = 1$, $m_{ij} \in \left(\mathbb{N} \setminus 1 \right) \cup \{\infty\}$ for $i \neq j$.
Coxeter groups admit a geometric representation as groups generated by reflections.
Associated to every hyperbolic Coxeter system is a Coxeter polytope $P \subseteq \mathbb{H}^d$, where $P$ tessellates $\mathbb{H}^d$.
All of the properties of the tessellation can be determined directly from the Coxeter system $(W,S)$ using combinatorics of Coxeter groups.
For example, we use the Coxeter relations to prove that the boundary of the HQECC is homeomorphic to the Euclidean 2-sphere.

Generalising the method in~\cite{Pastawski:2015}, we construct tensor networks by taking a Coxeter system $(W,S)$ with Coxeter polytope $P \subseteq \mathbb{H}^3$, and placing perfect tensors in each polyhedral cell of (a finite portion of) the tessellation of $\mathbb{H}^3$ by $P$.
Each perfect tensor in the interior of the tessellation has one free index, corresponding to a bulk qudit; the other indices are contracted with neighbouring tensors.
Tensors at the outer edge can be shown, again using the Coxeter relations, to have between $\ceil*{\frac{t}{2}}$ and $t-2$ additional free indices (where the perfect tensor has a total of $t$ indices), which correspond to qudits on the boundary.
We can show that if the tessellation of $\mathbb{H}^3$ associated to a Coxeter system $(W,S)$ has the properties required for a HQECC, then the associated Coxeter polytope $P$ has at least 7 faces, which means we require perfect tensors with at least 8 indices.
There are no qu\textit{b}it perfect tensors with $\geq6$ indices~\cite{Gour:2010,Rains:1999,Huber:2017}, so we must use qu\textit{d}it perfect tensors.

In order to later generate a local boundary model using perturbation gadgets, we need the tensor network to preserve the Pauli rank of operators.
(As we are working with qudits rather than qubits, we mean \emph{generalised} Pauli operators on qu\emph{d}its, rather than qubit Paulis, and we choose prime-dimensional qudits.)
We use perfect tensors which describe qudit stabilizer absolutely maximally entangled states (AMES), constructed via the method in~\cite{Helwig:2013a} from classical Reed-Solomon codes.
Using properties of stabilizer groups, we show that tensor networks composed of these qudit stabilizer perfect tensors preserve the generalised Pauli rank of operators.

This Coxeter polytope qudit perfect tensor network gives a HQECC in $\hyper$.
The non-local boundary Hamiltonian is given by $\Hboundary' = H'+ \Delta_S H_S$, where $H_S$ is zero on the code-subspace of the HQECC and at least one on its orthogonal complement, $V$ is the encoding isometry of the HQECC and $H'$ satisfies $V\Hbulk V^\dagger = H'\Pi_{\mathcal{C}} = H'VV^\dagger$.\footnote{$\Hboundary'$ is not unique, as expected in AdS/CFT} Comparing with the classification of Hamiltonian simulations in~\cite{cubitt:2017}, this mapping is an example of a simulation. (In fact, a perfect simulation in the terminology of~\cite{cubitt:2017}.)

In order to construct a local boundary Hamiltonian we first determine the distribution of Pauli weights of the terms in $\Hboundary'$ from the properties of the Coxeter system.
We then use perturbation gadgets to reduce the boundary Hamiltonian to a 2-local planar Hamiltonian. This requires introducing a number of ancilla qudits in the boundary system.
The techniques we use follow the methods from~\cite{oliveira:2005}, however the perturbation gadgets derived in~\cite{oliveira:2005} can't be used in our construction as the generalised Pauli operators aren't Hermitian.
We therefore generalise those to qudit perturbation gadgets which act on operators of the form $P_A + P_A^\dagger$, where $P_A \in \mathcal{G}_{n,p}$.
These gadgets meet the requirements in~\cite{cubitt:2017,Bravyi:2017} to be perturbative simulations.
Finally we use simulation techniques from~\cite{cubitt:2017} to simulate the planar 2-local qudit Hamiltonian with a qubit Hamiltonian on a triangular lattice with full local SU(2) symmetry.
(The full technical details and proof are given in \cref{main_theorem}.)

\section{Discussion} \label{discuss}

\subsection{Main result}

In our bulk/boundary mapping, local Hamiltonians in 3D hyperbolic space, $\hyper$, are mapped to local Hamiltonians on its boundary.
At first glance this may appear to be at odds with the bulk reconstruction expected in AdS/CFT, where observables deep in the bulk are expected to map to non-local observables on the boundary CFT.
However, while the local simulation in our construction ensures that the boundary \emph{Hamiltonian} is local, it does not affect the locality of \emph{observables}.
As in the HaPPY code, observables deep in the bulk in our construction map to observables which require a large fraction of the boundary to be reconstructed, while observables near the boundary of the HQECC can be reconstructed on smaller fractions of the boundary (see point \labelcref{main:iii} from \cref{main_intro} for details).
This includes local Hamiltonian terms in the bulk when viewed as energy \emph{observables}.
For a local Hamiltonian term deep in the bulk, the corresponding energy observable on the boundary is not a single local term in the boundary Hamiltonian, but is made up a sum over many local terms acting across a large area of the boundary.

Point~\labelcref{main:i} from \cref{main_intro} demonstrates that the boundary surface in our construction really is a boundary geometrically.
The radius of the boundary surface is at a distance $\log\log n$ from the $n$ bulk qudits.
In \cref{other-geometries} we compare this with the spherical and Euclidean case.

Point \labelcref{main:iv} from \cref{main_intro}, which follows immediately from work in~\cite{cubitt:2017}, says that we can always choose that the boundary Hamiltonian in our holographic duality has full local SU(2) symmetry.
This hints at the possiblity of systematically incorporate local symmetries into the construction, such as gauge symmetries.
Doing so would involve tailoring our general construction to specific bulk models of interest, which is an intriguing possibility that we leave to future work.

Finally it is worth commenting on the energy scales in the construction.
There are two large energy scales.
The first, $\Delta_S$, is the energy penalty applied to boundary states which violate stabilizers of the HQECC.
Above this energy scale, the geometry of the corresponding tensor network in the bulk is modified in a way that corresponds to toy models of black holes proposed previously \cite{Pastawski:2015}.
We discuss this more fully in \cref{black_hole_formation}.
The second, $\Delta_L$, is the energy scale at which the local simulation of point \labelcref{main:v} from \cref{main_intro} breaks down.
At energies above $\Delta_L$ there is no longer any meaningful duality between bulk and boundary.

\subsection{Boundary to bulk mapping}

So far throughout this paper we have considered the tensor network as a map from bulk to boundary.
But one of the holy grails of AdS/CFT is to construct a mapping in the opposite direction: from boundary to bulk, as that opens up the possibility of studying bulk quantum gravity via the better-understood boundary CFT.
Our construction allows us to construct a toy model of such a boundary-to-bulk mapping.

Consider the boundary Hamiltonian $\Hboundary$ dual to some $k$-local bulk Hamiltonian $\Hbulk$ on $n_{\text{bulk}}$ qudits, from \cref{main_intro}.
Whatever the form of $\Hbulk$, $\Hboundary$ can always be decomposed in the form:
\begin{equation} \label{boundary-general}
  \Hboundary =  \Delta_L H_L + \Delta_S \tilde{H}_S + \tilde{H}_{\text{bulk}}
\end{equation}
where $H_L = \sum_{i\in\mathcal{A}} H_{i0}$ contains 1-body terms which act exclusively on the ancilla qubits $\mathcal{A}$, with interaction strengths $\geq\Delta_L$ (these arise from the perturbation gadget techniques); $\tilde{H}_S$ contains all the terms arising from the perturbative simulation of the stabilizer Hamiltonian $H_S$ (apart from the 1-body terms already included in $H_L$); and $\tilde{H}_{\text{bulk}}$ contains all the remaining terms arising from the perturbative simulation of $\Hbulk$ (again, apart from the 1-body terms included in $H_L$).

Let:
\begin{equation} \label{boundary-generic}
\Hgeneric = \Delta_L H_L + \Delta_S \tilde{H}_S,
\end{equation}
which is the boundary Hamiltonian dual to the zero Hamiltonian in the bulk.
We can recover a geometric interpretation of the bulk from \cref{boundary-generic}.
Consider decomposing $\Hgeneric|_{\frac{\Delta_L}{2}}$ into subspaces $\mathcal{H}_n$ of energy $(n-1/2) \Delta_S \leq E \leq (n+1/2)\Delta_S$ for $n \in \mathbb{N}$ such that $E \leq \frac{\Delta_L}{2}$.
Note that \cref{propertiesHS} and the fact that $\tilde{H}_S$ is a simulation (\cref{approxsim}) of $H_S$, $\Hgeneric|_{\frac{\Delta_L}{2}}$ is block-diagonal with respect to the Hilbert space decomposition $\mathcal{H}_{\text{boundary}} = \bigoplus_n \mathcal{H}_{n}$.

A boundary state $\ket{\psi}_{\text{boundary}}$ with support only on the subspace $\mathcal{H}_0$ corresponds to the bulk geometry of an unperturbed tensor network.
This subspace $\mathcal{H}_0$ is precisely the image of the isometry defined by the full tensor network.
Thus the bulk state $\ket{\psi}_{\text{bulk}}$ dual to $\ket{\psi}_{\text{boundary}}$ can be recovered by applying the inverse of the encoding isometry $V_0$ for the unbroken tensor network: $\ket{\psi}_{\text{bulk}} = V_0^\dagger\ket{\psi}_{\text{boundary}}$.

\begin{figure}
\centering
\includegraphics[scale=0.29]{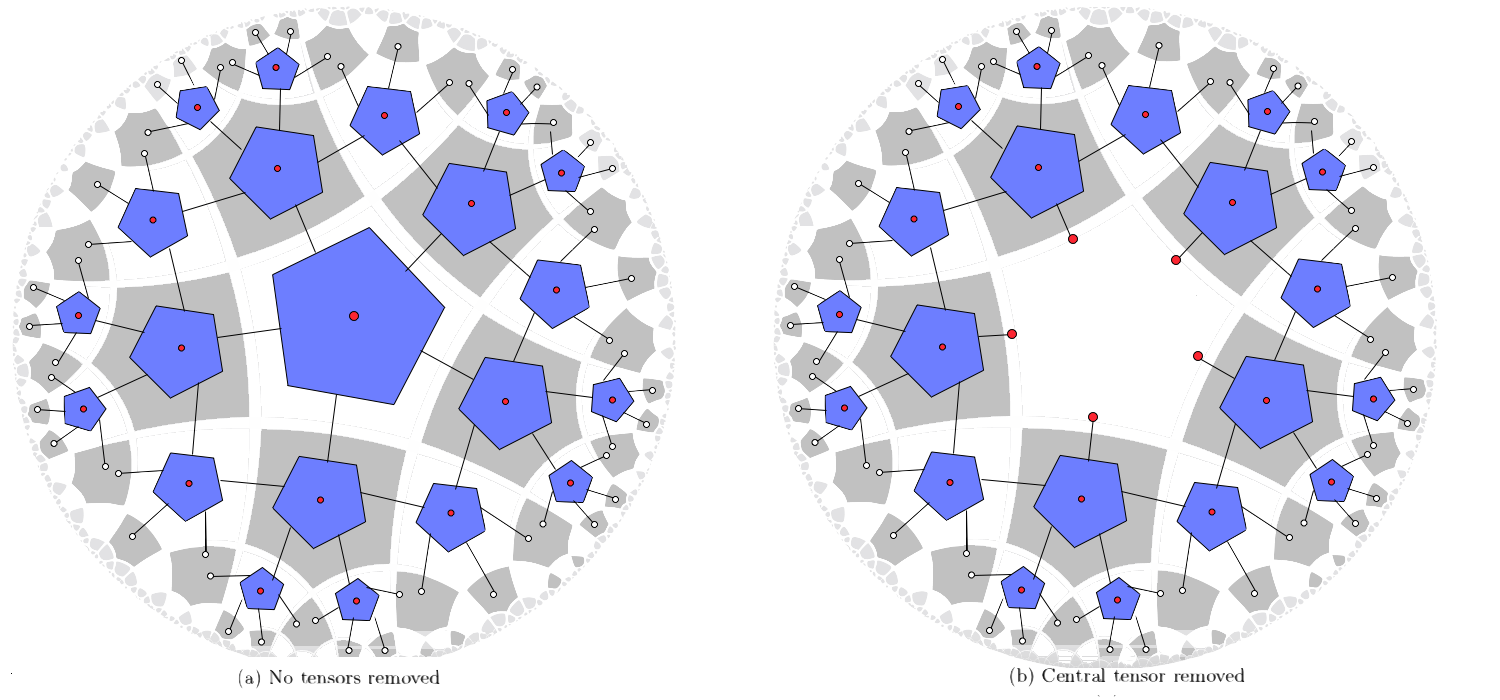}
\caption{(a) The pentagon code from~\cite{Pastawski:2015}. Red dots indicate bulk (logical) indices, white dots indicate boundary (physical) indices. (b) The pentagon code from~\cite{Pastawski:2015} with the central tensor removed. The indices from neighbouring tensors which were contracted with the central tensor are now logical indices.}\label{tensor-remove-fig}
\end{figure}

A boundary state $\ket{\psi}_{\text{boundary}} \in \mathcal{H}_n$ with $n\geq 1$ corresponds to a bulk geometry where one or more of the tensors in the network has been removed (see \cref{tensor-remove-fig} for details).
To see this, note that a state on the boundary is in $\mathcal{H}_{n\geq 1}$ iff it has violated one of the stabilizer terms of the HQECC (see \cref{propertiesHS}).
In the bulk it isn't meaningful to talk about states violating stabilizer terms, as the stabilizers don't act on the same Hilbert space as the bulk indices.
However, if a tensor is removed from the network, the stabilizer terms associated with that tensor \emph{do} act on the bulk indices of this modified tensor network.
Therefore, for any boundary state $\ket{\psi}_{\text{boundary}}\in\mathcal{H}_n$, it is possible to determine whether it is associated with a bulk geometry which contains holes in the tensor network ($n>0$), and how many ($n$), by considering just $\Hgeneric$.
Moreover, if the bulk geometry does contain holes, the location of holes can be inferred from which stabilizer terms in $\Hgeneric$ are violated by $\ket{\psi}_{\text{boundary}}$.
Since (see \cref{propertiesHS}) states violating different stabilizer terms are orthogonal, the subspace $\mathcal{H}_n$ corresponding to $n$ holes in the bulk further decomposes into $\mathcal{H}_{n} = \bigoplus_c \mathcal{H}_{n,c}$, where the sum is over all configurations $c$ of $n$ holes in the tensor network.
The dual bulk state can be recovered by applying the inverse of the encoding isometry $V_{n,c}$ of the tensor network with holes in the appropriate locations: $\ket{\psi}_{\text{bulk}} = V_{n,c}^\dagger\ket{\psi}_{\text{boundary}}$.

By linearity, states $\ket{\psi}_{\text{boundary}}$ with support across multiple subspaces $\bigoplus_n\bigoplus_i\mathcal{H}_{n,c}$ correspond to coherent superpositions of states with different bulk geometries, and the dual state in the bulk can be recovered via: $\ket{\psi}_{\text{bulk}} = \bigoplus_n\bigoplus_c V_{n,c}^\dagger\ket{\psi}_{\text{boundary}}$.
All of this also extends to observables and operators on $\mathcal{H}_{\text{boundary}}$ in the obvious way.

A very similar analysis applies to general boundary Hamiltonians of the form \cref{boundary-general}.
$\Hgeneric$ determines the subspace decomposition $\mathcal{H}_{\text{physical}} = \bigoplus_n\bigoplus_c \mathcal{H}_{n,c}$ as before, which is independent of $\tilde{H}_{\text{bulk}}$, giving exactly the same bulk geometric interpretation of states in (and operators on) $\mathcal{H}|_{\frac{\Delta_L}{2}} = \mathcal{H}_{\text{physical}} = \bigoplus_n\bigoplus_c \mathcal{H}_{n,c}$.
The only new aspect is how to recover the bulk Hamiltonian $\Hbulk$ dual to $\tilde{H}_{\text{bulk}}$.

If we consider $\tilde{H}_{\text{bulk}}|_{\frac{\Delta_L}{2}}$, all ancilla qudits are projected onto a one-dimensional subspace by $\Delta_L H_L$, so do not appear.
Thus the resulting Hamiltonian only acts on the `physical' qudits in the boundary theory, $\mathcal{H}_{\text{physical}}$.
(I.e.\ the same Hilbert space as the non-local boundary Hamiltonian $\Hboundary' = H' + \Delta_S H_S$ obtained by pushing bulk interactions and stabilizers through the tensor network, see \cref{proof_overview} for discussion, or Step~1 of \cref{main} for full details.)

We can make the relationship between this Hamiltonian and the bulk geometry explicit, by considering how it looks with respect to the Hilbert space decomposition $\mathcal{H}_{\text{physical}} = \bigoplus_n\bigoplus_c \mathcal{H}_{n,c}$.
For example, its action on the Hilbert space of the unbroken tensor network is given by:
\begin{equation}
H_{\text{bulk},0} = V_0 \tilde{H}_{\text{bulk}}|_{\mathcal{H}_0} V_0^\dagger
\end{equation}

For general $\tilde{H}_{\text{bulk}}$, the resulting $H_{\text{bulk},0}$ will not be have any particular local structure.
However, there do exist $\tilde{H}_{\text{bulk}}$ that give rise to every $k$-local $H_{\text{bulk},0}$.
Indeed, we know exactly what form the $\tilde{H}_{\text{bulk}}$ corresponding to $k$-local $H_{\text{bulk},0}$ take, because these are precisely the Hamiltonians we constructed in \cref{main_intro} going in the other direction!
Moreover, if we start with a $\tilde{H}_{\text{bulk}}$ which is dual to a $k$-local $H_{\text{bulk},0}$, and add weak terms coupling e.g.\ $\mathcal{H}_0$ and $\mathcal{H}_1$, then the resulting $\Hbulk$ will have a $H_{\text{bulk},0}$ as a low-energy effective theory for energies $<\Delta_S$.
But it will now be possible for a state $\ket{\psi} \in \mathcal{H}_0$ to evolve to a state $\ket{\varphi} \in \mathcal{H}_1$ under the action of $\Hbulk$.

\subsection{Black hole formation in HQECC} \label{black_hole_formation}

In~\cite{Pastawski:2015} it was suggested that black holes can be incorporated into HQECC models of AdS/CFT by removing tensors from the tensor network. Recall, if a tensor is removed from the bulk then its one logical index is replaced by $t-1$ logical indices, corresponding to the indices that were previously contracted with the missing tensor (see \cref{black_hole_fig}).

\begin{figure}
\centering
\includegraphics[scale=0.29]{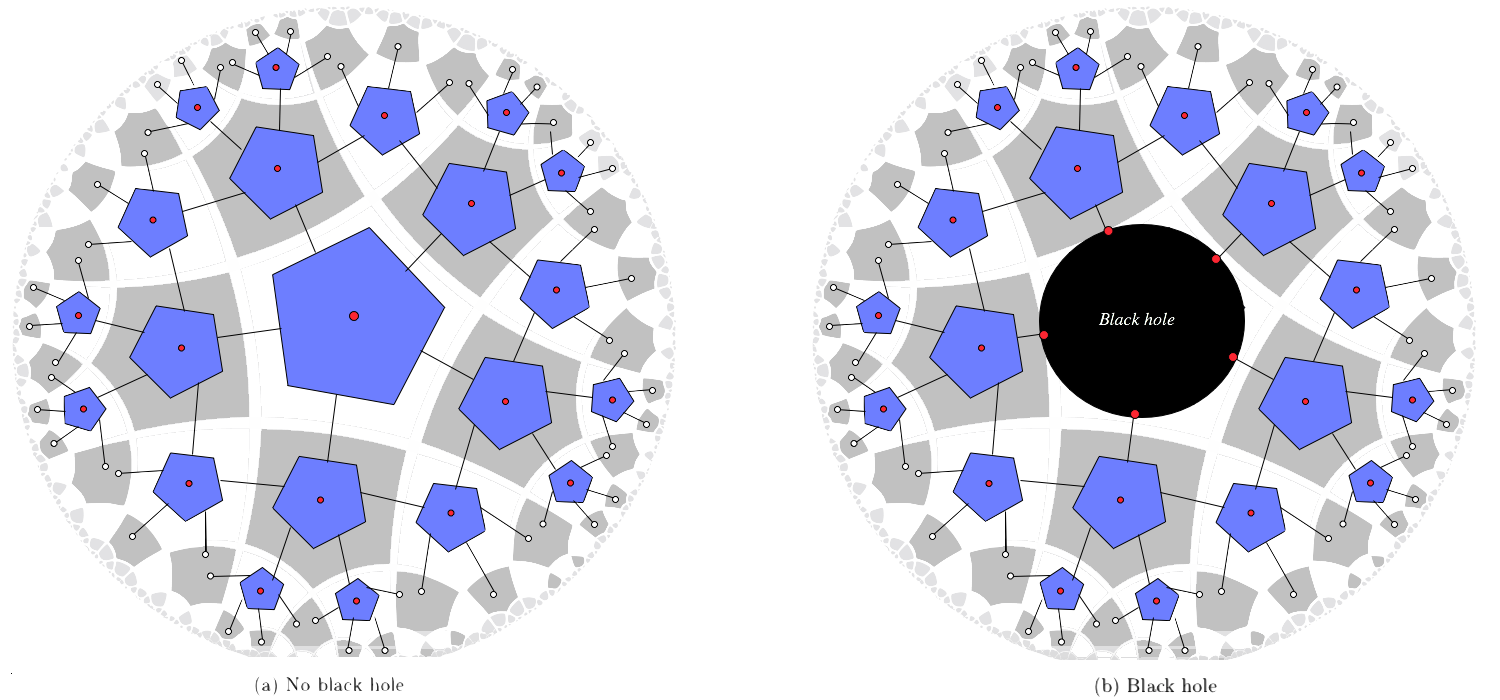}
\caption{(a) The pentagon code from~\cite{Pastawski:2015}. Red dots indicate bulk (logical) indices, white dots indicate boundary (physical) indices. (b) The pentagon code from~\cite{Pastawski:2015} with a central black hole. The central tensor has been removed, and the indices from neighbouring tensors which were contracted with the central tensor are now logical indices.}\label{black_hole_fig}
\end{figure}

This increases the code subspace of the boundary Hilbert space, and~\cite{Pastawski:2015} suggested that this can be interpreted as describing bulk configurations which contain a black hole. It is noted in~\cite{Pastawski:2015} that this model ensures that every boundary state is dual to a bulk state, and that black hole entropy scales with area, as expected from the Beckenstein-Hawking bound~\cite{Beckenstein,hawking:1975}.

This method of incorporating black holes into HQECC toy models of holography may at first appear ad hoc and artificial. However, by extending the toy models of holographic duality to encompass Hamiltonians, it emerges more naturally, and can also be extended to toy models of black hole \emph{formation}. Indeed, by considering the boundary dynamics dual to black hole formation in the bulk, we will see that removing a tensor from the network is the only way to preserve energy under unitary dynamics.


Consider a HQECC, with boundary Hamiltonian:
\begin{equation}
  \Hboundary =  \Hgeneric + \Htildebulk
\end{equation}
where $\Hgeneric$ is as defined in the previous section.
We will choose $\tilde{H}_{\text{bulk}}$ to ensure that $\Hbulkzero$  is some local Hamiltonian which models semi-classical gravity, given by:
\begin{equation}
\Hbulkzero = \sum_Z h_Z
\end{equation}
where the parameter $\Delta_{S}$ in $\Hgeneric$ satisfies: $\Delta_{S} \gg ||h_Z||$, but $\Delta_{S} < \sum_Z ||h_Z||$. To see that it is always possible to choose $\Htildebulk$ which gives $\Hbulkzero$ of this form, note that we could push $\Hbulkzero$ through the tensor network and construct such a $\Htildebulk$. Here we are allowing ourselves the freedom to add some additional perturbation to $\Htildebulk$.


Consider the boundary state:
\begin{equation}
\ket{\psi_1} = \otimes_x \overline{A}_x \ket{\psi_0}
\end{equation}
where $\psi_0$ is the vacuum state (the ground state), $\overline{A}_x = W^\dagger A_x W$ is the boundary operator dual to some local bulk operator $A_x$, and the tensor product is over $O(n)$ boundary operators, which correspond to bulk local operators acting on a shell of $O(n)$ qudits near the boundary.\footnote{The hyperbolic geometry ensures that there are $O(n)$ qudits in a shell near the boundary.} This boundary state corresponds to a shell of matter in the bulk near the boundary.

Each bulk local excitation will pick up energy, $\delta_E$, from only a few of the local $h_Z$ terms, but the overall state will have large energy from summing over all these contributions. The energy on the boundary is equal to the energy in the bulk theory, so we must have:
\begin{equation}
\bra{\psi_1}\Hboundary\ket{\psi_1} = \sum_{x=0}^{O(n)} \delta_E = O(n \delta_E).
\end{equation}
For suitably chosen $\Delta_S$, the total energy of this configuration $E(\ket{\psi_1}) =  O(n \delta_E) > \Delta_S$. However, the local operators $\overline{A}_x$ are encoded versions of bulk operators, so acting on $\ket{\psi_0}$ with $\overline{A}_x$ will not take the state outside of the code subspace. Therefore, $\ket{\psi_1}$ is in the code subspace of the HQECC.

The boundary will evolve under $\Hboundary$. The bulk time dynamics can be approximated by $\Hbulkzero$. This will lead to an error $\epsilon t$ that increases only linearly in $t$~\cite[Proposition~29]{cubitt:2017}. So this approximation will be valid for sufficiently long times.


If we assume that under the action of $\Hbulkzero$ this shell of matter collapses inwards towards the centre of the HQECC (as would be expected from a Hamiltonian that models gravity), then the bulk will unitarily evolve to a configuration where most regions are in a low energy state (with respect to the Hamiltonians that act there), and most of the energy comes from a few ($O(1)$) terms near the centre of the HQECC. Denote the boundary state dual to this bulk configuration by $\ket{\psi_2}$.

The evolution is unitary, so we must have that:
\begin{equation}
\bra{\psi_2}\Hboundary\ket{\psi_2}= \bra{\psi_1}\Hboundary\ket{\psi_1}
\end{equation}
where $\bra{\psi_1}\Hboundary\ket{\psi_1}= O(n \delta_E) > \Delta_S$. The bulk must have the same energy as the boundary. But the maximum energy the bulk could have picked up from $O(1)$ $h_Z$ terms is given by:
\begin{equation}
\sum_{x=0}^{O(1)} ||h_Z|| = O\left(||h_Z||\right).
\end{equation}
By assumption $\Delta_{S} \gg ||h_Z||$, so  it is not possible for the bulk to pick up energy greater than $\Delta_S$ from $O(1)$ $h_Z$ terms.

If we consider the boundary system, it is possible for the boundary to pick up energy greater than $\Delta_S$ either from (the encoded version) of many $h_Z$ terms, or by violating one of the stabilizers of the HQECC. Since the bulk state dual to $\ket{\psi_2}$ cannot pick up energy from many $h_Z$ terms, the only way for $\ket{\psi_2}$ to pick up energy greater than $\Delta_S$ is to violate a stabilizer term.

Violating a stabilizer corresponds to picking up energy from the $\tilde{H}_S$ term in the boundary Hamiltonian. On the boundary it is clear that if we begin in a state inside the code space with energy greater than $\Delta_S$ it is possible to unitarily evolve to a state which is outside the code space and violates a stabilizer (provided that $\Htildebulk$ does not commute with $\tilde{H}_S$ -- see discussion below).

In the undisturbed bulk geometry it is not meaningful to talk about violating a stabilizer, as the stabilizers do not act on the same Hilbert space as the bulk logical indices. If, however, one of the bulk tensors has been removed, as in the models of black holes from~\cite{Pastawski:2015}, then the stabilizers corresponding to the removed tensor \emph{do} act on the Hilbert space of the $t-1$ new logical indices, and it is meaningful to talk about these stabilizers being violated.


Therefore, the only way for the system to conserve energy under these dynamics is for the tensor network corresponding to the boundary state to be `broken', and for at least one of the stabilizers corresponding to the missing tensor to be violated. This process therefore predicts the dynamical formation of a toy model black hole as proposed in~\cite{Pastawski:2015}.

In order for this process to occur we must have:
\begin{equation} \label{comm_eq}
[\tilde{H}_S,\Htildebulk] \neq 0
\end{equation}
(If we had some $\Htildebulk'$ which commuted with $\tilde{H}_S$ then there would be no coupling between the code-space, $\mathcal{C} \in \boundaryHilbert$, and the rest of the boundary, $\overline{\mathcal{C}} \in \boundaryHilbert$.
So it would not be possible for $\ket{\psi_1} \in \mathcal{C}$ to unitarily evolve to $\ket{\psi_2} \in \overline{\mathcal{C}}$.)

In the bulk, \cref{comm_eq} implies that $\Hbulkzero$ is a low energy effective theory. The full bulk Hamiltonian includes some coupling between $\mathcal{H}_0$ and $\mathcal{H}_{\geq 1}$ (where $\mathcal{H}_n$ is as defined in the previous section). It therefore doesn't act only on the logical indices of the unbroken tensor network, and so the tensor network is always an approximation to the actual bulk theory. Another recent paper which examined bulk geometries containing black holes also showed that the bulk reconstruction in AdS/CFT is necessarily only approximate~\cite{hayden:2018}.

\cref{comm_eq} also implies that the toy model black hole degrees of freedom will (in general) be entangled with the rest of the tensor network. We can write the boundary Hamiltonian as:
\begin{equation}
\Hboundary|_{\frac{\Delta_L}{2}} =  \Hcode \otimes \identity + \identity \otimes \Hnotcode + \text{coupling terms}
\end{equation}
where \cref{comm_eq} ensures that the coupling terms are non-zero. Since the coupling terms are non-zero, any boundary state which is separable across the $\mathcal{C}$ / $\overline{\mathcal{C}}$ partition (equivalently, any bulk state which is separable across the black hole boundary) is not a stationary state of $\Hboundary$, so separable states will always evolve to entangled states. Therefore, the black hole degrees of freedom (if we trace out the rest of the tensor network) will always evolve to a mixed state. Similarly, on the boundary tracing out part of the system will lead to a mixed state. In this sense the black hole (and the equivalent state on the boundary) has thermalized.

It also follows from this discussion that the toy models of black holes correspond to high energy states on the boundary, which pick up their energy from a small number of high energy terms in the Hamiltonian.

Therefore, with an appropriately chosen $\Htildebulk$, we can model black hole formation in our HQECC. No information is lost in this process (as the dynamics are unitary we can always reverse them). But the isometry which takes bulk states to boundary states will have changed, so the `dictionary' for reconstructing the bulk state from the boundary state will be different. In particular, the fraction of the boundary needed to reconstruct an operator acting on the central bulk region will increase in the presence of a toy-model black hole.

To see this consider a central black hole where one tensor is removed from the HQECC. An operator which acts on the degrees of freedom representing the black hole acts on all $t-1$ of the logical indices, so will need to be pushed through all of the $t-1$ tensors at radius 1 in the HQECC. In the absence of any black hole an operator acting on the central bulk index could be reconstructed via pushing through just $\ceil*{\frac{t}{2}}$ of the tensors at radius 1 in the HQECC. Thus the fraction of the boundary needed for bulk reconstruction of the centre has increased.

Throughout this discussion we have concentrated on a black hole in the centre of the bulk for clarity, but these qualitative conclusions apply equally well to black holes situated at any point in the bulk.

\subsection{Other geometries} \label{other-geometries}

We have constructed a duality between quantum many-body models in $\hyper$ and models living on its 2D boundary, as a toy model for a duality between Anti-de Sitter space and its boundary. From a cosmological perspective it would also be interesting to consider toy models of dualities between positively curved / flat geometries, and their boundaries.

There is no reason to suspect that the error correcting properties of AdS/CFT should be recreated in such dualities, so it is not clear that the error correcting code constructions of HQECC will be relevant. However, the analogue Hamiltonian simulation theory from~\cite{cubitt:2017} can be applied in any geometry. It follows immediately from the results in~\cite{cubitt:2017} that it is possible to construct a duality between Euclidean or spherical geometry in dimension~3, and a 2D `boundary' surface. However, it is not clear whether such a `boundary' surface can be considered a geometric boundary in any meaningful sense.

In Euclidean geometry, results from~\cite{cubitt:2017} imply that in order to simulate $n$ bulk qudits in $\mathbb{E}^3$ with a local boundary model requires $O(\poly(n))$ boundary qudits. If we maintain the density of qudits from the bulk on the boundary, this implies that if the bulk qudits were contained in a ball of radius $R$, then the boundary surface would be at a radius $R' = O(R + \poly(n))$. So the distance between the bulk qudits and the boundary surface would increase polynomially with $n$.

The situation in the positively curved case is worse. $\mathbb{S}^3$ is finite, so the boundary surface required to simulate $n$ qudits which lie in $\mathbb{S}^3$ might not itself lie in $\mathbb{S}^3$.
Therefore, while it is possible to construct a duality between $\mathbb{E}^3$ or $\mathbb{S}^3$ and a 2D surface, it is not clear whether such a duality could be considered a bulk / boundary mapping.

\section{Conclusions} \label{discussion}

Even in the absence of a duality at the level of Hamiltonians, holographic quantum codes such as~\cite{Pastawski:2015} already provide a simple, tractable toy model of many of the interesting static features expected of the real AdS/CFT ``dictionary'', such as redundant encoding and complementary recovery of information on the boundary~\cite{Almheiri:2015,Harlow:2016}, entropic relations such as the Ryu-Takayanagi formula~\cite{Ryu:2006a,Ryu:2006}, and even toy models of (static) black holes satisfying the Beckenstein-Hawking bound~\cite{Beckenstein,hawking:1975}.
However, without a holographic mapping between local Hamiltonians, these toy models give more limited insight into the relationship between bulk and boundary energy scales -- a key aspect of AdS/CFT, where non-classical bulk spacetime geometries are believed to correspond to high-energy boundary state.
More importantly, HQECC alone gave no insight into how dynamics in the bulk is reflected in the boundary.

By extending the toy models of holographic duality to encompass Hamiltonians, we show one way to complete this ``dictionary''.
For example, it follows almost immediately from our construction that the toy models of static black holes proposed in~\cite{Pastawski:2015} do indeed correspond to high-energy states of the local boundary model, which moreover pick up their energy from a small number of high-energy terms in the boundary Hamiltonian.

More intriguingly, our construction allows these toy models to say something about how dynamics in the bulk is reflected in the boundary.
Even without writing down any specific local bulk Hamiltonian, the structure of the bulk/boundary mapping we construct implies that dynamics in the bulk is dual to boundary dynamics with some of the qualitative features of AdS/CFT duality. In particular we show that the formation of a (toy model) black hole in the bulk dynamics is dual to a boundary dynamics in which local excitations unitarily evolve to a non-local excitation that lives outside the code space.

On the other hand, our construction shows that \emph{any} local Hamiltonian in the bulk has a corresponding local boundary model.
This implies that, at least in these toy models, the holographic duality has little to do with quantum gravity per se, but is entirely a consequence of the hyperbolic geometry.

Our construction also inherits some of the drawbacks of the HaPPY code. In particular, the Ryu-Takayanagi formula is not obeyed exactly for arbitrary bulk regions; there exist certain pathological choices of boundary region, $A$, for which there are bulk operators that are not recoverable on $A$ nor on $A^c$ (violating complementary recovery). Like the HaPPY code, the tensor network cannot describe sub-AdS geometry as it is only defined at scales larger than the AdS radius. A number of holographic codes have been constructed which build on the HaPPY code and do not have these drawbacks. Notable examples include bidirectional holographic codes (BHC) composed of pluperfect tensors~\cite{pluperfect}, and random tensor network constructions~\cite{Hayden:2016}. It should be feasible to apply the framework developed in this paper to stabilizer BHCs or stabilizer random tensor networks~\cite{Nezami:2017}, to construct a toy model of holographic duality which remedies these defects.

Another way to complete the holographic `dictionary' was suggested in~\cite{osborne-17}, where it is argued that the dynamics for a particular holographic state should be the unitary representation of Thompson's group (a discrete analogue of the conformal group).\footnote{A holographic state is a holographic code with no bulk logical indices.}
While~\cite{osborne-17} concentrates on a particular holographic state, they discuss how to generalise their results.
The key advantage of the method in~\cite{osborne-17} is that it gives a boundary system which is conformally invariant, as would be expected in AdS/CFT.
However, the results in~\cite{osborne-17} apply to holographic states, not holographic codes.
A subspace of the boundary Hilbert space is identified as the bulk Hilbert space in~\cite{osborne-17}, but it is not clear that this is redundantly encoded in the boundary, as would be expected in AdS/CFT.
Even if one identified a good boundary Hilbert space and a symmetry subgroup to identify with time dynamics, there is no reason to expect the generators of this time dynamics will be local.
In our construction we have not attempted to include conformal invariance in the boundary theory (although it in certain cases it does exhibit `block translational invariance' - see \cref{appendix_ti} for details).
An interesting avenue of further research would be to look into combining the work done in this paper, with the methods in~\cite{osborne-17} to construct a bulk-boundary correspondence between Hamiltonians which has the error-correction properties of holographic codes, as well as a conformally invariant boundary.

The main limitation of our result is the usual one stemming from the use of perturbation gadgets: the coupling strengths $\alpha_{ij}$ in the boundary Hamiltonian $\Hbulk$ are very far from uniform.
Indeed, some coupling strengths will be $O(1)$ whilst others are $O\left(\Delta_L^{\poly(n\log(n))}\right)$, where $\Delta_L = \Omega\left(\norm{\Hbulk}^6\right)$. High-energy interactions on the boundary perhaps matter less here than in Hamiltonian complexity results, since the motivation for holographic duality is to model high-energy physics phenomena.
Nonetheless, it would be interesting to understand if a large range of interaction energy scales is a necessary feature of toy models of holographic duality, or an artefact of our proof techniques. Recent results in \cite{Woods:2019,Faist:2019} indicate that this feature may be inherent to any mapping to a local boundary Hamiltonian.

\section{Technical preliminaries} \label{prelim}

\subsection{Perfect tensors and pseudo-perfect tensors} \label{prelim_1}
Perfect tensors were first introduced in~\cite{Pastawski:2015}, where they were used in the construction of HQECC from a 2D bulk to a 1D boundary.

\begin{definition}[Perfect tensors, definition 2 from~\cite{Pastawski:2015}]
A $2m$-index tensor $T_{a_1a_2...a_{2m}}$ is a perfect tensor if, for any bipartition of its indices into a set $A$ and a complementary set $A^c$ with $|A| \leq |A^c|$, $T$ is proportional to an isometric tensor from $A$ to $A^c$.
\end{definition}
This definition is equivalent to requiring that the tensor is a unitary from any set of $m$ legs to the complementary set.

For one of the constructions in this work we introduce a generalisation of perfect tensors:
\begin{definition}[Pseudo-perfect tensors]
A $2m+1$-index tensor $T_{a_1a_2...a_{2m+1}}$ is a pseudo-perfect tensor if, for any bipartition of its indices into a set $A$ and a complementary set $A^c$ with $|A| < |A^c|$, $T$ is proportional to an isometric tensor from $A$ to $A^c$.
\end{definition}

The states described by (pseudo-)perfect tensors are absolutely maximally entangled (AME) (see \cref{appendix_1} for details). Furthermore, viewed as an isometry from $k$~indices to $n$~indices, a $t$-index (pseudo-)perfect tensor is the encoding isometry of a $[n,k,d]$ code, where $t = n+k$ and $d = \floor*{\frac{t}{2}}-k+1$ (see \cref{appendix_2} for proof).

\subsubsection{Stabilizer (pseudo-)perfect tensors}

For this work we will be specifically interested in stabilizer (pseudo-)perfect tensors:

\begin{definition}[Stabilizer (pseudo-)perfect tensors] \label{stab_tensor_def}
Stabilizer (pseudo-)perfect tensors describe stabilizer AME states.\footnote{See \cref{appendix_2} for definition of stabilizer state and stabilizer code.}
\end{definition}

In \cref{appendix_3} we demonstrate that stabilizer (pseudo-)perfect tensors describe stabilizer QECC. Furthermore, they map Pauli rank one operators to Pauli rank one operators in a consistent basis.

It is possible to construct a $t$-index (pseudo-)perfect stabilizer tensor for arbitrarily large $t$ by increasing the local Hilbert space dimension. Details of the construction are given in \cref{appendix_4}.

\subsection{Hyperbolic Coxeter groups} \label{prelim_4}
\subsubsection{Coxeter systems}

The HQECC presented in this paper are tensor networks embedded in tessellations of $\mathbb{H}^3$. We use Coxeter systems to analyse these tessellations.\footnote{An overview of hyperbolic Coxeter groups can be found at~\cite{dur-webpage}.}

 \begin{definition}[Coxeter system~\cite{Tits:1961}]
 Let $S = \{s_i\}_{i \in I}$, be a finite set. Let $M = \left(m_{i,j} \right)_{i,j \in I}$ be a matrix such that:
 \begin{itemize}
 \item $m_{ii} = 1$,   $ \forall i \in I$
 \item $m_{ij} = m_{ji} $,   $\forall i,j \in I, i \neq j$
 \item $m_{ij} \in \left( \mathbb{N} \setminus \{1\} \right) \cup \{\infty   \}$,   $ \forall i,j \in I, i \neq j$
 \end{itemize}
 $M$ is called the Coxeter matrix. The associated Coxeter group, $W$, is defined by the presentation:\footnote{A group presentation $\langle S\mid R \rangle$, where $S$ is a set of generators and $R$ is a set of relations between the generators, defines a group which is (informally) the largest group which is generated by $S$ and in which all the relations in $R$ hold.}
 \begin{equation}
 W = \langle S \mid \left(s_is_j \right)^{m_{ij}} = 1 \forall i,j \in I  \rangle
 \end{equation}
 The pair $(W,S)$ is called a Coxeter system.
 \end{definition}

To understand the connection between Coxeter systems and tesselations of hyperbolic space we need to introduce the notion of a Coxeter polytope.

\begin{definition}
A convex polytope in $\mathbb{X}^d = \mathbb{S}^d, \mathbb{E}^d$ or $\mathbb{H}^d$ is a convex intersection of a finite number of half spaces.
A Coxeter polytope $P \subseteq \mathbb{X}^d$  is a polytope with all dihedral angles integer submultiples of $\pi$.
\end{definition}

A Coxeter system can be associated to every Coxeter polytope. Let $(F_i)_{i \in I}$ be the facets of $P$, and if $F_i \cap F_j \neq \emptyset$ set $m_{ij} = \frac{\pi}{\alpha_{ij}}$, where $\alpha_{ij}$ is the dihedral angle between $F_i$ and $F_j$. Set $m_{ii} = 1$, and if  $F_i \cap F_j = \emptyset$ set $m_{ij} = \infty$. Let $s_i$ be the reflection in $F_i$. The Coxeter group with Coxeter matrix $(m_{ij})_{i,j \in I}$ is a discrete subgroup of $Isom(\mathbb{X}^d)$, generated by reflections in the facets of $P$, and $P$ tiles $\mathbb{X}^d$~\cite{Davis:2007}.

Coxeter systems can be represented by Coxeter diagrams, where a vertex is associated to every $s_i$ (or equivalently to every facet in the corresponding Coxeter polytope). Vertices are connected by edges in the following manner:
\begin{itemize}
\item If $m_{ij} = 2$ (i.e. facets $F_i$ and $F_j$ in the Coxeter polytope are orthogonal) there is no edge between the vertices representing $s_i$ and $s_j$
\item If $m_{ij} = 3$ (i.e. the dihedral angle between $F_i$ and $F_j$ is $\frac{\pi}{3}$ ) there is an unlabelled edge between vertices representing $s_i$ and $s_j$
\item If $m_{ij} \in \mathbb{N} \setminus \{1,2,3\}$ (i.e. the dihedral angle between $F_i$ and $F_j$ is $\frac{\pi}{m_{ij}}$) there is an edge labelled with $m_{ij}$ between vertices representing $s_i$ and $s_j$
\item If $m_{ij} = \infty$ (i.e. facets $F_i$ and $F_j$ in the Coxeter polytope diverge) there is a dashed edge between the vertices representing $s_i$ and $s_j$
\end{itemize}

A Coxeter group is irreducible if its Coxeter diagram is connected.

Faces of $P$ correspond to subsets of $S$ that generate finite Coxeter groups:\footnote{This does not apply to ideal vertices (vertices at the boundary of $\mathbb{X}^d$) however in this paper we are only concerned with compact polyhedra, which do not have vertices at infinity.}

\begin{lemma}[From~\cite{Vinberg:1985}] \label{Vinberg}
$f = \cup_{i \in I} F_i $ is a codimension $|I|$ face of  $P$ if and only if $\{s_i \mid i \in I\}$ generates a finite Coxeter group.
\end{lemma}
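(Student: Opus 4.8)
The plan is to pass to the Minkowski (hyperboloid) model of $\mathbb{H}^d$ and reduce the statement to a positive-definiteness criterion for the Gram matrix of the facet normals, which by a classical fact in the theory of Coxeter groups is equivalent to finiteness of the corresponding reflection subgroup. Concretely, I would realise $\mathbb{H}^d$ as the future sheet of $\{x\in\mathbb{R}^{d,1}:\langle x,x\rangle=-1\}$ and write each facet as $F_i=P\cap H_i$ with $H_i=\{x:\langle x,e_i\rangle=0\}$ for a unit spacelike outward normal $e_i$, so that $P=\{x:\langle x,e_i\rangle\le 0\ \forall i\}$ and $s_i$ is the reflection fixing $H_i$. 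Let $G=(g_{ij})$ with $g_{ij}=\langle e_i,e_j\rangle$: if $F_i$ and $F_j$ meet at dihedral angle $\pi/m_{ij}$ then $g_{ij}=-\cos(\pi/m_{ij})\in(-1,0]$, while if they do not meet ($m_{ij}=\infty$) then $g_{ij}<-1$ because $P$ is compact; in particular $g_{ij}\le 0$ for all $i\ne j$. For an index set $I$ put $G_I=(g_{ij})_{i,j\in I}$. I would first recall the classical fact (see e.g.\ \cite{Davis:2007}) that a Coxeter group $W_I=\langle s_i:i\in I\rangle$ is finite iff its canonical bilinear form is positive definite; since that form has off-diagonal entries $-\cos(\pi/m_{ij})$, it coincides with $G_I$ whenever all pairs of facets in $I$ meet, and when some pair in $I$ does not meet then $W_I$ contains an infinite dihedral subgroup while $G_I$ has the negative $2\times 2$ minor $1-g_{ij}^2$ — so in all cases $W_I$ is finite $\iff G_I\succ 0$. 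It therefore suffices to prove: $\bigcap_{i\in I}F_i$ is a face of codimension exactly $|I|$ $\iff$ $G_I\succ 0$. (Note $\bigcap_{i\in I}F_i$ is automatically a face of $P$ when nonempty, being an intersection of faces, so only nonemptiness and the codimension are at issue.)

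For the forward direction, I would pick $x$ in the relative interior of $f:=\bigcap_{i\in I}F_i$. By minimality of the smallest face containing $x$, the facets of $P$ through $x$ are exactly the $F_i$ with $i\in I$, so in a neighbourhood of $x$ the polytope agrees with $\bigcap_{i\in I}H_i^-$, whence $f$ agrees near $x$ with $\mathbb{H}^d\cap V_I^\perp$ where $V_I=\mathrm{span}\{e_i:i\in I\}$. Thus $\dim f=\dim(\mathbb{H}^d\cap V_I^\perp)$. Since $f\ne\emptyset$, $V_I^\perp$ contains a timelike vector; as the orthogonal complement of a timelike vector in $\mathbb{R}^{d,1}$ is spacelike, $V_I^\perp$ cannot be degenerate, so it is Lorentzian, $V_I$ is nondegenerate spacelike, and $\dim(\mathbb{H}^d\cap V_I^\perp)=d-\dim V_I$. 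Comparing with $\dim f=d-|I|$ forces $\dim V_I=|I|$, so the $e_i$ are linearly independent and $G_I$, being the Gram matrix of an independent spacelike family, is positive definite. (Alternatively one can argue via the tessellation: the $W$-translates of $P$ whose closures contain $x$ are precisely $\{wP:w\in W_I\}$, they tile a small ball about $x$ with equal positive volumes, and finiteness of $W_I$ follows; but the signature argument is self-contained.)

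For the converse, suppose $G_I\succ 0$. Then $\{e_i:i\in I\}$ is independent and spans a spacelike $V_I$, so $V_I^\perp$ is Lorentzian of dimension $d+1-|I|$ and $L:=\mathbb{H}^d\cap V_I^\perp$ is a totally geodesic copy of $\mathbb{H}^{d-|I|}$. The crux is to show $P\cap L\ne\emptyset$. I would take any $p$ in the interior of $P$ and let $p'$ be its orthogonal projection onto $V_I^\perp$, i.e.\ $p'=p-\sum_{i\in I}c_i e_i$ with $c=G_I^{-1}(\langle p,e_i\rangle)_{i\in I}$; then $\langle p',p'\rangle=\langle p,p\rangle-c^{\mathsf T}G_I c\le -1$ and the segment from $p$ to $p'$ stays timelike, so $p'$ determines a point $\bar p'\in L$. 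To check $\bar p'\in P$: for $j\in I$ one has $\langle\bar p',e_j\rangle=0$, and for $j\notin I$, $\langle\bar p',e_j\rangle=\langle p,e_j\rangle-\sum_{i\in I}c_i g_{ij}$; here I would use that a positive-definite matrix with nonpositive off-diagonal entries (a Stieltjes/$M$-matrix) has entrywise nonnegative inverse, so $c_i\le 0$, which together with $g_{ij}\le 0$ and $\langle p,e_j\rangle<0$ gives $\langle\bar p',e_j\rangle<0$. Hence $\bar p'\in P$, lies on exactly the facets $F_i$ ($i\in I$), so $\bar p'\in f$; and since near $\bar p'$ the polytope agrees with $\bigcap_{i\in I}H_i^-$, the face $f$ agrees near $\bar p'$ with $L$ and is therefore a nonempty face of codimension exactly $|I|$. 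Combining the two directions with the classical criterion then proves the lemma.

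The step I expect to be the main obstacle is the existence (nonemptiness) half of the converse — passing from ``$W_I$ finite'' to ``the facets genuinely meet in a codimension-$|I|$ face'', which is the real content of \cite{Vinberg:1985}; the explicit projection argument above is how I would handle it, the two points needing care being the verification that $p'$ stays future-timelike and the entrywise nonnegativity of $G_I^{-1}$.
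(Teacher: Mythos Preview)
The paper does not give its own proof of this lemma; it is quoted verbatim from \cite{Vinberg:1985} and used as a black box throughout (e.g.\ in \cref{isometry_thm} and in constraining the facet count of admissible Coxeter polytopes). There is therefore nothing to compare your proposal against in the paper itself.

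Your reconstruction is the standard Vinberg argument and is correct. Working in the hyperboloid model, reducing both sides to positive-definiteness of the principal Gram minor $G_I$, and handling the nonemptiness direction via orthogonal projection of an interior point onto $V_I^\perp$ using the Stieltjes/$M$-matrix property of $G_I$ (nonpositive off-diagonals, positive definite, hence entrywise nonnegative inverse) is exactly how Vinberg's theory proceeds. Your verification that the projected point stays future-timelike (via $\langle p(t),p(t)\rangle = -1 - t(2-t)\,c^{\mathsf T}G_I c \le -1$ along the segment) and lands inside $P$ (via $c_i\le 0$ and $g_{ij}\le 0$ for $j\notin I$) is clean. One minor remark: the paper's statement has a typo, writing $\cup_{i\in I}F_i$ where $\cap_{i\in I}F_i$ is clearly intended; you silently corrected this, which is right.
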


\subsubsection{Combinatorics of Coxeter groups} \label{combinatorics}

In this section we briefly introduce the notions which are used later in the paper.

Let $(W,S)$ be a Coxeter system. Every element $w \in W$ can be written as a product of generators:
\begin{equation}
w = s_1s_2...s_k \mbox{  for  } s_i \in S
\end{equation}
This description is not unique. We can define a length function with respect to the generating set $S$ such that $l_S(1) = 0$, and:
\begin{equation}
l_S(w) = \min\{l \in \mathbb{N} \mid s_1s_2...s_l = w \}
\end{equation}
An expression for $w$ with the minimum number of generators, $s_1s_2...s_{l_S(w)}$ is called a reduced word for $w$.

Coxeter groups satisfy the Deletion Condition:

\begin{definition}[Deletion Condition] \label{deletion}
Let $(W,S)$ be a pair where $W$ is a group and $S$ is a generating set for $W$ consisting entirely of elements of order two. We say that this pair satisfies the Deletion Condition if for any non reduced word $s_1...s_r$ over $S$ there are two indices $i$ and $j$ such that:
\begin{equation}
s_1...s_r = s_1...\hat{s}_i...\hat{s}_j...s_r
\end{equation}
where the carets indicate omission.
\end{definition}

The length function on Coxeter groups has a number of important properties:
\begin{enumerate}[(i)]
\item $l_S(ws) = l_S(w) \pm 1$ for all $s \in S$
\item $l_S(sw) = l_S(w) \pm 1$ for all $s \in S$
\item $l_S(w^{-1}) = l_S(w)$ for all $w \in W$
\item $|l_S(u) - l_S(w)| \leq l_S(uw) \leq l_S(u) + l_S(w)$ for all $u,w \in W$
\item $d(u,w) = l_S(u^{-1}w)$ for $u,w \in W $ is a metric on $W$ (referred to as the word metric)
\end{enumerate}

By conditions (i) and (ii), if we define the following sets:
\begin{equation}
\begin{split}
& \mathcal{D}_R(w) = \{s \in S \mid l_S(ws) = l_S(w) - 1 \}\\
& \mathcal{A}_R(w) = \{s \in S \mid l_S(ws) = l_S(w) + 1 \}\\
& \mathcal{D}_L(w) = \{s \in S \mid l_S(sw) = l_S(w) - 1 \}\\
& \mathcal{A}_L(w) = \{s \in S \mid l_S(sw) = l_S(w) + 1 \}\\
\end{split}
\end{equation}
then we have $\mathcal{D}_R(w) \cup \mathcal{A}_R(w) = \mathcal{D}_L(w) \cup \mathcal{A}_L(w) = S $ and $\mathcal{D}_R(w) \cap \mathcal{A}_R(w) = \mathcal{D}_L(w) \cap \mathcal{A}_L(w) = \{\}$. We refer to $\mathcal{D}_R(w)$ and $\mathcal{D}_L(w)$  ($\mathcal{A}_R(w), \mathcal{A}_L(w)$) as the right and left descent sets (ascent sets) of $w$ respectively.

\begin{lemma}[Corollary 2.18 from~\cite{Abramenko:2008}] \label{finite_descent}
For all $w \in W$, the Coxeter groups generated by $\mathcal{D}_R(w)$ and $ \mathcal{D}_L(w)$ are finite.\footnote{All subsets of $S$ generate a Coxeter group.}
\end{lemma}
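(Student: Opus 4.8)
~\textbf{Proof plan for \cref{finite_descent}.}
The statement is that $\langle \mathcal{D}_R(w)\rangle$ and $\langle\mathcal{D}_L(w)\rangle$ are finite Coxeter groups. Since the left and right cases are symmetric (apply the right-hand statement to $w^{-1}$ and use property~(iii)), it suffices to handle $\mathcal{D}_R(w)$. The plan is to show that $\langle \mathcal{D}_R(w)\rangle$ acts in a way that has an orbit containing $w$ of size equal to $|\langle \mathcal{D}_R(w)\rangle|$ but whose elements all have length $\leq l_S(w)$; since for a fixed length there are only finitely many elements of $W$, finiteness follows. To make this precise I would use the standard fact (provable from the Deletion Condition, \cref{deletion}, which the excerpt grants) that for any subset $T\subseteq S$ the parabolic subgroup $W_T=\langle T\rangle$ and any coset $wW_T$ contains a \emph{unique} element of minimal length, call it $u$, and every element of $wW_T$ can be written $u t$ with $t\in W_T$ and $l_S(ut)=l_S(u)+l_S(t)$.

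The key step is to apply this with $T=\mathcal{D}_R(w)$ and to observe that $w$ itself is the \emph{longest} element in its coset $u W_T$, equivalently that $u W_T$ is a finite set. First I would check that $w$ lies in the coset and that $\mathcal{D}_R(w)\subseteq T$ forces a length-drop in every direction: by definition of $\mathcal{D}_R(w)$, for each $s\in T$ we have $l_S(ws)=l_S(w)-1<l_S(w)$. Writing $w=ut$ with $t\in W_T$ and $l_S(w)=l_S(u)+l_S(t)$, the condition $l_S(ws)<l_S(w)$ for all generators $s$ of $W_T$ translates (using that multiplication by $u$ on the left is length-additive on $W_T$) into $l_S(ts)<l_S(t)$ for all $s\in T$, i.e.\ $\mathcal{D}_R^{W_T}(t)=T=S(W_T)$. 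An element of a Coxeter group whose right descent set is the entire generating set is exactly a longest element of that group, and a Coxeter group possesses a longest element if and only if it is finite. Hence $W_T=\langle\mathcal{D}_R(w)\rangle$ is finite.

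Concretely the steps in order are: (1) reduce the left case to the right case via $w\mapsto w^{-1}$ and property~(iii); (2) recall/cite the parabolic coset decomposition $w=ut$, $t\in W_T$, $l_S(w)=l_S(u)+l_S(t)$, as a consequence of the Deletion Condition; (3) translate the defining inequalities $l_S(ws)=l_S(w)-1$ for $s\in T$ into $l_S(ts)=l_S(t)-1$ for all $s\in T$; (4) invoke the characterization: a Coxeter group has an element whose descent set is all of $S$ iff the group is finite (this is the nontrivial input), concluding $W_T$ finite. Alternatively, one can avoid step~(4)'s black-box by arguing directly that the orbit $\{wt' : t'\in W_T\}$ injects into the ball $\{x\in W : l_S(x)\leq l_S(w)\}$ — each $wt'=ut''$ has length $l_S(u)+l_S(t'')\leq l_S(u)+l_S(t)=l_S(w)$ since $t$ is longest in $W_T$ — and this ball is finite because $S$ is finite, so $|W_T|\leq |\{x : l_S(x)\leq l_S(w)\}|<\infty$.

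The main obstacle is step~(3)/(4): one must be careful that the length function restricted to the parabolic does behave additively under left multiplication by the minimal coset representative, and that "right descent set equals $S$" genuinely characterizes the longest element of a finite Coxeter group (and that only finite Coxeter groups have such an element). These are standard but do require the full strength of the exchange/deletion machinery; everything else is bookkeeping with the length-function properties~(i)--(v) already listed. Since the excerpt explicitly cites this as a corollary of a standard reference, in the write-up I would lean on that machinery rather than reprove it, and present the orbit-counting version as it is the most self-contained given what is available.
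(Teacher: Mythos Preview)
The paper does not supply its own proof of this lemma; it is quoted verbatim as Corollary~2.18 from the cited reference and used as a black box. So there is no in-paper argument to compare against.

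Your main line (steps (1)--(4)) is the standard proof and is correct: pass to the minimal coset representative $u$ of $wW_T$ with $T=\mathcal{D}_R(w)$, use length-additivity $l_S(uv)=l_S(u)+l_S(v)$ for $v\in W_T$ to transfer the descent conditions to $t=u^{-1}w\in W_T$, and then invoke the characterisation that a Coxeter group possessing an element with full right descent set is finite. You are also right that this last step is the genuinely nontrivial input and requires the exchange/deletion machinery.

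One caveat: your ``alternative'' that purports to avoid step~(4) is circular as written. You bound $l_S(ut'')\leq l_S(ut)=l_S(w)$ by asserting ``since $t$ is longest in $W_T$'', i.e.\ $l_S(t'')\leq l_S(t)$ for all $t''\in W_T$. But knowing that $t$ has full descent set in $W_T$ does \emph{not} by itself give $l_S(t'')\leq l_S(t)$ for arbitrary $t''$; proving that implication is exactly the content of step~(4) (an element with full descent set exists iff the group is finite, in which case it is the longest element). So the orbit-into-a-finite-ball argument still rests on the same nontrivial fact and does not provide an independent route. If you want a genuinely self-contained finish, you would need to prove directly (e.g.\ by induction on $l_T(v)$ using the Exchange Condition) that $l_T(tv)=l_T(t)-l_T(v)$ for all $v\in W_T$, from which $l_T(v)\leq l_T(t)$ and hence finiteness of $W_T$ follow.
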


The irreducible finite Coxeter groups are classified in \cref{finite}. A general Coxeter group is finite if and only if each connected component of the Coxeter graph generates a finite group.

\begin{table}
\begin{center}
\begin{tabular}{c|c}
Name & Coxeter diagram \\
\hline
$A_n$  ($n \geq 1$) &
\begin{tikzpicture}
\filldraw[black] (0,0) circle (2pt) node[anchor=north] {$1$};
\filldraw[black] (1,0) circle (2pt) node[anchor=north] {$2$};
\filldraw[black] (2,0) circle (2pt) node[anchor=north] {$3$};
\filldraw[black] (3,0) circle (2pt) node[anchor=north] {$n-1$};
\filldraw[black] (4,0) circle (2pt) node[anchor=north] {$n$};
\draw (0,0) -- (1,0);
\draw (2,0) -- (1,0);
\draw[dotted] (2,0) -- (3,0);
\draw (3,0) -- (4,0);
\end{tikzpicture} \\
$B_n = C_n$  ($n \geq 3$) & \begin{tikzpicture}
\filldraw[black] (0,0) circle (2pt) node[anchor=north] {$1$};
\filldraw[black] (1,0) circle (2pt) node[anchor=north] {$2$};
\filldraw[black] (2,0) circle (2pt) node[anchor=north] {$n-2$};
\filldraw[black] (3.5,0) circle (2pt) node[anchor=north] {$n-1$};
\filldraw[black] (5,0) circle (2pt) node[anchor=north] {$n$};
\draw (0,0) -- (1,0);
\draw[dotted] (2,0) -- (1,0);
\draw (2,0) -- (3.5,0);
\draw (3.5,0) --  (5,0) node[midway,above] {4};
\end{tikzpicture} \\

$D_n$  ($n \geq 4$) & \begin{tikzpicture}
\filldraw[black] (0,0) circle (2pt) node[anchor=north] {$1$};
\filldraw[black] (1,0) circle (2pt) node[anchor=north] {$2$};
\filldraw[black] (2,0) circle (2pt) node[anchor=north] {$n-3$};
\filldraw[black] (3.5,0) circle (2pt) node[anchor=north] {$n-2$};
 \filldraw[black] (3.5,1) circle (2pt) node[anchor=south] {$n-1$};
\filldraw[black] (5,0) circle (2pt) node[anchor=north] {$n$};
\draw (0,0) -- (1,0);
\draw[dotted] (2,0) -- (1,0);
\draw (2,0) -- (3.5,0);
\draw (3.5,0) --  (5,0);
\draw(3.5,0) -- (3.5,1);
\end{tikzpicture} \\

$E_6$  & \begin{tikzpicture}
\filldraw[black] (0,0) circle (2pt) node[anchor=north] {$1$};
\filldraw[black] (1,0) circle (2pt) node[anchor=north] {$3$};
\filldraw[black] (2,0) circle (2pt) node[anchor=north] {$4$};
\filldraw[black] (3,0) circle (2pt) node[anchor=north] {$5$};
 \filldraw[black] (2,1) circle (2pt) node[anchor=south] {$2$};
\filldraw[black] (4,0) circle (2pt) node[anchor=north] {$6$};
\draw (0,0) -- (1,0);
\draw (2,0) -- (1,0);
\draw (2,0) -- (3,0);
\draw (3,0) --  (4,0);
\draw(2,0) -- (2,1);
\end{tikzpicture} \\

$E_7$  & \begin{tikzpicture}
\filldraw[black] (0,0) circle (2pt) node[anchor=north] {$1$};
\filldraw[black] (1,0) circle (2pt) node[anchor=north] {$3$};
\filldraw[black] (2,0) circle (2pt) node[anchor=north] {$4$};
\filldraw[black] (3,0) circle (2pt) node[anchor=north] {$5$};
 \filldraw[black] (2,1) circle (2pt) node[anchor=south] {$2$};
\filldraw[black] (4,0) circle (2pt) node[anchor=north] {$6$};
\filldraw[black] (5,0) circle (2pt) node[anchor=north] {$7$};

\draw (0,0) -- (1,0);
\draw (2,0) -- (1,0);
\draw (2,0) -- (3,0);
\draw (3,0) --  (4,0);
\draw(2,0) -- (2,1);
\draw (5,0) --  (4,0);

\end{tikzpicture} \\

$E_8$  & \begin{tikzpicture}
\filldraw[black] (0,0) circle (2pt) node[anchor=north] {$1$};
\filldraw[black] (1,0) circle (2pt) node[anchor=north] {$3$};
\filldraw[black] (2,0) circle (2pt) node[anchor=north] {$4$};
\filldraw[black] (3,0) circle (2pt) node[anchor=north] {$5$};
 \filldraw[black] (2,1) circle (2pt) node[anchor=south] {$2$};
\filldraw[black] (4,0) circle (2pt) node[anchor=north] {$6$};
\filldraw[black] (5,0) circle (2pt) node[anchor=north] {$7$};
\filldraw[black] (6,0) circle (2pt) node[anchor=north] {$8$};

\draw (0,0) -- (1,0);
\draw (2,0) -- (1,0);
\draw (2,0) -- (3,0);
\draw (3,0) --  (4,0);
\draw(2,0) -- (2,1);
\draw (5,0) --  (4,0);
\draw (5,0) --  (6,0);

\end{tikzpicture} \\

$F_4$ & \begin{tikzpicture}
\filldraw[black] (0,0) circle (2pt) node[anchor=north] {$1$};
\filldraw[black] (1,0) circle (2pt) node[anchor=north] {$2$};
\filldraw[black] (2,0) circle (2pt) node[anchor=north] {$3$};
\filldraw[black] (3,0) circle (2pt) node[anchor=north] {$4$};
\draw (0,0) -- (1,0);
\draw (2,0) -- (1,0) node[midway,above] {4};
\draw (2,0) -- (3,0);
\end{tikzpicture} \\

$G_2$ & \begin{tikzpicture}
\filldraw[black] (0,0) circle (2pt) node[anchor=north] {};
\filldraw[black] (1,0) circle (2pt) node[anchor=north] {};
\draw (0,0) -- (1,0) node[midway,above] {6};
\end{tikzpicture} \\

$H_3$ & \begin{tikzpicture}
\filldraw[black] (0,0) circle (2pt) node[anchor=north] {$1$};
\filldraw[black] (1,0) circle (2pt) node[anchor=north] {$2$};
\filldraw[black] (2,0) circle (2pt) node[anchor=north] {$3$};
\draw (0,0) -- (1,0);
\draw (2,0) -- (1,0) node[midway,above] {5};
\end{tikzpicture} \\

$H_4$ & \begin{tikzpicture}
\filldraw[black] (0,0) circle (2pt) node[anchor=north] {$1$};
\filldraw[black] (1,0) circle (2pt) node[anchor=north] {$2$};
\filldraw[black] (2,0) circle (2pt) node[anchor=north] {$3$};
\filldraw[black] (3,0) circle (2pt) node[anchor=north] {$4$};
\draw (0,0) -- (1,0);
\draw (2,0) -- (1,0);
\draw (2,0) -- (3,0) node[midway,above] {5};
\end{tikzpicture} \\

$I^{(m)}_2$  ($m \geq 3$) & \begin{tikzpicture}
\filldraw[black] (0,0) circle (2pt) node[anchor=north] {};
\filldraw[black] (1,0) circle (2pt) node[anchor=north] {};
\draw (0,0) -- (1,0) node[midway,above] {$m$};
\end{tikzpicture} \\

\end{tabular}
\end{center}
\caption{Diagrams of irreducible finite Coxeter systems. Table reproduced from~\cite{Cohen}.}   \label{finite}

\end{table}

Finally, we note that if $s \in \mathcal{D}_R(w)$ ($s \in \mathcal{D}_L(w)$) there is a reduced word for $w$ that ends in $s$ (begins with $s$).

\subsubsection{Growth rates of Coxeter groups}

The growth series of a Coxeter group with respect to a set of generators $S$ is defined as:
\begin{equation}
f_S(x) = \sum_{w \in W} x^{l_S(w)} = 1 + Sx + ... = 1 + \sum_{i \geq 1} a_ix^i
\end{equation}
where $a_i$ is the number of $w \in W$ satisfying $l_S(w) = i$. The growth rate is given by:
\begin{equation}
\tau = \lim \sup_{n \rightarrow \infty} \sqrt[i]{a_i}
\end{equation}

Spherical and Euclidean Coxeter groups have growth rate 0 and 1 respectively. Hyperbolic Coxeter groups have $\tau > 1$.

%
%
%
%

\subsection{Hamiltonian simulation} \label{prelim_5}
\cite{cubitt:2017} introduced a mathematical theory of analogue Hamiltonian simulation, characterising precisely when a Hamiltonian reproduces the same physics as another.
They then applied perturbation gadget techniques from Hamiltonian complexity theory to construct examples of universal local Hamiltonians, able to simulate any other model in this rigorous sense.
We make use of these techniques to construct holographic versions of Hamiltonian simulation which, together with the HQECC's, give the desired holographic dualities at the level of the Hamiltonians.

\subsubsection{Hamiltonian encodings}

In~\cite{cubitt:2017} it is shown that if an encoding $H' = \mathcal{E}(H)$ has the following three properties:
\begin{enumerate}
\item $\mathcal{E}(A) = \mathcal{E}(A)^\dagger$ for all $A \in \mbox{Herm}_n$
\item $\mbox{spec}(\mathcal{E}(A)) = \mbox{spec}(A)$ for all $A \in \mbox{Herm}_n$
\item $\mathcal{E}(pA + (1-p)B) = p\mathcal{E}(A)+(1-p)\mathcal{E}(B)$ for all $A, B \in \mbox{Herm}_n$ and all $p \in [0,1]$
\end{enumerate}
then it must be of the form:
\begin{equation}
\mathcal{E}(M) = U(M \otimes P + \overline{M} \otimes Q)U^\dagger
\end{equation}
for orthogonal projectors $P,Q$ such that $P+Q = \identity$ where $U$ is a unitary, and $\overline{M}$ denotes complex conjugation.
Furthermore, it is shown that under any such encoding, $H'$ will preserve the partition function, measurement outcomes and time evolution of $H$.

If the encoding $\mathcal{E}(H)$ only acts within a subspace of the simulator system $\mathcal{H}'$ then the unitary $U$ is replaced with an isometry $W$:
\begin{equation}
\mathcal{E}(M) = W(M \otimes P + \overline{M} \otimes Q)W^\dagger
\end{equation}
The stabilizer codes discussed in \cref{codes_section} are an example of subspace encodings with the particularly simple structure $\mathcal{E}(M) = WMW^\dagger$.

A local encoding is an encoding which maps local observables to local observables.
\begin{definition}[Local encoding into a subspace (this is a generalisation of definition 13 from~\cite{cubitt:2017}] Let $\mathcal{E}: \mathcal{B}(\otimes_{i=1}^n \mathcal{H}_i) \rightarrow \mathcal{B}(\otimes_{i=1}^{n'} \mathcal{H}_i')$ be a subspace encoding, and let $\{S_i'\}_{i=1}^{n}$ be subsets of $[n']$.
  We say that $\mathcal{E}$ is local with respect to $\{S_i'\}$ if for any operator $A_i \in \mbox{\emph{Herm}}(\mathcal{H}_i)$ there exists $A_i' \in \mbox{\emph{Herm}}(\otimes_{i=1}^{n'} \mathcal{H}_i')$ which acts non-trivially only on $\{S_i'\}$ such that:
  \begin{equation}
    \mathcal{E}(A_i \otimes \identity) = (A_i' \otimes \identity)\mathcal{E}(\identity)
  \end{equation}
\end{definition}

\subsubsection{Hamiltonian simulation}

If $H'$ perfectly simulates $H$ then it reproduces the physics of $H$ below some energy cut-off $\Delta$, where $\Delta$ can be made arbitrarily large.

\begin{definition}[Exact simulation, definition 20 from~\cite{cubitt:2017}] We say that $H'$ perfectly simulates $H$ below energy $\Delta$ if there is a local encoding $\mathcal{E}$ into the subspace $S_{\mathcal{E}}$ such that
  \begin{enumerate}[i.]
  \item $S_{\mathcal{E}} = S_{\leq\Delta(H')}$ (or equivalently $\mathcal{E}(\identity) = P_{\leq\Delta(H')}$)
  \item $H'|_{\leq \Delta} = \mathcal{E}(H)|_{S_\mathcal{E}}$
  \end{enumerate}
\end{definition}

If $\mathcal{E}(M) = WMW^\dagger$, where $W$ is the encoding isometry of some stabilizer code and $\Pi = WW^\dagger$ is the projector onto the code-space, then $H' = \mathcal{E}(H) + \Delta \Pi$ simulates $H$ below energy $\Delta$.

We can also consider the case where the the simulation is only approximate:

\begin{definition}[Approximate simulation, definition 23 from~\cite{cubitt:2017}]\label{approxsim}
  We say that $H'$ is a $(\Delta,\eta,\epsilon)$-simulation of $H$ if there exists a local encoding $\mathcal{E}(M) = W(M \otimes P + \overline{M}\otimes Q)W^\dagger$ such that:
  \begin{enumerate}[i.]
  \item There exists an encoding $\tilde{\mathcal{E}}(M)= \tilde{W}(M \otimes P + \overline{M}\otimes Q)\tilde{W}^\dagger$ such that $S_{\tilde{\mathcal{E}}} = S_{\leq \Delta(H')}$ and $||W - \tilde{W}||_{\infty} \leq \eta$;
  \item $|| H'_{\leq \Delta} - \tilde{\mathcal{E}}(H) ||_{\infty} \leq \epsilon$
  \end{enumerate}
\end{definition}
In~\cite{cubitt:2017} it is shown that approximate Hamiltonian simulation preserves important physical properties:
\begin{theorem}[From~\cite{cubitt:2017}] \label{physical-properties} Let $H$ act on $\left(\mathbb{C}^d\right)^{\otimes n}$.
  Let $H'$ acting on $\left(\mathbb{C}^{d'}\right)^{\otimes m}$ be a $(\Delta, \eta, \epsilon)$-simulation of $H$ with corresponding local encoding $\mathcal{E}(M) = W(M \otimes P + \overline{M} \otimes Q)W^\dagger$.
  Let $p = \rank(P)$ and $q = \rank(Q)$.
  Then:

  \begin{enumerate}[i.]
  \item Let $\lambda_i(H)$ (resp.\ $\lambda_i(H')$) be the $i^{\text{th}}$ smallest eigenvalue of $H$ (resp.\ $H'$), then for all $1 \leq i \leq d^n$, and all $(i-1)(p+q) \leq j \leq i (p+q)$, $|\lambda_i(H) - \lambda_j(H')| \leq \epsilon$.
  \item The relative error in the partition function evaluated at $\beta$ satisfies:
    \begin{equation}
      \frac{|\mathcal{Z}_{H'}(\beta) - (p+q)\mathcal{Z}_H(\beta) |}{(p+q)\mathcal{Z}_H(\beta)} \leq \frac{(d')^m e^{-\beta \Delta}}{(p+q)d^n e^{-\beta ||H||}} + (e^{\epsilon \beta} - 1)
    \end{equation}
  \item For any density matrix $\rho'$ in the encoded subspace so that $\mathcal{E}(\identity)\rho' = \rho'$:
    \begin{equation}
      ||e^{-iH't}\rho'e^{iH't} - e^{-i\mathcal{E}(H)t}\rho'e^{i\mathcal{E}(H)t}||_1 \leq 2\epsilon t + 4\eta
    \end{equation}
  \end{enumerate}
\end{theorem}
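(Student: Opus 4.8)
The plan is to prove the three claims in sequence, using the first to power the other two. Fix the data of \cref{approxsim}: write $\mathcal{E}(M) = W(M\otimes P + \overline{M}\otimes Q)W^\dagger$ and $\tilde{\mathcal{E}}(M) = \tilde{W}(M\otimes P + \overline{M}\otimes Q)\tilde{W}^\dagger$ for the exact and approximate encodings, with $\|W - \tilde{W}\|_\infty \leq \eta$, $S_{\tilde{\mathcal{E}}} = S_{\leq\Delta}(H')$, and $\|H'|_{\leq\Delta} - \tilde{\mathcal{E}}(H)\|_\infty \leq \epsilon$. The single observation that drives everything is that for Hermitian $H$ one has $\overline{H} = H^T$, so $H$ and $\overline{H}$ share a spectrum; hence $H\otimes P + \overline{H}\otimes Q$ has each eigenvalue $\lambda_i(H)$ with multiplicity exactly $p+q$. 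Since $\tilde{W}$ is an isometry, $\tilde{\mathcal{E}}(H)$ restricted to its image $S_{\tilde{\mathcal{E}}}$ has the same eigenvalues with the same multiplicities, and $\dim S_{\tilde{\mathcal{E}}} = d^n(p+q)$.

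For claim (i): because $S_{\leq\Delta}(H') = S_{\tilde{\mathcal{E}}}$ is a $d^n(p+q)$-dimensional spectral subspace spanned by the lowest eigenvectors of $H'$, the $j$-th smallest eigenvalue of $H'$ equals that of $H'|_{\leq\Delta}$ for every $j \leq d^n(p+q)$. Then apply Weyl's perturbation inequality to the pair $H'|_{\leq\Delta}$ and $\tilde{\mathcal{E}}(H)|_{S_{\tilde{\mathcal{E}}}}$, which differ in operator norm by at most $\epsilon$: their $j$-th eigenvalues agree up to $\epsilon$, and the latter is $\lambda_i(H)$ precisely when $(i-1)(p+q) < j \leq i(p+q)$, which is the claim (the boundary index being absorbed by the $\leq$). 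For claim (ii): split $\mathcal{Z}_{H'}(\beta) = \operatorname{Tr} e^{-\beta H'}$ into the sum over eigenvalues $\leq\Delta$ and the sum over eigenvalues $>\Delta$. The second is bounded by $(d')^m e^{-\beta\Delta}$. For the first, claim (i) places each of the $p+q$ eigenvalues of $H'$ in the block belonging to $\lambda_i(H)$ inside $[\lambda_i(H)-\epsilon,\lambda_i(H)+\epsilon]$, so that block contributes to the partition function within a multiplicative factor $e^{\epsilon\beta}$ of $(p+q)e^{-\beta\lambda_i(H)}$; summing over $i$ shows the $\leq\Delta$ part differs from $(p+q)\mathcal{Z}_H(\beta)$ by at most $(p+q)\mathcal{Z}_H(\beta)(e^{\epsilon\beta}-1)$. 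Adding the two pieces and dividing by $(p+q)\mathcal{Z}_H(\beta) \geq (p+q)d^n e^{-\beta\|H\|}$ (every eigenvalue of $H$ being at most $\|H\|$) gives the stated bound.

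Claim (iii) is the step I expect to be the main obstacle, since it is where the $\eta$-error and the gap between ``the encoded subspace $S_{\mathcal{E}}$'' and ``the genuine low-energy subspace $S_{\leq\Delta}(H')$'' must be reconciled. Since $\mathcal{E}(\identity)\rho' = WW^\dagger\rho' = \rho'$, write $\rho' = W\sigma W^\dagger$ and set $\tilde{\rho}' = \tilde{W}\sigma\tilde{W}^\dagger$; then estimate in four steps. \emph{Step 1:} $\|\rho' - \tilde{\rho}'\|_1 \leq 2\eta$, by writing $W\sigma W^\dagger - \tilde{W}\sigma\tilde{W}^\dagger = (W-\tilde{W})\sigma W^\dagger + \tilde{W}\sigma(W-\tilde{W})^\dagger$ and applying $\|AXB\|_1 \leq \|A\|_\infty\|X\|_1\|B\|_\infty$ with the isometries of unit norm and $\|\sigma\|_1 = 1$; conjugating by the unitary $e^{-iH't}$ preserves trace norm, so swapping $\rho'$ for $\tilde{\rho}'$ costs $2\eta$. \emph{Step 2:} $\tilde{\rho}'$ is supported on the spectral subspace $S_{\leq\Delta}(H')$, which is $H'$-invariant, so $e^{-iH't}\tilde{\rho}'e^{iH't} = e^{-iH'|_{\leq\Delta}t}\tilde{\rho}'e^{iH'|_{\leq\Delta}t}$ exactly. \emph{Step 3:} from $\|H'|_{\leq\Delta} - \tilde{\mathcal{E}}(H)\|_\infty \leq \epsilon$ and the elementary bound $\|e^{-iAt} - e^{-iBt}\|_\infty \leq \|A-B\|_\infty\, t$ (via $e^{-iAt} - e^{-iBt} = -i\int_0^t e^{-iA(t-s)}(A-B)e^{-iBs}\,ds$), replacing the two evolutions one at a time gives $\|e^{-iH'|_{\leq\Delta}t}\tilde{\rho}'e^{iH'|_{\leq\Delta}t} - e^{-i\tilde{\mathcal{E}}(H)t}\tilde{\rho}'e^{i\tilde{\mathcal{E}}(H)t}\|_1 \leq 2\epsilon t$. \emph{Step 4:} using $(WAW^\dagger)^k = WA^kW^\dagger$ one checks $e^{-i\tilde{\mathcal{E}}(H)t}\tilde{\rho}'e^{i\tilde{\mathcal{E}}(H)t} = \tilde{W}\tau\tilde{W}^\dagger$ and $e^{-i\mathcal{E}(H)t}\rho'e^{i\mathcal{E}(H)t} = W\tau W^\dagger$ with the common state $\tau = e^{-i(H\otimes P + \overline{H}\otimes Q)t}\sigma\, e^{i(H\otimes P + \overline{H}\otimes Q)t}$, so the Step 1 estimate again costs $2\eta$. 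Summing the four contributions yields $2\epsilon t + 4\eta$. The only genuine care needed is in Steps 1 and 4 (tracking which operator norms are unit) and in Step 2 (spectrality of $S_{\leq\Delta}(H')$); the rest is bookkeeping.
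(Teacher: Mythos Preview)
The paper does not prove this theorem at all: it is stated as a quotation from~\cite{cubitt:2017} and no argument is given. Your proposal is a correct self-contained proof along the standard lines of that reference (Weyl for the spectrum, splitting the partition function at the cutoff, and an isometry/Duhamel chain for the dynamics), so there is nothing to compare against in the present paper.
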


\subsubsection{Perturbative gadgets} \label{perturbative_simulations}

The following lemmas were shown in~\cite{Bravyi:2017}, and can be used to construct simulations perturbatively. See \cref{appendix_6} for further details.

Let $\mathcal{H}$ be a Hilbert space decomposed as $\mathcal{H} = \mathcal{H}_- \oplus \mathcal{H}_+$. Let $\Pi_\pm$ be the projectors onto $\mathcal{H}_\pm$. For arbitrary operator $M$ define $M_{++} = \Pi_+M\Pi_+$, $M_{--} = \Pi_-M\Pi_-$, $M_{+-} = \Pi_+M\Pi_-$, and $M_{-+} = \Pi_-M\Pi_+$.
Consider an unperturbed Hamiltonian $H = \Delta H_0$, where $H_0$ is block-diagonal with respect to the split $\mathcal{H} = \mathcal{H}_- \oplus \mathcal{H}_+$, $(H_0)_{--} = 0$, $\lambda_{\text{{min}}}\left((H_0)_{++} \right) \geq 1$.

\begin{lemma}[Second order simulation~\cite{Bravyi:2017}] \label{second_order} Let $V = H_1 + \Delta^{\frac{1}{2}}H_2$ be a perturbation acting on the same space as $H_0$ such that $\max(||H_1||,||H_2||) \leq \Lambda$; $H_1$ is block diagonal with respect to the split $\mathcal{H} = \mathcal{H}_- \oplus \mathcal{H}_+$ and $(H_2)_{--} = 0$.
  Suppose there exists an isometry $W$ such that $\text{\emph{Im}}(W) = \mathcal{H}_-$ and:
  \begin{equation} \label{second_order_eq}
    || W H_{\text{\emph{target}}} W^\dagger - (H_1)_{--} + (H_2)_{-+}H_0^{-1}(H_2)_{+-}||_{\infty} \leq \frac{\epsilon}{2}
  \end{equation} Then $\tilde{H} = H + V$ $(\frac{\Delta}{2}, \eta, \epsilon)$ simulates $H_{\text{\emph{target}}}$, provided that $\Delta \geq O(\frac{\Lambda^6}{\epsilon^2} + \frac{\Lambda^2}{\eta^2})$.
\end{lemma}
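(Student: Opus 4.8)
The plan is to reconstruct this from the self-energy (Schrieffer--Wolff) method for Hamiltonians with a large spectral gap, and then to invoke the abstract simulation criterion behind \cref{approxsim}. Write $\tilde H = \Delta H_0 + V$ with $V = H_1 + \Delta^{1/2} H_2$, and recall that $H_0$ is block-diagonal with $(H_0)_{--} = 0$ and $(H_0)_{++} \geq \identity$, so the unperturbed spectrum is $\{0\}$ on $\mathcal H_-$ together with eigenvalues $\geq \Delta$ on $\mathcal H_+$. The low-energy physics of $\tilde H$ is captured by the self-energy $\Sigma_-(z) := z\Pi_- - \big(\Pi_-(z-\tilde H)^{-1}\Pi_-\big)^{-1}$, evaluated for $z$ in a window of width $\sim\Delta/2$ around $0$; the underlying fact (the self-energy lemma of Bravyi--DiVincenzo--Loss--Terhal, as used in \cite{Bravyi:2017}) is that if $\Sigma_-(z)$ stays $\tfrac\epsilon2$-close in operator norm to a fixed Hermitian operator $H_{\mathrm{eff}}$ uniformly on that window, and the off-diagonal coupling is small compared with the gap, then $\tilde H$ is a $(\Delta/2,\eta,\epsilon)$-simulation of $H_{\mathrm{eff}}$. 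So it suffices to show $\Sigma_-(z) \approx (H_1)_{--} - (H_2)_{-+}H_0^{-1}(H_2)_{+-}$ and then feed in the hypothesis \cref{second_order_eq}.

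First I would expand $\Sigma_-(z)$ as a norm-convergent series in $V$ via the resolvent identity, organised by the number of $H_1$- and $\Delta^{1/2}H_2$-vertices separated by high-energy resolvents $\Pi_+(z - \Delta H_0)^{-1}\Pi_+$. The order-zero term vanishes because $(H_0)_{--}=0$. The order-one term is $\Pi_- V\Pi_- = (H_1)_{--}$, since $H_1$ is block-diagonal and $(H_2)_{--}=0$ kill the $\Delta^{1/2}H_2$ piece. In the order-two term only the $H_2$-parts survive the off-diagonal projections, and the prefactor $(\Delta^{1/2})^2=\Delta$ exactly cancels the $\Delta^{-1}$ from the resolvent, leaving the $O(1)$ operator $-(H_2)_{-+}H_0^{-1}(H_2)_{+-}$ up to a correction of size $O(|z|\Lambda^2/\Delta)$. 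Hence $\Sigma_-(z) = (H_1)_{--} - (H_2)_{-+}H_0^{-1}(H_2)_{+-} + (\text{error})$.

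It then remains to bound the error — the $z$-dependent piece of the order-two term plus the full order-$\geq3$ tail. A term with $a$ vertices $H_1$ and $b$ vertices $\Delta^{1/2}H_2$ carries $a+b-1$ resolvents, each $O(1/\Delta)$ for $|z|\lesssim\Delta/2$, and a norm prefactor $\Lambda^a(\Delta^{1/2}\Lambda)^b$, so it scales as $\Lambda^{a+b}\Delta^{1-a-b/2}$. The slowest-decaying higher-order contribution is $a=0$, $b=3$, of size $O(\Lambda^3/\sqrt\Delta)$, and summing the geometric tail (once $\Delta\gtrsim\Lambda^2$ so the series converges) keeps the total at $O(\Lambda^3/\sqrt\Delta)$; the $z$-dependent correction to order two is $O(\Lambda^2/\sqrt\Delta)$ on the window and is absorbed. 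Demanding this be $\leq\epsilon/2$ forces $\Delta=\Omega(\Lambda^6/\epsilon^2)$, while demanding the rotation between $\mathrm{Im}(W)=\mathcal H_-$ and the true low-energy subspace of $\tilde H$, which is $O(\|V_{+-}\|/\Delta)=O(\Lambda/\sqrt\Delta)$, be $\leq\eta$ forces $\Delta=\Omega(\Lambda^2/\eta^2)$; together these give $\Delta\geq O(\Lambda^6/\epsilon^2 + \Lambda^2/\eta^2)$. Combining the error bound with the $\tfrac\epsilon2$ from \cref{second_order_eq} gives $\|\Sigma_-(z) - W H_{\mathrm{target}} W^\dagger\|_\infty \leq \epsilon$ on the window, and the self-energy lemma converts this into the asserted $(\Delta/2,\eta,\epsilon)$-simulation, whose physical consequences are those of \cref{physical-properties}.

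The main obstacle is making the tail estimate genuinely rigorous: one must prove norm-convergence of the expansion uniformly over the whole width-$\Delta/2$ spectral window rather than just at $z=0$, control the combinatorial growth of the number of terms at order $k$ (a $\sim c^k$ factor from the $\pm$ block bookkeeping, which is what makes the bare convergence threshold scale like $\Delta\gtrsim\Lambda^2$ before the $\epsilon$-refinement), and check that the resulting $\Sigma_-(z)$ is Hermitian with the spectrum needed to enter \cref{approxsim,physical-properties}. In practice all of this is exactly the content of the Bravyi self-energy lemma, so the real work of the proof is verifying that the hypotheses here — $(H_0)_{--}=0$, $\lambda_{\min}((H_0)_{++})\geq1$, $H_1$ block-diagonal, $(H_2)_{--}=0$, and $\max(\|H_1\|,\|H_2\|)\leq\Lambda$ — are precisely those under which that lemma applies, and that its output, specialised to $H_{\mathrm{eff}} = (H_1)_{--} - (H_2)_{-+}H_0^{-1}(H_2)_{+-}$, coincides with the left-hand side of \cref{second_order_eq}.
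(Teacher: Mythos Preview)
The paper does not prove this lemma at all: it is quoted verbatim as a result from~\cite{Bravyi:2017} (see the sentence ``The following lemmas were shown in~\cite{Bravyi:2017}'' preceding \cref{second_order,third_order}), and the paper merely uses it as a black box when constructing the qudit perturbation gadgets in \cref{appendix_6}. Your self-energy/Schrieffer--Wolff sketch is the correct underlying argument and matches the approach in the cited reference, so there is no discrepancy to discuss.
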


\begin{lemma}[Third order simulation~\cite{Bravyi:2017}] \label{third_order}
 Let $V = H_1 + \Delta^{\frac{1}{3}} H_1' + \Delta^{\frac{2}{3}}H_2$ be a perturbation acting on the same space as $H_0$ such that $\max(||H_1||,||H_1'||,||H_2||) \leq \Lambda$; $H_1$ and $H_1'$ are block diagonal with respect to the split $\mathcal{H} = \mathcal{H}_- \oplus \mathcal{H}_+$ and $(H_2)_{--} = 0$.
  Suppose there exists an isometry $W$ such that $\text{\emph{Im}}(W) = \mathcal{H}_-$ and:
  \begin{equation} \label{third_order_eq1}
    || W H_{\text{\emph{target}}} W^\dagger - (H_1)_{--} + (H_2)_{-+}H_0^{-1}(H_2)_{++}H_0^{-1}(H_2)_{+-}||_{\infty} \leq \frac{\epsilon}{2}
  \end{equation} and also that:
  \begin{equation} \label{third_order_eq2}
    (H_1')_{--} = (H_2)_{-+}H_0^{-1}(H_2)_{+-}
  \end{equation} Then $\tilde{H} = H + V$ $(\frac{\Delta}{2}, \eta, \epsilon)$ simulates $H_{\text{\emph{target}}}$, provided that $\Delta \geq O(\frac{\Lambda^{12}}{\epsilon^3} + \frac{\Lambda^3}{\eta^3})$.
\end{lemma}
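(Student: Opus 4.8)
The plan is to prove this the way all perturbation-gadget lemmas are established: via the self-energy (low-energy effective Hamiltonian) expansion for $\tilde H = \Delta H_0 + V$, combined with the general theorem of~\cite{Bravyi:2017} (see also~\cite{cubitt:2017}) that turns a uniform bound on the self-energy into a simulation in the sense of~\cref{approxsim}. (An equivalent route is a Schrieffer--Wolff transformation carried out order by order, but the self-energy bookkeeping is cleaner for gadgets.) For a spectral parameter $z$ I would work with the Schur complement of the resolvent on the low-energy block, $\Sigma_-(z) = (V)_{--} + (V)_{-+}\bigl(z - \Delta(H_0)_{++} - (V)_{++}\bigr)^{-1}(V)_{+-}$. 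The first step is the analytic setup: since $\lVert V\rVert = O(\Delta^{2/3}\Lambda)$, for $|z|\le\Delta/2$ the bare high-energy resolvent satisfies $\lVert(z-\Delta(H_0)_{++})^{-1}\rVert \le 2/\Delta$, and because $\lVert V\rVert/\Delta\to 0$ the Neumann series $\bigl(z - \Delta(H_0)_{++} - (V)_{++}\bigr)^{-1} = \sum_{n\ge 0}\bigl[(z-\Delta(H_0)_{++})^{-1}(V)_{++}\bigr]^{n}(z-\Delta(H_0)_{++})^{-1}$ converges; I would also record $(z-\Delta(H_0)_{++})^{-1} = -\tfrac{1}{\Delta}(H_0)_{++}^{-1} + O(z/\Delta^2)$.

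The second step is the power-counting. Substituting $V = H_1 + \Delta^{1/3}H_1' + \Delta^{2/3}H_2$ and using that $H_1,H_1'$ are block-diagonal (so $(V)_{-+} = \Delta^{2/3}(H_2)_{-+}$, likewise for $(V)_{+-}$) and that $(H_2)_{--}=0$ (so $(V)_{--} = (H_1)_{--} + \Delta^{1/3}(H_1')_{--}$), a Dyson term with $n$ resolvent insertions taken in its dominant channel --- every internal $(V)_{++}$ replaced by $\Delta^{2/3}(H_2)_{++}$ --- scales as $\Delta^{2/3 - n/3}$. Thus $n=1$ gives the unique divergence $-\Delta^{1/3}(H_2)_{-+}H_0^{-1}(H_2)_{+-}$; $n=2$ gives the $O(1)$ term $-(H_2)_{-+}H_0^{-1}(H_2)_{++}H_0^{-1}(H_2)_{+-}$; and $n\ge 3$ terms, together with all subleading channels (an $H_1$ or $H_1'$ on an internal line) and all $z$-corrections of the resolvent, are $O(\poly(\Lambda)/\Delta^{1/3})$ uniformly on the relevant window $|z| = O(\lVert H_{\text{target}}\rVert)$. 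Hypothesis~\eqref{third_order_eq2} is precisely what cancels the $\Delta^{1/3}$ divergence against the $\Delta^{1/3}(H_1')_{--}$ piece of $(V)_{--}$, leaving $\Sigma_-(z) = (H_1)_{--} - (H_2)_{-+}H_0^{-1}(H_2)_{++}H_0^{-1}(H_2)_{+-} + O(\poly(\Lambda)/\Delta^{1/3})$; hypothesis~\eqref{third_order_eq1} then gives $\lVert\Sigma_-(z) - WH_{\text{target}}W^\dagger\rVert \le \epsilon/2 + O(\poly(\Lambda)/\Delta^{1/3})$.

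The third step closes the argument with the general self-energy-to-simulation theorem. A Riesz-projector contour integral of the resolvent around the low-energy window bounds $\lVert\Pi_{\le\Delta/2}(\tilde H) - \Pi_-\rVert = O(\lVert V\rVert/\Delta) = O(\Lambda/\Delta^{1/3})$, which both produces the perturbed encoding isometry $\tilde W$ with $\lVert W - \tilde W\rVert \le \eta$ once $\Delta \ge O(\Lambda^3/\eta^3)$, and, combined with the self-energy bound, upgrades the estimate above to $\lVert\tilde H|_{\le\Delta/2} - \tilde{\mathcal E}(H_{\text{target}})\rVert \le \epsilon$ --- the two conditions of~\cref{approxsim}. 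Matching the error budget fixes the threshold: the dominant perturbative remainder is the $n=3$ Dyson term, of size $O(\Lambda^4/\Delta^{1/3})$, so $\Delta \ge O(\Lambda^{12}/\epsilon^3)$ forces it below $\epsilon/2$; together with the $\Lambda^3/\eta^3$ condition this is exactly the stated $\Delta \ge O(\Lambda^{12}/\epsilon^3 + \Lambda^3/\eta^3)$.

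The step I expect to be the main obstacle is this last one: converting perturbative control of $\Sigma_-(z)$ --- an operator-valued analytic function --- into honest spectral and, especially, \emph{eigenvector} statements, and extracting isometry closeness with the claimed $\Lambda^3/\eta^3$ scaling uniformly in the low-energy window. Unlike a genuine effective Hamiltonian, $\Sigma_-(z)$ coincides with the true low-energy operator only at its self-consistent energies $z = \lambda_i$, so one cannot merely truncate the Dyson series; the real quantitative work lives in resolvent and contour estimates that are packaged once and for all in the meta-theorems of~\cite{Bravyi:2017,cubitt:2017}, which is why the third-order gadget is imported here as a black box. By contrast the power-counting of the second step --- conceptually the heart of why the counterterm $\Delta^{1/3}H_1'$ subject to~\eqref{third_order_eq2} is needed --- is routine bookkeeping once the analytic bounds of the first step are in hand.
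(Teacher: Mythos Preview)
The paper does not prove this lemma; it is imported verbatim from~\cite{Bravyi:2017} (the preamble to \cref{second_order,third_order} reads ``The following lemmas were shown in~\cite{Bravyi:2017}''), so there is no in-paper proof to compare against. Your self-energy/Dyson-series sketch is the standard route used in that reference and in~\cite{cubitt:2017}: the power-counting that identifies the $\Delta^{1/3}$ divergence cancelled by~\eqref{third_order_eq2}, the surviving $O(1)$ third-order term matching~\eqref{third_order_eq1}, and the contour/resolvent argument converting self-energy control into a $(\Delta/2,\eta,\epsilon)$-simulation are exactly how the cited works establish the result, and your identification of the $\Lambda^{12}/\epsilon^3$ and $\Lambda^3/\eta^3$ thresholds from the $n=3$ remainder and the projector perturbation is correct.
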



We derive a number of qudit perturbation gadgets (based on qubit perturbation gadgets from~\cite{oliveira:2005}) for use in our construction. Using \cref{second_order,third_order} we can show that all of these gadgets are simulations (for appropriate choices of $\Delta$).
Details are given in \cref{appendix_6}.
The main results are collected here:

\noindent\paragraph{Qudit subdivision gadget}\\
The qudit subdivision gadget is used to simulate a $k$-local interaction by two $\ceil*{\frac{k}{2}}+1$-local interactions, by introducing a mediator qudit.
The resulting interaction pattern is shown in \cref{subdivision-fig}.

\begin{figure}[H] \centering
  \begin{tikzpicture} \filldraw[black] (0,0) circle (2pt) node[below]{$A$}; \filldraw[black] (3,0) circle (2pt) node[below]{$B$}; \draw (0,0) -- (3,0) node[midway, above]{$M_1$}; \filldraw[black] (5,0) circle (2pt) node[below]{A}; \draw (7,0) circle (2pt) node[below]{$w$}; \filldraw[black] (9,0) circle (2pt) node[below]{$B$}; \draw (5,0) -- (6.9,0) node[midway, above]{$M_2$}; \draw (7.1,0) -- (9,0) node[midway, above]{$M_3$};
  \end{tikzpicture}
  \caption{Subdivision gadget.
    The $k$-local interaction on the left is simulated by the two $\ceil*{\frac{k}{2}}+1$-local interactions on the right by introducing a mediator qudit, $w$.
    The interactions are given by $M_1 = P_{A}\otimes P_{B} + P_{A}^\dagger\otimes P_{B}^\dagger$, $M_2 = P_{A}\otimes X_w + P_A^\dagger \otimes X_w^\dagger$, and $M_3 = P_B\otimes X_w^\dagger + P_B^\dagger\otimes X_w$.}
  \label{subdivision-fig}
\end{figure}
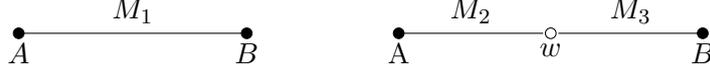

\noindent\paragraph{Qudit 3-2 gadget}\\
The 3-2 gadget is used to simulate a 3-local interaction with six 2-local interactions, by introducing a mediator qudit.
The resulting interaction pattern is shown in \cref{three-two-fig}.

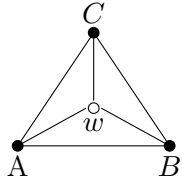
\begin{figure}[H] \centering
  \begin{tikzpicture} \filldraw[black] (0,-0.5) circle (2pt) node[below]{A}; \draw (1,0) circle (2pt) node[below]{$w$}; \filldraw[black] (2,-0.50) circle (2pt) node[below]{$B$}; \draw (0,-0.5) -- (0.9,0); \draw (1.1,0) -- (2,-0.5) ; \filldraw[black] (1,1) circle (2pt) node[above]{$C$}; \draw (0,-0.5)--(1,1); \draw (2,-0.5)--(1,1); \draw (1,1) -- (1,0.1); \draw (0,-0.5) -- (2,-0.5);
  \end{tikzpicture}
  \caption{3-2 gadget: The three body interaction between $A$, $B$ and $C$ $\left(P_A\otimes P_B\otimes P_C + P_A^\dagger \otimes P_B^\dagger \otimes P_C^\dagger \right)$ is simulated by the interaction pattern shown in the figure.}
  \label{three-two-fig}
\end{figure}

\noindent\paragraph{Qudit crossing gadget}\\
The crossing gadget is used to remove crossings in an interaction graph by introducing a mediator qudit.
The resulting interaction pattern is shown in \cref{crossing-fig}.

\begin{figure}[H] \centering
  \begin{tikzpicture} \filldraw[black] (0.5,-0.5) circle (2pt) node[below]{$A$}; \filldraw[black] (2.5,-0.5) circle (2pt) node[below]{$B$}; \draw (0.5,-0.5) -- (2.5,-2.5); \draw (2.5,-0.5) -- (0.5,-2.5); \filldraw[black] (0.5,-2.5) circle (2pt) node[below]{$C$}; \filldraw[black] (2.5,-2.5) circle (2pt) node[below]{$D$}; \filldraw[black] (5.5,-0.5) circle (2pt) node[above]{$A$}; \filldraw[black] (7.5,-0.5) circle (2pt) node[above]{$B$}; \draw (5.5,-0.5) -- (6.45,-1.45) ; \draw (7.5,-0.5) -- (6.55,-1.45); \draw (6.55,-1.55) -- (7.5,-2.5); \draw (6.45,-1.55) -- (5.5,-2.5); \draw (5.5,-0.5) -- (7.5,-0.5); \draw (5.5,-0.5) -- (5.5,-2.5); \draw (5.5,-2.5) -- (7.5,-2.5); \draw (7.5,-0.5) -- (7.5,-2.5); \filldraw[black] (5.5,-2.5) circle (2pt) node[below]{$C$}; \filldraw[black] (7.5,-2.5) circle (2pt) node[below]{$D$}; \draw (6.5,-1.5) circle(2pt) node[left]{$w$};
  \end{tikzpicture}
  \caption{Crossing gadget.
    The interaction pattern on the left is simulated by the interaction pattern on the right.}
  \label{crossing-fig}
\end{figure}
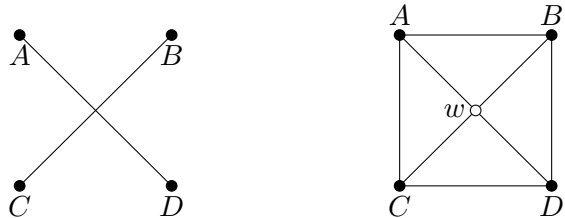

\noindent\paragraph{Qudit fork gadget}\\
The fork gadget is used to reduce the degree of a vertex in the interaction graph by introducing a mediator qudit.
The resulting interaction pattern is shown in \cref{fork-fig}.

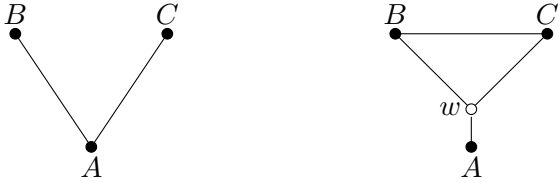
\begin{figure}[H] \centering
  \begin{tikzpicture} \filldraw[black] (1.5,0) circle (2pt) node[below]{$A$}; \filldraw[black] (0.5,1.5) circle (2pt) node[above]{$B$}; \filldraw[black] (2.5,1.5) circle (2pt) node[above]{$C$}; \draw (1.5,0) -- (0.5,1.5); \draw (1.5,0) -- (2.5,1.5); \filldraw[black] (6.5,0) circle (2pt) node[below]{$A$}; \filldraw[black] (5.5,1.5) circle (2pt) node[above]{$B$}; \filldraw[black] (7.5,1.5) circle (2pt) node[above]{$C$}; \draw (6.5,0.5) circle (2pt) node[left]{$w$}; \draw (6.5,0) -- (6.5,0.4); \draw (6.55,0.55) -- (7.5,1.5); \draw (6.45,0.55) -- (5.5,1.5); \draw (7.5,1.5) -- (5.5,1.5);
  \end{tikzpicture}
  \caption{Fork gadget.
    The interaction pattern on the left is simulated by the interaction pattern on the right.}
  \label{fork-fig}
\end{figure}

\noindent\paragraph{Qudit triangle gadget}\\
The qudit triangle gadget is formed by first applying the qudit subdivision gadget, then the qudit fork gadget, in the same way as it is formed for qubits in~\cite{oliveira:2005}.

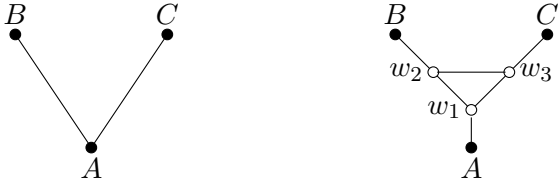
\begin{figure}[H] \centering
  \begin{tikzpicture} \filldraw[black] (1.5,0) circle (2pt) node[below]{$A$}; \filldraw[black] (0.5,1.5) circle (2pt) node[above]{$B$}; \filldraw[black] (2.5,1.5) circle (2pt) node[above]{$C$}; \draw (1.5,0) -- (0.5,1.5); \draw (1.5,0) -- (2.5,1.5); \filldraw[black] (6.5,0) circle (2pt) node[below]{$A$}; \filldraw[black] (5.5,1.5) circle (2pt) node[above]{$B$}; \filldraw[black] (7.5,1.5) circle (2pt) node[above]{$C$}; \draw (6.5,0.5) circle (2pt) node[left]{$w_1$}; \draw (6.5,0) -- (6.5,0.4); \draw (6.55,0.55) -- (6.95,0.95); \draw (6.45,0.55) -- (6.05,0.95); \draw (7.5,1.5) -- (7.05,1.05); \draw (5.5,1.5) -- (5.95,1.05); \draw (6,1) circle (2pt) node[left]{$w_2$}; \draw (7,1) circle (2pt) node[right]{$w_3$}; \draw (6.05,1) -- (6.95,1);
  \end{tikzpicture}
  \caption{Triangle gadget.
    The interaction pattern on the left is simulated by the interaction pattern on the right by first applying the subdivision gadget to edges $AB$ and $AC$, and then applying the fork gadget to qudit $A$.}
\end{figure}

The 3-2 gadget is a third order simulation. The other gadgets are second order simulations.

In~\cite{oliveira:2005} it is demonstrated that the qubit perturbation gadgets can be used at many places in an interaction graph in parallel, and that they do not interact with each other.
The same arguments follow for the qudit perturbation gadgets introduced here.
%
%

\section{Full technical details} \label{technical}

\subsection{General construction} \label{general_construction}

In this section we demonstrate the general procedure for constructing a HQECC using Coxeter groups and (pseudo-)perfect stabilizer tensors with particular properties.
In particular in \cref{main_theorem} we prove our main result: a full holographic duality between quantum many-body models in 3D hyperbolic space and models living on its 2D boundary.
In \cref{example_1,example_2} we construct two examples of sets of Coxeter groups and tensors which have the required properties.

\subsubsection{Notation}

Let $(W,S)$ be a Coxeter system with Coxeter polytope $P \subseteq \hyper$.
$F_a$ denotes the face of $P$ corresponding to the generator $s_a \in S$.
$E_{ab}$ denotes the edge of $P$ between $F_a$ and $F_b$.

$P^{(w)}$ denotes the polyhedral cell in the tessellation of $\hyper$ which corresponds to element $w$ of the Coxeter group.
Similarly $F_a^{(w)}$ and $E_{ab}^{(w)}$ refer to faces / edges of $P^{(w)}$.
$F_a^{A}$ and $E_{ab}^{A}$ refer to specific faces / edges in the tessellation of $\hyper$ which are shared by the polyhedral cells associated to the sets of elements $A \subseteq W$.

A bulk qudit which is associated to the polyhedral cell $P^{(w)}$ will be labelled by $q^{(w)}$.
A boundary qudit which is associated to the uncontracted tensor index through the face $F_a^{(w)}$ will be labelled by $q_a^{(w)}$.

\subsubsection{Holographic quantum error correcting codes}

The procedure we use to construct the tensor network is based on that in~\cite{Pastawski:2015}, where perfect tensors are embedded in tessellations of $\mathbb{H}^2$.
We take a Coxeter system $(W,S)$ with Coxeter polytope, $P \subseteq \mathbb{H}^3$ where $|S| = t-1$, so $P$ has $t-1$ faces.
Take the tessellation of $\mathbb{H}^3$ by $P$, and embed a (pseudo-)perfect tensor, $T$, with $t$ legs in each polyhedral cell.
$t-1$ legs of each tensor are contracted with legs of neighbouring tensors at shared faces of the polyhedra, and a logical, or input, qudit for the tensor network is associated with the uncontracted tensor leg in each polyhedral cell.
Cut off the tessellation at some radius $R$, and the uncontracted tensor legs on the boundary are the physical qudits of the tensor network.

A HQECC is defined as a tensor network composed of (pseudo)-perfect tensors which gives rise to an isometric map from bulk legs to boundary legs~\cite{Pastawski:2015}.
This is equivalent to requiring that the number of output indices from every tensor is greater than or equal to the number of input indices, where the input indices are the indices coming from the previous layer of the tessellation plus the logical index.

We are working in negatively curved geometry, so a majority of the tensors will have more output indices than input indices, but this doesn't guarantee it is true for every tensor.
For example, consider the triangulation of $\mathbb{H}^2$ with Schl\"{a}fli symbol $\{3,8\}$ (\cref{eight_three}).
This is the tiling which corresponds to the Coxeter diagram shown in \cref{eight_three_coxeter}.

It can be seen that there are triangular cells in the tessellation which share edges with two triangles from the previous layer, and only one in the subsequent layer.
If we put a four-index perfect tensor in each cell of this tessellation, then there would be some tensors with three input legs, and only one output leg.
These tensors would not be isometries, so it is not obvious that the overall tensor network would be an isometry.
In order to ensure that the tensor network is a HQECC we derive a condition to enforce that every tensor has at least as many output indices as input indices.
This is a sufficient condition for the tensor network to be a HQECC, but it may not be necessary.

\begin{figure} \centering \includegraphics[width=0.5\textwidth]{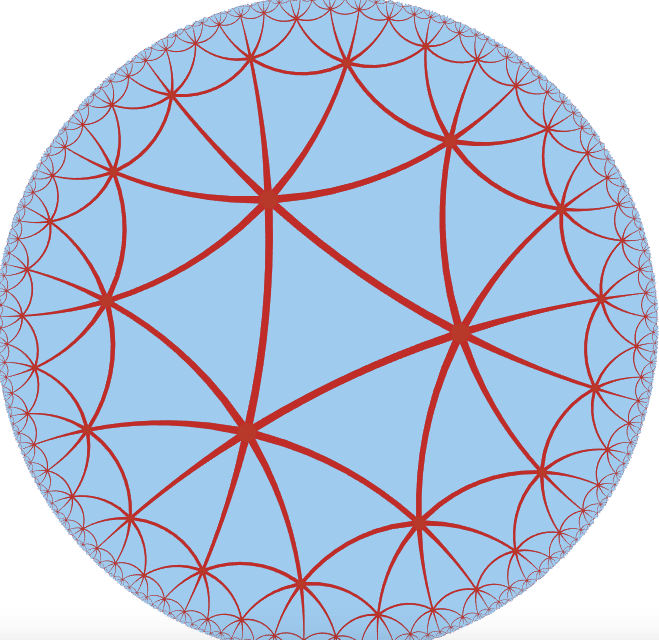}
  \caption{The triangulation of $\mathbb{H}^2$ with Schl\"{a}fli symbol $\{3,8\}$.
    There are triangular cells in the tessellation which share edges with two triangles from the previous layer, and only one in the subsequent layer.
    Figure produced via the software~\cite{Kaleidotile}.}
  \label{eight_three}
\end{figure}

\begin{figure} \centering
  \begin{tikzpicture} \draw(0,0) -- (1,1.5) node[midway,above]{4}; \draw(1,1.5) -- (2,0) node[midway,above]{4}; \draw(0,0)--(2,0) node[midway,below]{4}; \filldraw[black] (1,1.5) circle (2pt) node[anchor=west]{}; \filldraw[black] (0,0) circle (2pt) node[anchor=west]{}; \filldraw[black] (2,0) circle (2pt) node[anchor=west]{};
  \end{tikzpicture}
  \caption{The Coxeter diagram for the triangulation of $\mathbb{H}^2$ with Schl\"{a}fli symbol $\{3,8\}$} \label{eight_three_coxeter}
\end{figure}
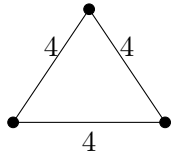

\begin{theorem} \label{isometry_thm} Consider a tensor network constructed as above, defined by Coxeter system $(W,S)$ and perfect tensor $T$ with $t$ indices.
  Define $\mathcal{F} = \{J \subseteq S \mid W_J \text{ is finite} \}$.
  The tensor network will be a HQECC, if $\forall J \in \mathcal{F}: |J| \leq \floor*{\frac{t-2}{2}}$.
\end{theorem}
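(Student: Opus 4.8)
The plan is to translate the geometric/combinatorial hypotheses into a layer-by-layer statement about tensor indices, and then run an induction over the layers of the tessellation, showing that at each step the partially-contracted tensor network is an isometry. First I would set up the layered structure of the honeycombing: fix the base cell $P^{(1)}$ at the origin and define layer $n$ to be the set of cells $P^{(w)}$ with word length $l_S(w) = n$ (equivalently, at word-metric distance $n$ from the base cell). For a cell $P^{(w)}$ in layer $n$, call a face \emph{inward} if it is shared with a cell in layer $n-1$ (or is the logical leg) and \emph{outward} if it is shared with a cell in layer $n+1$; the key quantity is $d^{\text{in}}(w) := \#\{\text{inward faces}\} $, and one wants $d^{\text{in}}(w) \leq \lceil\frac{t}{2}\rceil$ so that the $t$-index perfect tensor, viewed as a map from its $d^{\text{in}}(w)$ inward legs plus possibly the logical leg to the remaining legs, is an isometry (since a $t$-index perfect tensor is an isometry from any set of $\leq \lfloor t/2\rfloor$ legs to the complement).

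The heart of the argument is to show that the hypothesis $\forall J\in\mathcal{F}:\ |J|\le\lfloor\frac{t-2}{2}\rfloor$ implies the number of inward faces of every cell is at most $\lceil t/2\rceil - 1$ (leaving room for the logical leg). Here is where Coxeter combinatorics enters. A face $F_a^{(w)}$ of $P^{(w)}$ is inward precisely when reflecting across it decreases word length, i.e. when $s_a\in\mathcal{D}_L(w)$ (reflection in the facet $F_a$ of the cell $w$ corresponds to moving to the neighbouring cell $ws_a$ or $s_a w$, depending on convention; I would fix the convention so that adjacency through $F_a^{(w)}$ corresponds to the element obtained by the appropriate one-sided multiplication, and the descent set then counts exactly the faces leading toward the origin). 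By \cref{finite_descent} (Corollary 2.18 of Abramenko--Solomon), the set $\mathcal{D}_L(w)$ (resp.\ $\mathcal{D}_R(w)$) generates a \emph{finite} Coxeter group, so $\mathcal{D}_L(w)\in\mathcal{F}$, and therefore by hypothesis $|\mathcal{D}_L(w)| \le \lfloor\frac{t-2}{2}\rfloor$. Thus the number of inward faces is at most $\lfloor\frac{t-2}{2}\rfloor$, and adding the single logical leg gives at most $\lfloor\frac{t-2}{2}\rfloor + 1 \le \lceil\frac{t}{2}\rceil$ inward-type legs per tensor (checking the two parities of $t$ separately: if $t$ even, $\frac{t-2}{2}+1 = \frac{t}{2} = \lceil t/2\rceil$; if $t$ odd, $\frac{t-3}{2}+1 = \frac{t-1}{2} = \lfloor t/2\rfloor \le \lceil t/2\rceil$). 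Hence each perfect tensor is an isometry from its incoming legs (including logical) to its outgoing legs.

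Finally I would assemble the global isometry by induction on the number of layers $R$: the composition of isometries is an isometry, and contracting an outgoing leg of a layer-$n$ tensor with an incoming leg of a layer-$(n+1)$ tensor is exactly composition of the corresponding isometric maps (one has to be slightly careful that the legs of a layer-$n$ tensor not yet contracted are carried along as ``spectator'' indices, so the inductive object is an isometry from all bulk/logical legs in layers $\le n$ together with all still-uncontracted outgoing legs of layer $n$; tensoring with the identity on spectators preserves isometry). At $R$ layers this yields an isometry from all bulk legs to the uncontracted boundary legs, which is exactly the definition of a HQECC quoted before the theorem. I expect the main obstacle to be bookkeeping rather than conceptual: pinning down precisely which elements/faces count as ``inward'' for a cell (the correspondence between faces of $P^{(w)}$, generators $s_a$, and descent sets, and whether it is left or right descents that matter), and verifying the base case and the handling of the logical leg at the origin cell — plus confirming that no cell in the interior ever has \emph{all} its faces inward, which is what guarantees there is somewhere to contract outward and that the boundary at radius $R$ is nonempty. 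The use of \cref{finite_descent} to land in $\mathcal{F}$ is the one genuinely non-routine input, and everything else is careful assembly.
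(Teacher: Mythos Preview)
Your proposal is correct and follows essentially the same approach as the paper: layer the tessellation by word length, identify the inward faces of $P^{(w)}$ with the (right) descent set $\mathcal{D}_R(w)$, invoke \cref{finite_descent} to conclude $\mathcal{D}_R(w)\in\mathcal{F}$, and apply the hypothesis to bound the number of input legs (descent faces plus the single logical leg) by $\lfloor t/2\rfloor$. The paper's proof is considerably terser because it takes as already established (in the paragraph preceding the theorem) the equivalence ``the tensor network is an isometry $\Longleftrightarrow$ every individual tensor has at least as many output legs as input legs,'' so the inductive assembly you spell out is absorbed into that prior statement; your explicit layer-by-layer composition argument is simply a proof of that equivalence.
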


\begin{proof} If we order the tensors into layers labelled by the value of the length function $l_S(w)$ at $P^{(w)}$, and include the uncontracted tensor leg in each polyhedral cell as an input leg, then the number of input legs for a tensor embedded in $P^{(w)}$ is $\mathcal{D}_R(w) + 1$.
  By \cref{finite_descent}, $\mathcal{D}_R(w) \in \mathcal{F}$.
  Therefore the maximum number of input legs to any tensor in the tensor network is $\max(|J| \mid J \in \mathcal{F}) + 1$.
  We therefore require $\max(|J| \mid J \in \mathcal{F}) + 1 \leq \floor*{\frac{t}{2}}$.
\end{proof}

\cref{isometry_thm} gives a sufficient condition for every tensor in the tensor network to have at least as many output indices as input indices.

The requirements of \cref{isometry_thm} dictate that we will not be able to use qubit stabilizer tensors to construct HQECC in dimensions greater than two.
To see this recall that \cref{Vinberg} stated that $\{s_i \mid i \in I\}$ generates a finite Coxeter group if and only if $f = \cup_{i \in I} F_i $ is a codimension $|I|$ face of $P$.
In dimension $d$ there will exist codimension $d$ faces, so $\max(|J| \mid J \in \mathcal{F}) \geq d$.
We therefore require that $\floor*{\frac{t}{2}} \geq d+1$.
For $d \geq 3$ this enforces $t \geq 8$, and there are no qubit perfect tensors with $t > 6$~\cite{Gour:2010,Rains:1999,Huber:2017}.

HQECC which are constructed in this way inherit all the properties of the 2-dimensional HQECC constructed in~\cite{Pastawski:2015}.

We call HQECC constructed in this way from Coxeter honeycombings and perfect tensors ``Coxeter HQECCs''.

\subsubsection{Surface of the HQECC} Define the boundary of the HQECC as the faces in the tessellation which correspond to the uncontracted tensor legs.
More precisely:
\begin{definition} The boundary, M, of a Coxeter HQECC of radius $R$ is given by:
  \begin{equation}
    M = \bigcup_{F_a^{(w)}\in\mathcal{M}} F_{a}^{(w)}
  \end{equation} where $\mathcal{M} = \{ F_{a}^{(w)} \mid l_S(w) = R, s_a \in \mathcal{A}_R(w) \}$.
\end{definition}

The boundary of hyperbolic $n$-space is an $n-1$ dimensional sphere.
For our HQECC we are cutting off the tessellation of $\mathbb{H}^3$ at some finite radius $R$, but it is still possible to demonstrate that the boundary is homeomorphic to a 2-sphere.

In order to reason about the boundary we need two lemmas about edges in the tessellation of $\hyper$ by $P$:

\begin{lemma} \label{edge-lemma-1} Consider an edge, $E_{ab}^{A}$, in the tessellation of $\hyper$ by a Coxeter polytope, $P$.
  If $l_S(w_1) = l_S(w_2) = L$ for distinct $w_1,w_2\in A$, then $\mathcal{D}_R(w_1)$ and $\mathcal{D}_R(w_2)$ contain at least one of $s_a$ or $s_b$.
\end{lemma}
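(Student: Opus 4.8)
The plan is to use the geometric correspondence between codimension-$2$ faces of the tessellation and finite parabolic subgroups, together with the descent/ascent decomposition of the length function. An edge $E_{ab}^{A}$ of the tessellation corresponds, via \cref{Vinberg}, to the finite rank-$2$ Coxeter group $W_{\{a,b\}} = \langle s_a, s_b\rangle$ (a dihedral group $I_2^{(m_{ab})}$), and the set $A$ of group elements whose polyhedral cells $P^{(w)}$ meet along this edge is exactly a coset $w_0 W_{\{a,b\}}$ for some coset representative $w_0$; the cells around the edge are cyclically ordered by the dihedral group structure. So the first step is to make this identification precise: fix the minimal-length element $w_0$ of the coset, so that every $w \in A$ is $w_0 v$ with $v \in W_{\{a,b\}}$ and $l_S(w) = l_S(w_0) + l_S(v)$ (the coset representative has minimal length in its coset, and for a finite parabolic subgroup every coset element factors through it length-additively).

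Next I would analyze the length function restricted to the coset. Writing the dihedral group $W_{\{a,b\}}$ as $\{1, s_a, s_b, s_as_b, s_bs_a, \dots\}$, the lengths $l_S(w_0 v) = l_S(w_0) + l_S(v)$ take each value in $\{0, 1, \dots, m_{ab}\}$ (adding $l_S(w_0)$), with the value $0$ and the value $m_{ab}$ each attained by a single element and every intermediate value attained by exactly two elements (the two reduced words of that length in the dihedral group, one ending in $s_a$ and one ending in $s_b$). So if $w_1 \neq w_2$ both lie in $A$ with $l_S(w_1) = l_S(w_2) = L$, then writing $w_i = w_0 v_i$ we get $l_S(v_1) = l_S(v_2) = L - l_S(w_0) =: \ell$ with $v_1 \neq v_2$, which forces $\ell \notin \{0, m_{ab}\}$, and $\{v_1, v_2\}$ is exactly the pair of dihedral elements of length $\ell$: one reduced word of the form $\cdots s_a s_b s_a$ (ending in $s_a$) and the other $\cdots s_b s_a s_b$ (ending in $s_b$).

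The final step is to read off the descent sets. If $v_i$ has a reduced word ending in $s_a$, then $s_a \in \mathcal{D}_R(v_i)$, and since the factorization $w_i = w_0 v_i$ is length-additive, concatenating a reduced word for $w_0$ with that reduced word for $v_i$ gives a reduced word for $w_i$ ending in $s_a$, whence $s_a \in \mathcal{D}_R(w_i)$ (using the fact recorded at the end of \cref{combinatorics} that $s \in \mathcal{D}_R(w)$ iff some reduced word for $w$ ends in $s$); similarly for $s_b$. Since one of $v_1, v_2$ ends in $s_a$ and the other in $s_b$, each of $\mathcal{D}_R(w_1), \mathcal{D}_R(w_2)$ contains at least one of $s_a, s_b$ (in fact between the two of them both $s_a$ and $s_b$ appear). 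I expect the main obstacle to be the bookkeeping in the first step — rigorously justifying that the cells around an edge form a single coset of $W_{\{a,b\}}$ with length-additive factorization through the minimal coset representative — since this is the geometric-combinatorial input that makes the rest routine; the dihedral-group length computation and the reduced-word concatenation are then straightforward.
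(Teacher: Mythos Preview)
Your proposal is correct but takes a different route from the paper. The paper argues directly from the Deletion Condition (\cref{deletion}): writing $w_2 = w_1 \langle s_a, s_b\rangle^x$ for minimal $x\geq 1$, the word obtained by concatenating a reduced word for $w_1$ with $\langle s_a, s_b\rangle^x$ has $L+x$ letters but represents an element of length $L$, so it is non-reduced; the paper then tracks repeated deletions, rules out the cases where both deleted letters come from the same part, and concludes that some $s_a$ or $s_b$ survives at the right end of a reduced word for $w_2$ (and symmetrically for $w_1$).

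Your approach instead invokes the minimal-coset-representative theorem for parabolic subgroups: every coset $wW_J$ contains a unique shortest element $w_0$, and $l_S(w_0 v) = l_S(w_0) + l_S(v)$ for all $v \in W_J$. Granting this, the length profile on $A$ is just the dihedral length profile on $W_{\{a,b\}}$ shifted by $l_S(w_0)$, and the descent claim drops out of the structure of reduced words in $I_2^{(m_{ab})}$. This is cleaner and in fact yields essentially all of \cref{edge_lemma} at once (you even observe the slightly sharper fact that one of $w_1,w_2$ has $s_a$ as a descent and the other $s_b$). The tradeoff is that length-additivity through the minimal coset representative is not among the preliminaries recorded in \cref{combinatorics}; it is a standard result (e.g.\ Humphreys or Bj\"orner--Brenti) but itself requires a short Deletion/Exchange argument to establish. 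The paper's proof is more self-contained, using only \cref{deletion} as stated, at the cost of an ad hoc argument that must then be partly redone in \cref{edge_lemma}. Your own caveat is on point: the ``bookkeeping'' in your first step is precisely the coset-representative theorem, and is the substantive input rather than a routine detail.
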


\begin{proof} Recall that an edge $E_{ab}$ corresponds to the finite Coxeter subgroup generated by $s_a$ and $s_b$: $\langle s_a,s_b\rangle = \{\langle s_a,s_b\rangle^x | x\in[0,m_{ab})\}$, where $\langle s_a,s_b\rangle^x$ denotes a string of alternating $s_a$ and $s_b$ of length $x$.
  $A$ is set of Coxeter group elements corresponding to the polyhedra that meet at the common edge $E_{ab}^A$, so $A = \{ws | s\in \langle s_a,s_b\rangle\}$ for any fixed element $w\in A$.
  Therefore, we have that $w_2 = w_1\langle s_a,s_b\rangle^x$ for some $x\in[1,m_{ab})$; we take $x$ to be the minimum such value.
  Since $l_S(w_1\langle s_a,s_b\rangle^x) = l_S(w_2) + x > l_S(w_2)$, the deletion condition (\cref{deletion}) implies that there are two generators in the word $w_1\langle s_a,s_b\rangle^x$ which we can delete to get a shorter word for $w_2$.

  By minimality of $x$, they cannot both be deleted from the $\langle s_a,s_b\rangle^x$ part of this word.
  If they were both deleted from the $w_1$ part, so that $w_2 = s_1\dots \hat{s}_i\dots \hat{s}_j\dots s_{L}\langle s_as_b\rangle^x$, we would have $w_1 = w_2 (\langle s_a,s_b\rangle^x)^{-1} = w_2 \langle s_b,s_a\rangle^x = s_1\dots\hat{s}_i\dots\hat{s}_j\dots s_L$ which has length $L-2$, contradicting $l_S(w_1) = L$.
  Therefore, one generator must be deleted from the $w_1$ part, the other from $\langle s_a,s_b\rangle^x$.
  Thus $w_2 = s_1\dots\hat{s}_i\dots s_r\langle s_a,s_b\rangle^{x-1}$.

  This word for $w_2$ has length $L+x-2$.
  By the deletion condition, we must be able to delete a further $x-2$ generators to reach a reduced word for $w_2$.
  But $\langle s_a,s_b\rangle^{x-1}$ contains $x-1$ generators, so at least one of these must remain.
  Thus either $w_2 = u s_a$ or $w_w = u s_b$ for some $u \in A$ of length $l_S(u) = L-1$.
  Hence at least one of $s_a$ or $s_b$ is in $\mathcal{D}_R(w_2)$.

  The $w_1$ case follows by an analogous argument.
\end{proof}

\begin{lemma} \label{edge_lemma} Consider an edge, $E_{ab}^{A}$, in the tessellation of $\hyper$ by a Coxeter polytope, $P$.
  The set of elements $A$ associated with the polyhedral cells that share the edge $E_{ab}^{A}$ has the following properties:
  \begin{enumerate}[(i)]
  \item There is a unique minimum length element $w_{\min} \in A$ which has length, $l_S(w_{\min}) = r_{\min}$.
    \label{edge_lemma:i}
  \item For $0 \leq x < m_{ab}$, $l_S(\wmin \langle s_a,s_b\rangle^{x+1}) = l_S(\wmin \langle s_a,s_b\rangle^{x})+1$.
    \label{edge_lemma:ii}
  \item For $m_{ab} \leq x < 2m_{ab}$, $l_S(\wmin \langle s_a,s_b\rangle^{x+1}) = l_S(\wmin \langle s_a,s_b\rangle^{x})-1$.
    \label{edge_lemma:iii}
  \item There is a unique maximum length element $w_{\max} \in A$ which has length $l_S(w_{\max}) = r_{\min} + m_{ab}$.
    \label{edge_lemma:iv}
  \item For $r_{\min} < i < r_{\min} + m_{ab}$ there are exactly two elements $w_i,w_i' \in A$ which satisfy $l_S(w_i) = l_S(w_i') = i$.
    \label{edge_lemma:v}
  \end{enumerate} where $\langle s_a,s_b\rangle^x$ denotes a string of alternating $s_a$ and $s_b$ of length $x$ (i.e.
  $\langle s_a,s_b\rangle^3 = s_as_bs_a$).
\end{lemma}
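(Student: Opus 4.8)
The plan is to recognise that the polyhedra meeting along the edge $E_{ab}^{A}$ form a \emph{left coset} of the rank-$2$ parabolic subgroup $W_{ab} := \langle s_a, s_b\rangle$, and then to transport all five statements to the explicit combinatorics of a finite dihedral group. As already noted in the proof of \cref{edge-lemma-1}, $A = w\,W_{ab}$ for any $w \in A$. Since $P$ is a Coxeter polytope the dihedral angle along $E_{ab}$ is $\pi/m_{ab}$, so exactly $2m_{ab}$ images of $P$ wrap around the edge; equivalently (\cref{Vinberg}) $\{s_a,s_b\}$ generates a finite group, which must be the dihedral group $I_2(m_{ab})$ of order $2m_{ab}$. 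Hence $|A| = 2m_{ab}$.

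For \labelcref{edge_lemma:i}: $A$ is finite and nonempty, so it has an element $\wmin$ of minimal length, $\rmin := l_S(\wmin)$. If a distinct $w' \in A$ also had length $\rmin$, then \cref{edge-lemma-1} would force both $\mathcal{D}_R(\wmin)$ and $\mathcal{D}_R(w')$ to meet $\{s_a,s_b\}$; but $s_a,s_b \in W_{ab}$, so $\wmin s_a,\,\wmin s_b \in A$, and whichever realises the descent has length $\rmin - 1$, contradicting minimality. Thus $\wmin$ is unique, and in particular $s_a, s_b \notin \mathcal{D}_R(\wmin)$.

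The structural heart of the argument is the standard fact about minimal coset representatives in Coxeter groups \cite{Abramenko:2008}: the minimal-length element $\wmin$ of the coset $\wmin W_{ab}$ satisfies $l_S(\wmin v) = \rmin + l_S(v)$ for every $v \in W_{ab}$. (To keep the paper self-contained I would instead prove this by induction on $l_S(v)$, writing $v = v' s$ with $s \in \{s_a, s_b\}$ and ruling out $s \in \mathcal{D}_R(\wmin v')$ using the deletion condition of \cref{deletion} together with $s_a, s_b \notin \mathcal{D}_R(\wmin)$.) Granting it, $l_S(\wmin\langle s_a,s_b\rangle^{x})$ equals $\rmin$ plus the $W_{ab}$-length of $\langle s_a,s_b\rangle^{x}$. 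In $I_2(m_{ab})$ an alternating word of length $x$ is reduced iff $x \le m_{ab}$, so its $W_{ab}$-length is $x$ for $0 \le x \le m_{ab}$; and since $\langle s_a,s_b\rangle^{2m_{ab}} = (s_as_b)^{m_{ab}} = 1$, for $m_{ab} \le x \le 2m_{ab}$ the word $\langle s_a,s_b\rangle^{x}$ is the inverse of the alternating word of length $2m_{ab}-x$ completing it to length $2m_{ab}$, so its $W_{ab}$-length is $2m_{ab}-x$. Moreover $1$ and $\langle s_a,s_b\rangle^{m_{ab}} = w_0(W_{ab})$ are the unique elements of $W_{ab}$ of lengths $0$ and $m_{ab}$, while each intermediate length $1 \le j \le m_{ab}-1$ is attained by exactly the two distinct alternating words $\langle s_a,s_b\rangle^{j}$ and $\langle s_b,s_a\rangle^{j}$.

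Statements \labelcref{edge_lemma:ii}--\labelcref{edge_lemma:v} are then bookkeeping: adding $\rmin$ to this dihedral length profile gives $l_S(\wmin\langle s_a,s_b\rangle^{x}) = \rmin + x$ for $0 \le x \le m_{ab}$ (hence \labelcref{edge_lemma:ii}) and $= \rmin + 2m_{ab} - x$ for $m_{ab} \le x \le 2m_{ab}$ (hence \labelcref{edge_lemma:iii}); the unique maximum is $\wmin w_0(W_{ab}) = \wmin\langle s_a,s_b\rangle^{m_{ab}}$ of length $\rmin + m_{ab}$ (hence \labelcref{edge_lemma:iv}); and for $\rmin < i < \rmin + m_{ab}$ the elements of $A$ of length $i$ are precisely $\wmin\langle s_a,s_b\rangle^{i-\rmin}$ and $\wmin\langle s_b,s_a\rangle^{i-\rmin}$ (hence \labelcref{edge_lemma:v}). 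The only genuine obstacle is establishing the additive length identity $l_S(\wmin v) = \rmin + l_S(v)$; once that is secured, the dihedral combinatorics does the rest.
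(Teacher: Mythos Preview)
Your proof is correct and takes a genuinely different, more structural route than the paper's. For part~\labelcref{edge_lemma:i} you argue exactly as the paper does (via \cref{edge-lemma-1}). The divergence is in part~\labelcref{edge_lemma:ii}: the paper proceeds by a direct contradiction argument, assuming some first $x<m_{ab}$ where the length drops and then repeatedly invoking \cref{edge-lemma-1} to force the length to keep dropping until it hits $\rmin$ at a non-identity element of the coset, contradicting uniqueness of $\wmin$. You instead invoke (or, as you note, re-derive via the exchange/deletion condition) the standard additive-length identity $l_S(\wmin v)=\rmin+l_{\{s_a,s_b\}}(v)$ for the minimal coset representative of $\wmin W_{ab}$, and then read off everything from the explicit length profile of the dihedral group $I_2(m_{ab})$. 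Parts \labelcref{edge_lemma:iii}--\labelcref{edge_lemma:v} are then handled essentially the same way in both arguments.

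What each approach buys: the paper's argument is entirely self-contained, needing only \cref{edge-lemma-1} and elementary properties of the length function already listed in \cref{combinatorics}. Your approach is cleaner and more conceptual---once the coset-representative lemma is in hand, all five statements drop out of a single line of dihedral bookkeeping, and the argument generalises verbatim to any parabolic subgroup $W_J$. The cost is that your inductive sketch for the additive-length identity (Case: the deleted letter lies in the $\wmin$ part versus in the $v$ part) really wants the exchange condition rather than the deletion condition as stated in \cref{deletion}; these are equivalent for Coxeter groups, but if you unpack the sketch you should make that step explicit.
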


\begin{proof}[Proof of \cref{edge_lemma}] \labelcref{edge_lemma:i}.
  Assume there are two minimum length elements in $A$, $w_{\min}$ and $w_{\min}'$ such that $l_S(w_{\min})= l_S(w_{\min}') = r_{\min}$.
  By \cref{edge-lemma-1} either $s_a$ or $s_b$ is in the descent set of $w_{\min}$ and $w_{\min}'$.
  This implies that there is at least one element in $A$ with length $\rmin - 1$, contradicting our assumption.

  \labelcref{edge_lemma:ii}.
  Assume there is some $x < m_{ab}$ such that\linebreak[2] $l_S(\wmin \langle s_a,s_b\rangle^{x+1}) = l_S(\wmin\langle s_a,s_b\rangle^{x})-1 = L$.
  If we let $u = \wmin \langle s_a,s_b\rangle^{x+1}$ and assume (wlog) that $x$ is even, it follows that $s_a \in \mathcal{A}_R(u)$.
  Note that $l_S(\wmin) = \rmin < L$, $l_S(\wmin\langle s_a,s_b\rangle^x) = L+1$ and $l_S(\wmin\langle s_a,s_b\rangle^{x+1}) = L$.
  But each generator $s_a$ or $s_b$ that we multiply $\wmin$ by can only change the length by $\pm 1$.
  So $u$ is not the only element of length $L$ in $A$.
  By \cref{edge-lemma-1} this implies that at least one of $s_a$ or $s_b$ must be in the descent set of $u$.
  Therefore $s_b \in \mathcal{D}_R(u)$.

  If we let $v = us_b$ then $s_b \in \mathcal{A}_R(v)$.
  By a similar argument, $s_a \in \mathcal{D}_R(v)$.
  If we continue this argument we find that the length of the element $\wmin\langle s_a,s_b\rangle^{2x}$ is $\rmin$, which is not possible as $\wmin$ is the unique element of $A$ with length $\rmin$, and by assumption $x < m_{ab}$ so $\langle s_a,s_b\rangle^{2x} = (s_as_b)^x \neq I$ by definition of $m_{ab}$.
  Therefore there is no $x < m_{ab}$ such that $l_S(\wmin \langle s_a,s_b\rangle^{x+1}) = l_S(\wmin\langle s_a,s_b\rangle^{x})-1 = L$.

  \labelcref{edge_lemma:iii}.
  From \labelcref{edge_lemma:ii} it follows that $l_S(\wmin\langle s_a,s_b\rangle^{m_{ab}}) = \rmin + m_{ab}$.
  We have that $\langle s_a,s_b\rangle^{2m_{ab}} = (s_as_b)^{m_{ab}} = I$, thus $l_S(\wmin\langle s_a,s_b\rangle^{2m_{ab}}) = l_S(\wmin) = \rmin$.
  As each generator can only change the length of an element by $\pm 1$, for $m_{ab} < x \leq 2m_{ab}$ we must have that $l_S(\wmin \langle s_a,s_b\rangle^{x+1}) = l_S(\wmin \langle s_a,s_b\rangle^{x})-1$.

  \labelcref{edge_lemma:iv,edge_lemma:v} follow at once from points \labelcref{edge_lemma:ii,edge_lemma:iii}.
\end{proof}

An example of the set of polyhedra associated with an edge $E_{ab}$ where $m_{ab} = 4$ is shown in \cref{edge-pic}.
\cref{edge_lemma} ensures that the lengths associated to the polyhedra around any edge follow the same pattern.

We can now consider the boundary of the HQECC.

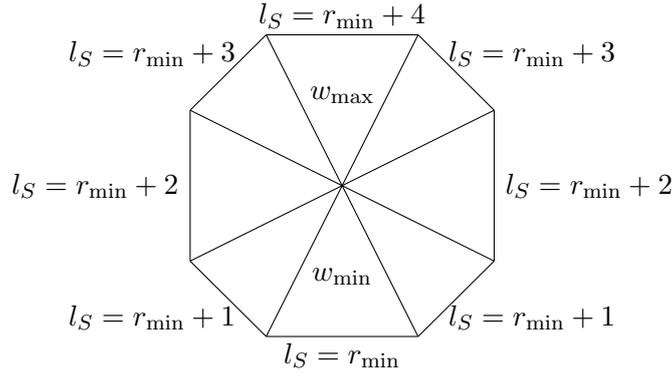
\begin{figure} \centering
  \begin{tikzpicture} \draw(0,0) -- (1,-2); \draw (1,-2) -- (-1,-2); \draw(0,0) -- (-1,-2); \draw(0,0) -- (1,2); \draw (1,2) -- (-1,2); \draw (1,-2) -- (2,-1); \draw (1,2) -- (2,1); \draw (-1,-2) -- (-2,-1); \draw (-1,2) -- (-2,1); \draw(2,1) -- (2,-1); \draw(-2,1) -- (-2,-1); \draw(0,0) -- (-1,2); \draw(0,0) -- (2,-1); \draw(0,0) -- (-2,-1); \draw(0,0) -- (2,1); \draw(0,0) -- (-2,1); \node at (0,-2.25) {$l_S = \rmin$}; \node at (-3.25,0) {$l_S = \rmin + 2$}; \node at (-2.5,-1.75) {$l_S = \rmin + 1$}; \node at (-2.5,1.75) {$l_S = \rmin + 3$}; \node at (0,2.25) {$l_S = \rmin + 4$}; \node at (2.5,1.75) {$l_S = \rmin + 3$}; \node at (3.25,0) {$l_S = \rmin + 2$}; \node at (2.5,-1.75) {$l_S = \rmin + 1$}; \node at (0,-1.2) {$\wmin$}; \node at (0,1.2) {$w_{\max}$};
  \end{tikzpicture}
  \caption{A cross section view of the polyhedral cells in the tessellation which meet around a common edge, $E_{ab}$, for $m_{ab} = 4$.}
  \label{edge-pic}
\end{figure}

\begin{lemma}\label{surface} The boundary $M$ of a Coxeter HQECC in $\mathbb{H}^3$ is a surface (a 2D manifold).
\end{lemma}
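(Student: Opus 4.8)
The plan is to show that $M$ is a 2-manifold by checking the manifold condition locally: every point of $M$ has a neighbourhood homeomorphic to an open disc in $\mathbb{R}^2$. Since $M$ is a union of (closed) polygonal faces $F_a^{(w)}$ of the polytopes $P^{(w)}$ sitting at radius $R$, the interior of each such face is already a 2-disc, and the only points requiring attention are those lying on the shared edges and vertices of these boundary faces. So the argument reduces to a careful local analysis at edges and vertices of the truncated tessellation.

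First I would set up the local picture at an edge. Let $E_{ab}^A$ be an edge of the tessellation, and consider the set $A$ of polyhedral cells meeting along it. By \cref{edge_lemma}, the lengths $l_S(w)$ for $w \in A$ follow the unimodal pattern: a unique minimum $\wmin$ of length $\rmin$, lengths increasing by $1$ as one walks around the edge in either direction via alternating $s_a, s_b$, up to a unique maximum $w_{\max}$ of length $\rmin + m_{ab}$, with exactly two cells at each intermediate length (the picture in \cref{edge-pic}). The key point is to determine which of the faces incident to $E_{ab}^A$ are boundary faces, i.e.\ lie in $\mathcal M$. A face $F_c^{(w)}$ at radius $R$ is in $\mathcal M$ iff $s_c \in \mathcal A_R(w)$. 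I would argue that for an edge $E_{ab}^A$ that actually appears on the boundary surface (meaning at least one incident face is a boundary face), exactly two of its incident faces are boundary faces — this follows because the boundary faces incident to $E_{ab}^A$ are precisely those $F_a^{(w)}$ or $F_b^{(w)}$ with $s_a$ or $s_b$ in the ascent set of $w$, and by the monotonic length pattern of \cref{edge_lemma} combined with \cref{edge-lemma-1} (which says every length-$L$ cell in $A$ other than $\wmin$ has $s_a$ or $s_b$ in its descent set), these occur exactly at the two ``ends'' — around $w_{\max}$, or around the point where the walk in each direction first exceeds radius $R$. Consequently, near an interior point of such an edge, $M$ looks like two half-planes glued along their common boundary line, which is a neighbourhood homeomorphic to $\mathbb{R}^2$.

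Next I would handle the vertices of the truncated tessellation that lie on $M$. A vertex $v$ of $P$ corresponds (via \cref{Vinberg}) to a subset $J \subseteq S$ generating a finite Coxeter group $W_J$, and the link of $v$ in the full honeycombing of $\hyper$ is a 2-sphere tiled by the spherical Coxeter complex of $W_J$. Cutting off at radius $R$ removes a portion of this link, and the boundary faces through $v$ form a cyclic chain of polygonal corners around $v$; I would check that this chain closes up into a single circle (using again that $\mathcal D_R, \mathcal A_R$ partition $S$ and the descent-set combinatorics, so that consecutive boundary faces share exactly one boundary edge at $v$), giving $v$ a neighbourhood in $M$ that is a cone on a circle, i.e.\ a disc. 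Finally, interior points of boundary faces trivially have disc neighbourhoods. Combining the three cases establishes that $M$ is a surface. The main obstacle I anticipate is the vertex analysis: controlling exactly which faces around a given vertex of the cut-off tessellation are boundary faces, and verifying that they link up into a single cycle rather than several disjoint arcs, requires a genuinely careful argument about how the radius-$R$ truncation intersects the finite group $W_J$ and its descent/ascent sets — the edge case is comparatively clean because only the two generators $s_a, s_b$ are involved, whereas at a vertex one must juggle all of $J$ at once.
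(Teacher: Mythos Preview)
Your approach is correct and, for the face-interior and edge cases, essentially coincides with the paper's: both check local Euclideanness on face interiors trivially and use the length pattern around an edge from \cref{edge_lemma} to conclude that exactly two boundary faces meet there. The paper's proof is considerably terser than yours, though --- it simply cites \cref{edge_lemma}\labelcref{edge_lemma:v} for the edge count and stops, without the detailed case analysis you sketch about where the length-$R$ cells sit relative to $\wmin$ and $w_{\max}$.

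The substantive difference is that you go on to treat vertices, while the paper does not address them at all in this lemma. You are right that, in principle, a polygonal complex in which exactly two faces meet along every edge can still fail to be a manifold at a vertex (several ``sheets'' pinched at a point), so your extra step is a genuine strengthening of the argument; the paper simply omits this check here. Your proposed method --- analysing the link of the vertex via the finite parabolic $W_J$ and showing the boundary faces through $v$ form a single cycle --- is the natural line, and your anticipated obstacle (controlling the descent/ascent combinatorics across all of $J$ simultaneously rather than just a pair $\{s_a,s_b\}$) is exactly the subtle point. The paper never carries this out explicitly; a partial remark about vertices appears later, inside the proof of \cref{2_Sphere}, but not as a standalone manifold verification.
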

\begin{proof} Within the faces that make up the boundary $M$ is clearly locally Euclidean, and the same will be true at the edges where the faces meet provided no more than two faces meet at an edge.
  Point~(v) from \cref{edge_lemma} shows this is indeed the case.
\end{proof}

\begin{lemma} \label{closed} The boundary manifold, ${M}$, of a Coxeter HQECC in $\mathbb{H}^3$ is closed - i.e.
  compact and with no boundary.
\end{lemma}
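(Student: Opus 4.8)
The plan is to prove the two halves of \cref{closed} --- compactness and emptiness of the boundary --- separately, leaning on \cref{surface} for the fact that $M$ is already known to be a $2$-manifold. Compactness will be immediate: the radius-$R$ cut-off tessellation contains only finitely many cells $P^{(w)}$ (those with $l_S(w)\le R$), hence only finitely many faces, and each face $F_a^{(w)}$ is an isometric copy of a face of the compact Coxeter polytope $P$, so is itself compact; since $M$ is by definition the union of the finitely many faces in $\mathcal{M}$, it is compact.

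For the claim that $\partial M=\emptyset$, the strategy is to reduce everything to a single combinatorial fact: \emph{every edge of every face $F_a^{(w)}\in\mathcal{M}$ lies on exactly two faces of $\mathcal{M}$}. Granting this, no point of $M$ has merely a half-plane neighbourhood --- relative-interior points of $2$-faces plainly have Euclidean neighbourhoods, relative-interior points of edges do too since they are covered by their two incident faces, and, $M$ being a $2$-manifold by \cref{surface}, the neighbourhood of each vertex is then forced to be a full disc, because a half-disc neighbourhood would force one of its two bounding edges to lie on only one face of $M$. Hence $\partial M=\emptyset$, and together with compactness this is exactly the statement that $M$ is closed.

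It remains to plan the edge count, which is the only place real work is needed. Fix $F_a^{(w)}\in\mathcal{M}$, so $l_S(w)=R$ and $s_a\in\mathcal{A}_R(w)$, and let $E$ be an edge of $F_a^{(w)}$; then $E=E_{ab}^{(w)}$ for some $b$ with $m_{ab}<\infty$, and the cells meeting along $E$ form the set $A$ of \cref{edge_lemma}. Writing $v_x=\wmin\langle s_a,s_b\rangle^x$, the cells of $A$ read cyclically around $E$ are $v_0,\dots,v_{2m_{ab}-1}$ with $l_S(v_x)=\rmin+\min(x,\,2m_{ab}-x)$, a ``tent'' profile rising from the unique shortest cell $\wmin$ to the unique longest cell $w_{\max}$ and back (this is just \cref{edge_lemma}\,(ii),(iii)); moreover consecutive cells $v_x,v_{x+1}$ share one of the $F_a$- or $F_b$-type faces through $E$, and that face lies in $\mathcal{M}$ precisely when the shorter of $v_x,v_{x+1}$ has length $R$ and the longer has length $R+1$ (i.e.\ the longer one has been cut off). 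Since $s_a\in\mathcal{A}_R(w)$ and $ws_a\in A$, the set $A$ contains a cell of length $R+1$, so the peak of the tent exceeds $R$; putting $j=R-\rmin\in\{0,\dots,m_{ab}-1\}$, the cells of $A$ of length exactly $R$ are $v_j$ and $v_{2m_{ab}-j}$ (only $v_0$ when $j=0$), and a direct check of the tent profile shows that each such cell has exactly one cyclic neighbour of length $R+1$ and, when $j\ge1$, one of length $R-1$, while for $j=0$ the single length-$R$ cell $\wmin$ has both cyclic neighbours of length $R+1$. In every case exactly two faces through $E$ --- one of them $F_a^{(w)}$ itself --- lie in $\mathcal{M}$, which is the required edge count. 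The only delicate part of the whole proof is this bookkeeping with the tent-shaped length profile and where the cut-off bites; the vertex topology and compactness are routine.
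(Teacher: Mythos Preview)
Your proof is correct and follows essentially the same approach as the paper: both establish that $M$ has no boundary by analysing the word-length profile of the cells meeting along an arbitrary edge of a face in $\mathcal{M}$ and locating a second face of $\mathcal{M}$ through that edge. You package this as a direct count (exactly two faces through each edge) drawing on the full tent profile of \cref{edge_lemma}, whereas the paper phrases it as a proof by contradiction and re-derives the relevant piece of the length profile by hand; your version is arguably cleaner, but the underlying combinatorics are identical.
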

\begin{proof} Assume $M$ has a boundary.
  This implies that $\exists w \in W$ such that $l_S(w) = R$ where $F_a^{(w)} \in {M}$ and $E_{a,b}^{(w)} \in \partial {M}$.

  Since $E_{a,b}$ is an edge of $P$ we must have that $\{s_a,s_b\} \in \mathcal{F}$ where $\mathcal{F} = \{ J \subseteq S | W_J \mbox{ is finite}\}$.
  This implies that $\exists m_{ab} \in \mathbb{N} \setminus \{1\}$ such that $(s_as_b)^{m_{ab}} = (s_bs_a)^{m_{ab}} = I$.

  $E_{a,b}^{(w)} \in \partial {M}$ implies that $F_{b}^{(w)} \notin {M}$, and therefore $s_b \in \mathcal{D}_R(w)$.
  This gives $w = us_b$ where $u \in W$ and $l_S(u) = R-1$.
  We also have that $l_S(ws_a) = l_S(us_bs_a) = R+1$ (because by assumption $s_a \in \mathcal{A}_R(w)$).

  Putting everything together we find that:
  \begin{equation}
    l_S(u) = R-1
  \end{equation}
  \begin{equation}
    l_S(us_bs_a) = R+1
  \end{equation}
  \begin{equation}
    l\left[u (s_bs_a)^{m_{ab}}  \right] = l_S(u) = R-1
  \end{equation} Therefore, at least one of the following must be true:
  \begin{enumerate}
  \item $\exists x$ such that $1 \leq x < m_{ab}$ where $l[u (s_bs_a)^x] = R+1$ and $l[u (s_bs_a)^xs_b] = R$
  \item $\exists x$ such that $1 \leq x < m_{ab}$ where $l[u (s_bs_a)^xs_b] = R+1$ and $l[u (s_bs_a)^{x+1}] = R$
  \end{enumerate}

  The second case would imply that $l[u (s_bs_a)^xs_b] = R+1 = l_S(us_bs_a)$ but this cannot occur as it is not possible for two elements of the Coxeter group with the same word length to be related by an odd number of generators.
If the first case occurs then $F_v^{(b)} \in {M}$ for $v = u (s_bs_a)^xs_b$ and shares edge $E_{ab}$ with $F_w^{(a)}$, so $E_{ab}^{(w)} \notin \partial {M}$.

  Therefore ${M}$ does not have a boundary.
  The boundary of every polyhedron face is included in $M$ so $M$ is compact.
\end{proof}

We now prove that the boundary surface is orientable.
A smooth surface is orientable if a continuously varying normal vector can be defined at every point on the surface.
This normal vector defines the positive side of the surface (the side the normal vector is pointing to) and a negative side (the side the normal vector points away from).
If the surface has a boundary, the normal vector defines an \emph{orientation} on the boundary curve, with the following convention: standing on the positive side of the surface, and walking around the boundary curve in the direction of the orientation, the surface is always on our left.

\begin{lemma} \label{orientable} The boundary surface, $M$, of a Coxeter HQECC in $\mathbb{H}^3$ is orientable.
\end{lemma}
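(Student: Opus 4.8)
The plan is to build a globally consistent orientation on $M$ by exploiting the fact that $M$ sits inside the oriented Riemannian manifold $\hyper$: at each point of $M$ we have a tangent plane, and we want to choose the normal vector consistently. The natural choice is the \emph{outward} normal — the one pointing away from the region enclosed by $M$. Since (by \cref{closed}) $M$ is a closed surface bounding the finite portion of the tessellation cut off at radius $R$, this notion of "outward" is well-defined: each face $F_a^{(w)}\in\mathcal M$ has $s_a\in\mathcal A_R(w)$, meaning the polyhedral cell $P^{(w')}$ that would lie across $F_a^{(w)}$ (with $w' = ws_a$, $l_S(w') = R+1$) has been removed. So we define the positive side of $F_a^{(w)}$ to be the side facing the removed cell $P^{(ws_a)}$, i.e. pointing "up" in length.

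The only thing to check is that this choice varies continuously across edges of $M$, i.e. where two faces meet. By point~(v) of \cref{edge_lemma}, at any edge $E_{ab}^A$ in the bulk exactly two cells of each intermediate length $r_{\min}<i<r_{\min}+m_{ab}$ meet the edge, and correspondingly (combined with the argument in \cref{surface}) exactly two boundary faces meet along any edge of $M$. First I would identify which pairs of faces these are: along an edge $E_{ab}^A$ lying in $\partial M$-free position, the two boundary faces are $F_a^{(w)}$ and $F_b^{(w)}$ for suitable $w$, or more precisely the two faces come from cells $w_i, w_i'$ of equal length which, by \cref{edge-lemma-1}, each have one of $s_a,s_b$ in their descent set and (since the face is on $M$) the other in their ascent set. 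So the two boundary faces meeting at the edge are of the form $F_a^{(w_i)}$ and $F_b^{(w_i')}$ where the cells $w_i, w_i'$ are on "opposite sides" of the fan of cells around $E_{ab}$ (cf. \cref{edge-pic}), with all the higher-length cells $\wmin\langle s_a,s_b\rangle^x$ for $r_{\min}<x<r_{\min}+m_{ab}$ sitting between them having been removed. Then I would verify that the outward normals of $F_a^{(w_i)}$ and $F_b^{(w_i')}$ — both pointing into the removed wedge of cells — induce \emph{opposite} orientations on the shared edge $E_{ab}$ (as required for the two faces to glue into an oriented surface), exactly as two adjacent faces of the boundary of a convex region do. This is essentially a local computation in the single polytope $P$ and its reflections, using that $P$ is convex and the removed cells form a connected wedge around the edge.

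Concretely, the cleanest way to package this: the finite union $N = \bigcup_{l_S(w)\le R} P^{(w)}$ is a compact region in the oriented manifold $\hyper$ with $\partial N = M$; the boundary of an oriented manifold-with-boundary inherits a canonical orientation (outward-normal-first convention), and by \cref{surface} and \cref{closed} $N$ is a genuine $3$-manifold with boundary (each bulk edge has $\le 2$ boundary faces, each bulk face has $\le 2$ adjacent cells, so $M$ is locally Euclidean and $N$ is locally a half-space near $M$). Hence $M$ is orientable. The main obstacle is the bookkeeping in the previous paragraph — pinning down precisely which cells around an edge have been cut off and checking local consistency of the normal at those edges of $M$ where the "wedge" of removed cells is sandwiched between two surviving faces; once one trusts that $N$ is a manifold-with-boundary (which is what \cref{surface}, \cref{closed}, and point~(v) of \cref{edge_lemma} are really establishing), orientability is immediate from the standard fact that the boundary of an oriented manifold is oriented.
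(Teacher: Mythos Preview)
Your approach is correct and essentially matches the paper's. Both define the normal on each face $F_a^{(w)}\in\mathcal M$ to point away from $P^{(w)}$ (equivalently, toward the excised cell $P^{(ws_a)}$ of length $R+1$), and then verify at each common edge that the two adjacent faces induce opposite orientations. The paper carries this out as a direct case split: either the two faces belong to the same cell $P^{(u)}$ (so the normals manifestly agree), or they belong to distinct cells of length $R$ sharing the edge, in which case parts~(ii)--(iii) of \cref{edge_lemma} force the normals to point into the same ``wedge'' of excised higher-length cells. Your exploratory first paragraph is this same argument, though your phrase ``or more precisely'' is misleading: the same-cell case ($R=r_{\min}$, both faces on $P^{(w_{\min})}$) is genuinely separate, since then neither $s_a$ nor $s_b$ lies in the descent set, contrary to what you write.

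Your second packaging---that $N=\bigcup_{l_S(w)\le R}P^{(w)}$ is a compact $3$-manifold-with-boundary inside the oriented manifold $\hyper$, hence $M=\partial N$ inherits an orientation---is a cleaner, more conceptual route that the paper does not take. It trades the edge-by-edge check for a single appeal to a standard fact, at the cost of having to verify that $N$ really is a manifold-with-boundary (locally a half-space along $M$, including at vertices). You correctly identify that \cref{surface}, \cref{closed}, and \cref{edge_lemma}(v) supply the edge-level input, though the vertex-level check is not made explicit here (nor, to be fair, in the paper's proof of \cref{surface}). Either way the argument goes through.
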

\begin{proof} A piecewise smooth manifold (such as $M$) is orientable if, whenever two smooth component surfaces join along a common boundary, they induce \emph{opposite} orientation on the common boundary.

  Define the unit normal vector, $\hat{n}$, to a face $F_a^{(w)} \in {M}$ to point away from $P^{(w)}$ (i.e.
  it points into $P^{(v)}$ where $v = ws_A$).

  Consider the two possible configurations that could occur when two faces meet at a common edge.
  If the two faces always induce opposite orientation on the common edge (as in \cref{orient-fig} (a)) then $M$ is orientable.
  If the two faces ever induce the same orientation on the common edge (as in \cref{orient-fig} (b)) then $M$ is not orientable.

  If two faces which meet at a common edge of $M$ are part of the same polyhedral cell of the tessellation, i.e.
  they are faces $F_a^{(u)}$ and $F_b^{(u)}$, then it is guaranteed that the orientation of the surfaces will correspond to that shown in \cref{orient-fig} (a) as $\hat{n}$ is defined to point away from $P^{(u)}$.

  If two faces which meet at a common edge of $M$ are part of different polyhedral cells then parts (ii) and (iii) of \cref{edge_lemma} enforce that the orientation of the surfaces will always correspond to that shown in \cref{orient-fig} (a).

  Therefore $M$ is orientable.
\end{proof}

\begin{figure} \centering
  \begin{tikzpicture}

    \draw(2,1) -- (0,0.75); \draw(2,1) -- (2.25,3); \draw(2.25,3) -- (0.25,2.75); \draw (0,0.75) -- (0.25,2.75); \draw[->] (1.1,1.8) -- (1.1,2.1); \draw (1.05,1.8) -- (1.15,1.8); \draw(2,1) -- (4,0.75); \draw(2.25,3) -- (4.25,2.75); \draw(4.25,2.75) -- (4,0.75); \draw[->] (3.1,1.8) -- (3.1,2.1); \draw (3.05,1.8) -- (3.15,1.8); \draw (1.1,1.8) circle [radius = 0.75]; \draw[->] (1.85,1.8)--(1.85,1.801); \draw (3.1,1.8) circle [radius = 0.75]; \draw[->] (2.35,1.801)--(2.35,1.8); \node at (2,0) {(a)};

    \draw(8,1) -- (6,0.75); \draw(8,1) -- (8.25,3); \draw(8.25,3) -- (6.25,2.75); \draw (6,0.75) -- (6.25,2.75); \draw[->] (7.1,1.8) -- (7.1,2.1); \draw (7.05,1.8) -- (7.15,1.8); \draw(8,1) -- (10,0.75); \draw(8.25,3) -- (10.25,2.75); \draw(10.25,2.75) -- (10,0.75); \draw[->] (9.1, 1.8) -- (9.1,1.5); \draw (9.05,1.8) -- (9.15,1.8); \draw (7.1,1.8) circle [radius = 0.75]; \draw[->] (7.85,1.8)--(7.85,1.801); \draw (9.1,1.8) circle [radius = 0.75]; \draw[->] (8.35,1.8)--(8.35,1.801); \node at (8,0) {(b)};

  \end{tikzpicture}
  \caption{A cross section image of two possibilities for the orientation of faces that meet at a common edge in $M$.
    In (a) the two faces will induce opposite orientation on the common edge $E$, while in (b) the two faces will induce the same orientation on the common edge $E$.}
  \label{orient-fig}
\end{figure}
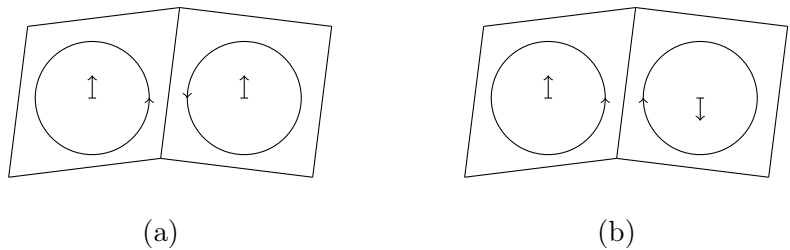

\begin{lemma}\label{connected} The boundary surface, $M$, of a Coxeter HQECC in $\mathbb{H}^3$ is connected.
\end{lemma}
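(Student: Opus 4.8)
The plan is to argue by induction on the cutoff radius $R$, showing that the boundary surface $M$ of the radius-$R$ Coxeter HQECC is connected. The base case $R=0$ is immediate: the only cell is $P^{(1)}$, every facet of $P$ is uncontracted (since $l_S(s)=1>0$ for every $s\in S$), so $M=\partial P^{(1)}$ is the boundary of a compact convex polytope, which is connected. For the inductive step I assume the radius-$(R-1)$ boundary $M'$ is connected; by \cref{surface,closed} it is moreover a closed $2$-manifold, and these are the only facts about $M'$ I will use.

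The first ingredient is a local picture of each outer cell. For $w$ with $l_S(w)=R$, partition the facets of $P^{(w)}$ into an \emph{inner part} $D_w=\bigcup_{s_a\in\mathcal{D}_R(w)}F_a^{(w)}$ and an \emph{outer part} $E_w=\bigcup_{s_a\in\mathcal{A}_R(w)}F_a^{(w)}$, so that $\partial P^{(w)}=D_w\cup E_w$ is a $2$-sphere and $D_w\cap E_w$ is the common boundary circle. By \cref{finite_descent}, $\mathcal{D}_R(w)$ generates a finite Coxeter group, so by \cref{Vinberg} the facets indexed by $\mathcal{D}_R(w)$ all contain the single common face $\bigcap_{s_a\in\mathcal{D}_R(w)}F_a$; the classification in \cref{finite} forces $|\mathcal{D}_R(w)|\le 3$ in $\mathbb{H}^3$, and a short case analysis on $|\mathcal{D}_R(w)|\in\{1,2,3\}$ (one facet; two facets glued along an edge of $P$; three facets forming the closed star of a vertex of $P$) shows $D_w$ is a topological disk inside $\partial P^{(w)}\cong\mathbb{S}^2$, hence so is its complement $E_w$. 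Since $W$ is infinite, $\mathcal{D}_R(w)\neq S$, so $E_w\neq\emptyset$; and for $R\ge 1$ we have $\mathcal{D}_R(w)\neq\emptyset$ because every reduced word ends in a generator.

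The second ingredient is that the inner disks $\{D_w : l_S(w)=R\}$ tile $M'$. Every facet of $M'$ has the form $F_a^{(u)}$ with $l_S(u)=R-1$ and $s_a\in\mathcal{A}_R(u)$; its other side is the cell $P^{(us_a)}$ with $l_S(us_a)=R$, in which this facet is $F_a^{(us_a)}$ with $s_a\in\mathcal{D}_R(us_a)$, so it lies in $D_{us_a}$ and (since each facet of the tessellation borders exactly two cells) in no other $D_w$. Distinct $D_w$ therefore overlap only along edges of the tessellation. Moreover, by \cref{edge_lemma} the cell lengths around any edge are unimodal in cyclic order with a unique minimum and maximum, so an edge interior to some $D_w$ lies in no other inner disk, whereas an edge shared by $D_w$ and $D_{w'}$ lies on the boundary circle of each, i.e.\ on $\partial E_w$ and $\partial E_{w'}$; since $E_w,E_{w'}$ are closed, such an edge is contained in $E_w\cap E_{w'}$.

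To conclude, form the dual graph $G$ with one vertex per tile $D_w$ and an edge whenever two tiles share an edge of the tessellation. Because $M'$ is a connected closed surface tiled by the disks $D_w$, the graph $G$ is connected: around any vertex $p$ of the tessellation lying on $M'$ the incident tiles form a cyclic chain with consecutive tiles sharing a full edge through $p$, so tiles meeting even at a single point are already adjacent in $G$, and hence a disconnection of $G$ would exhibit $M'$ as a disjoint union of two nonempty closed sets, contradicting connectedness. Finally $M=\bigcup_{l_S(w)=R}E_w$ is a union of connected sets indexed by the vertices of the connected graph $G$, with $E_w\cap E_{w'}\neq\emptyset$ whenever $w\sim w'$ in $G$ by the previous paragraph; therefore $M$ is connected, completing the induction. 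I expect the fiddliest points to be the verification that $D_w$ and $E_w$ are genuinely disks (via \cref{Vinberg} and \cref{finite}) and the fact that the dual graph of a tiling of a closed surface cannot be disconnected even though tiles may touch only at vertices. An alternative, conceptually cleaner route would be to show by the same induction that the solid region $\bigcup_{l_S(w)\le R}P^{(w)}$ is a $3$-ball, each new cell being attached along the disk $D_w$ in its boundary sphere, so that $M$ is a $2$-sphere outright; that route trades the combinatorics above for a little more PL/point-set topology (gluing a ball to a ball along a boundary disk yields a ball).
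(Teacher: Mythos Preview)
Your proof is correct and takes a genuinely different route from the paper's. The paper argues by contradiction at a fixed radius: using \cref{closed} and \cref{orientable}, each putative component of $M$ would be a closed orientable surface with a well-defined interior and exterior; the exterior of every component is shown to be unbounded (because $\mathcal{A}_R(w)\neq\emptyset$ for all $w$), and then a geodesic-in-the-Cayley-graph argument (following a reduced word back to the identity) shows there is no consistent location for $P^{(I)}$ relative to two components, whether nested or side-by-side. Your argument instead inducts on $R$, analysing how the new layer of cells is glued onto $M'$: the descent facets of each level-$R$ cell form a disk $D_w$ (here you implicitly use that Coxeter polytopes are simple, i.e.\ exactly three facets meet at each vertex, which does follow from \cref{Vinberg}), these disks tile $M'$, and \cref{edge_lemma} forces any edge shared by two tiles to lie on the boundary circle of each, so the ascent pieces $E_w$ inherit connectedness from the dual graph of the tiling. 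The paper's approach is shorter and reuses orientability, which it has already established; your approach is more constructive, avoids the interior/exterior machinery entirely, and -- as you note at the end -- upgrades naturally to a direct proof that the solid region $\bigcup_{l_S(w)\le R}P^{(w)}$ is a $3$-ball (each cell being attached along a disk), which would give \cref{2_Sphere} in the same breath and make the separate genus argument there unnecessary.
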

\begin{proof} Let the boundary surface, $M$ be a sum of connected components, where we denote the $i^{th}$ connected component by $M^{(i)}$.
  $M$ is closed and orientable so by the classification of surface theorem it is the sum of spheres and connected sums of tori.
  Both spheres and tori have well defined interiors and exteriors, so we can define the interior and exterior of each $M^{(i)}$.

  Define the interior of $M^{(i)}$ to be the region that $\hat{n}^{(i)}$ points away from (i.e.
  the interior of $M^{(i)}$ contains $P^{(u)}$ for $F_a^{(u)} \in {M^{(i)}}$, $l_S(u) = R$).
  The exterior of $M^{(i)}$ is then the region that $\hat{n}^{(i)}$ points into to (i.e.
  the exterior of $M^{(i)}$ contains $P^{(v)}$ for $v = us_a$ where $F_a^{(u)} \in {M^{(i)}}$, $l_S(v) = R+1$).

  It follows from this definition that the exterior of each $M^{(i)}$ must be unbounded.
  To see this, note that for an infinite Coxeter group, $W$, every $w \in W$ has a non-empty $\mathcal{A}_R(w)$.
  This means that for arbitrary $w \in W$ there exists $s_a \in S$ such that $l_S(ws_a) = l_S(w)+1$.
  In terms of the HQECC this implies that the number of polyhedra in the exterior of any $M^{(i)}$ is infinite, so the exterior of $M^{(i)}$ is unbounded.

  Assume that $M = \cup_i M^{(i)}$ is composed of more than one connected component $M^{(i)}$.
  Consider any two components $M^{(1)}$ and $M^{(2)}$.
  There are three possible configurations:
  \begin{enumerate}
  \item $M^{(1)}$ and $M^{(2)}$ intersect.
    \label[case]{connected:intersect}
  \item $M^{(2)}$ is in the interior of $M^{(1)}$ (see \cref{case1}).
    \label[case]{connected:interior}
  \item $M^{(2)}$ is in the exterior of $M^{(1)}$ (see \cref{case2}).
    \label[case]{connected:exterior}
  \end{enumerate}

  However, \cref{connected:intersect} would imply that $M$ is not a surface, contradicting \cref{surface}.
  Thus we only need to consider \cref{connected:interior,connected:exterior}.

  The Coxeter group, and therefore the HQECC, contains a unique identity element of length $l_S(I)=0$.
  For any $ v \in W$ such that $l_S(v) = R$ we can write a reduced word for $v$ as $v = s_1^{(v)}s_2^{(v)}...s_R^{(v)}$.
  Using the fact that all the generators are involutions, it follows that $vs_R^{(v)}...s_2^{(v)}s_1^{(v)} = I$.
  We started with an element of length $R$, and applied $R$ generators to reach an element of length 0.
  Since each generator can only change the length of the previous element by $\pm1$ it follows that each generator must have decreased the length of the element by 1.
  Therefore in the HQECC there is a path through the tessellation from a polyhedra associated with an element of length $R$ to $P^{(I)}$ which passes through $R$ polyhedra, all associated with elements of length less than $R$.

  From the definition of the interior and exterior of $M^{(i)}$ it is clear that all polyhedra which lie directly on the interior of a given $M^{(i)}$ (i.e.
  those that are in the interior of $M^{(i)}$ and touching $M^{(i)}$) are associated with Coxeter group elements of length $R$.
  While all polyhedra which lie directly on the exterior of a given $M^{(i)}$ (i.e.
  those that are in the exterior of $M^{(i)}$ and touching $M^{(i)}$) are associated with Coxeter group elements of length $R+1$.

  Therefore there must always be a path from polyhedra directly on the interior of a $M^{(i)}$ to $P^{(I)}$ which doesn't cross $M^{(i)}$.
  In \cref{case1} and \cref{case2} it is clear that there is no location for $P^{(I)}$ which meets this condition.
  Therefore $M$ cannot be made up of more than one connected component.
\end{proof}

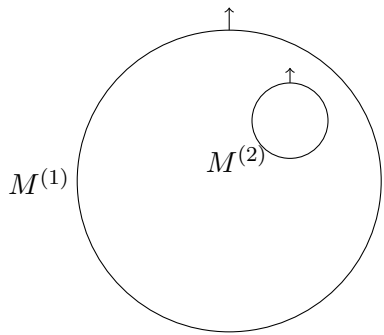
\begin{figure} \centering
  \begin{tikzpicture} \draw (0,0) circle [radius = 2]; \draw [->] (0,2) -- (0,2.3); \draw (0.8,0.8) circle [radius = 0.5]; \draw [->] (0.8,1.3) -- (0.8,1.5); \node at (-2.5,0) {$M^{(1)}$}; \node at (0.1,0.3) {$M^{(2)}$};
  \end{tikzpicture}
  \caption{The configuration of $M^{(1)}$ and $M^{(2)}$ when $M^{(2)}$ is in the interior of $M^{(1)}$.
    The arrows are the $\hat{n}^{(i)}$ which point into the exterior of each surface.
    The polyhedra which lie directly on the interior of a given $M^{(i)}$ are associated with Coxeter group elements of length $R$, while the polyhedra which lie directly on the exterior of a given $M^{(i)}$ are associated with elements of length $R+1$.
    Although we have drawn $M^{(1)}$ and $M^{(2)}$ as circles we are not assuming they are spherical, they could be tori, all we are assuming is that they have a well defined interior and exterior.}\label{case1}
\end{figure}

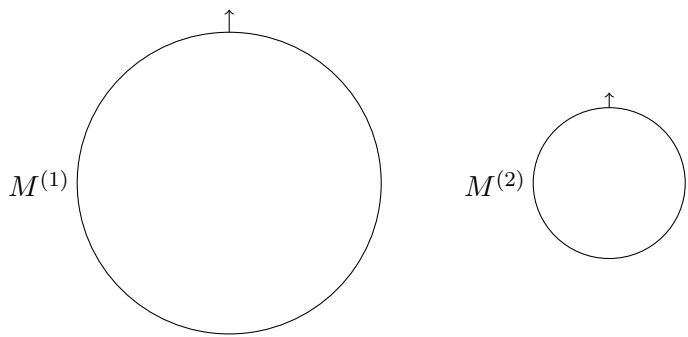
\begin{figure} \centering
  \begin{tikzpicture} \draw (0,0) circle [radius = 2]; \draw [->] (0,2) -- (0,2.3); \draw (5,0) circle [radius = 1]; \draw [->] (5,1) -- (5,1.2); \node at (-2.5,0) {$M^{(1)}$}; \node at (3.5,0) {$M^{(2)}$};
  \end{tikzpicture}
  \caption{The configuration of $M^{(1)}$ and $M^{(2)}$ when $M^{(2)}$ is in the exterior of $M^{(1)}$.
    The arrows are the $\hat{n}^{(i)}$ which point into the exterior of each surface.
    The polyhedra which lie directly on the interior of a given $M^{(i)}$ are associated with Coxeter group elements of length $R$, while the polyhedra which lie directly on the exterior of a given $M^{(i)}$ are associated with Coxeter group elements of length $R+1$.
    Although we have drawn $M^{(1)}$ and $M^{(2)}$ as circles we are not assuming they are spherical, they could be tori, all we are assuming is that they have a well defined interior and exterior.}\label{case2}
\end{figure}

\begin{lemma} \label{2_Sphere} The boundary surface, $M$, of a Coxeter HQECC in $\mathbb{H}^3$ is homeomorphic to the 2-sphere.
\end{lemma}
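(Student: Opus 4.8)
The plan is to combine the four preceding lemmas with one additional topological input to fix the genus. By \cref{surface,closed,orientable,connected}, $M$ is a closed, connected, orientable surface, so the classification of surfaces gives $M \cong \Sigma_g$ for some $g \geq 0$, and it remains only to show $g = 0$. I would do this by showing that the compact region $B_R := \bigcup_{l_S(w)\leq R} P^{(w)} \subseteq \hyper$ enclosed by $M$ is homeomorphic to a closed $3$-ball, whence $M = \partial B_R$ is a $2$-sphere. (Equivalently one could aim to show $B_R$ is simply connected and invoke ``half lives, half dies'', or compute $\chi(B_R)=1$ and use $\chi(\partial B_R)=2\chi(B_R)$; the ball statement is the most self-contained.)

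To prove $B_R$ is a $3$-ball I would induct on $R$. The base case $B_0 = P$ is a convex polytope in $\hyper$, hence homeomorphic to $D^3$ with $\partial P \cong S^2$. For the inductive step, $B_R$ is obtained from $B_{R-1}$ by attaching the cells $\{P^{(w)} : l_S(w)=R\}$; since two cells of equal length are never face-adjacent (sharing a face $F_a$ means $w' = w s_a$, which changes the length by $\pm 1$), cells of length $R$ meet one another only in lower-dimensional faces, and I would attach them one at a time in a suitable order. The key geometric fact is that $P^{(w)}$ meets $B_{R-1}$ in exactly the union of its faces $\{F_a^{(w)} : s_a \in \mathcal{D}_R(w)\}$, and this union is a disk in $\partial P^{(w)} \cong S^2$: by \cref{finite_descent} the set $\mathcal{D}_R(w)$ generates a finite Coxeter group, so $|\mathcal{D}_R(w)| \leq 3$ (there is no nonempty face of $P$ of codimension $>3$), and by \cref{Vinberg} these faces share a common face $\bigcap_{s_a\in\mathcal{D}_R(w)} F_a^{(w)}$ of codimension $|\mathcal{D}_R(w)|$; the union of the (at most three) faces around this common face is a closed neighbourhood of it in $\partial P^{(w)}$, hence a disk. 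Since gluing a $3$-ball to a $3$-ball along a disk in each of their boundaries again yields a $3$-ball, attaching one length-$R$ cell to $B_{R-1}$ preserves the ball property, and iterating over all length-$R$ cells completes the step.

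The main obstacle is making the ``suitable order'' precise: when $P^{(w)}$ is attached it may already share edges or vertices with previously attached length-$R$ siblings, so the actual attaching region is the descent-set disk together with some extra edges/vertices, and one must order the attachments so that this combined region stays a single disk (so the ball-gluing lemma still applies). This is exactly the local configuration controlled by \cref{edge_lemma}: around any edge $E_{ab}$, for each intermediate length there are precisely two cells, arranged as in \cref{edge-pic}, so ordering the length-$R$ cells by a breadth-first traversal of the Cayley graph — equivalently, around each edge, ``outward from $\wmin$'' — ensures each new cell meets the current region in a connected, simply connected piece of its boundary. Once this bookkeeping is in place, $M = \partial B_R \cong S^2$ follows immediately. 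As a cross-check, the same combinatorial data (faces $F_a^{(w)}$ with $s_a\in\mathcal{A}_R(w)$, their shared edges via \cref{edge_lemma}, and their vertices via \cref{Vinberg}) lets one compute $\chi(M)=2$ directly, independently forcing $g=0$.
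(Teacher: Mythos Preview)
Your approach is correct and takes a genuinely different route from the paper. The paper argues directly on $M$: having established that $M$ is closed, connected, and orientable, it takes an arbitrary loop on $M$ and shows it can be contracted through the bulk by successively pushing it from the level-$R$ surface to the level-$(R{-}1)$ surface (using \cref{edge_lemma} to slide each segment onto the descent faces of its cell), iterating down to $\partial P^{(I)}$, and then contracting inside $P^{(I)}$. Since a positive-genus surface always carries loops that are non-contractible in any $3$-manifold it bounds (this is exactly the half-lives-half-dies input you mention), $g=0$ follows.

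Your argument instead identifies the bulk region $B_R$ as a $3$-ball by induction, which is a strictly stronger conclusion. One simplification worth noting: the ``main obstacle'' you flag actually dissolves, and no special ordering of the length-$R$ attachments is needed. If $P^{(w)}$ and $P^{(w')}$ with $l_S(w)=l_S(w')=R$ share an edge $E_{ab}$, then by \cref{edge_lemma}\labelcref{edge_lemma:i} the unique minimum-length cell around that edge has length strictly less than $R$, so $E_{ab}\subseteq B_{R-1}$ already; the analogous statement for shared vertices follows from the standard fact that each coset $wW_J$ of a finite parabolic has a unique minimal-length representative. Hence every lower-dimensional face that $P^{(w)}$ shares with a length-$R$ sibling is already contained in its descent-face disk, and $P^{(w)}\cap\bigl(B_{R-1}\cup\{\text{siblings already attached}\}\bigr)=P^{(w)}\cap B_{R-1}$ regardless of order. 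Your induction then goes through cleanly.

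The paper's loop-contraction argument is lighter on $3$-manifold input and avoids checking that the attaching region is a disk in both boundaries. Your argument uses the ball-gluing lemma but delivers the explicit identification $B_R\cong D^3$, which makes the Euler-characteristic cross-check immediate and is independently useful for reasoning about the bulk geometry.
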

\begin{proof}

  By the classification theorem of closed surfaces, every connected closed surface is homeomorphic to either the sphere, a connected sum of tori, or a connected sum of real projective planes.
  Since $M$ is orientable, it is either homeomorphic to a sphere, or a connected sum of tori.

  Consider a loop $\mathcal{C}$, on the surface $M$.
  This loop is across faces which make up $M$, which are all associated to polyhedral cells of the tessellation corresponding to Coxeter group elements of length $R$.
  Since $M$ is closed (\cref{closed}), it cannot pinch down to a single point anywhere.
  Thus for any polyhedral vertex in $M$, $M$ must contain at least two faces that meet along one of the edges touching that vertex.
  Any loop which passes between adjacent faces which only touch via a common vertex can therefore be continuously deformed into a nearby loop which passes through those faces.
  Thus we can assume wlog neighbouring faces which contain adjacent sections of $\mathcal{C}$ share a common edge.

  There are four possible ways the polyhedral cells associated to these neighbouring faces could be connected (see \cref{genus-fig} for an illustration):
  \begin{enumerate}
  \item The neighbouring faces are associated with the same polyhedral cell.
    \label[case]{2_Sphere:same}
  \item A pair of neighbouring polyhedral cells share a single common edge.
    \label[case]{2_Sphere:edge}
  \item A pair of neighbouring polyhedral cells share a common face (and therefore also common edges).
    \label[case]{2_Sphere:face}
  \end{enumerate}

  \Cref{2_Sphere:face} is not possible because, if two polyhedra $P^{(u')}$ and $P^{(v')}$ where $l_S(u') = l_S(v') = r$ meet at a face, there would exist $s_a \in S$ such that $l_S(u's_a) = l_S(v') = r = l_S(u')$, contradicting property (i) of the length function of Coxeter groups (see \cref{combinatorics}).
  Therefore, for every pair of neighbouring faces containing adjacent sections of the non-contractible loop, either \cref{2_Sphere:same} or \cref{2_Sphere:edge} must hold.

  The surface of the HQECC at radius $R-1$ (which we will denote $M'$) is contained inside $M$, where ``inside'' is well-defined as $M$ is orientable by \cref{orientable}.
  Consider continuously deforming $\mathcal{C}$ so that it lies on $M'$, by the following procedure.
  Take a section of $\mathcal{C}$ which lies on the faces of a single polyhedral cell, $P^{(u)}$, and deform it so that it lies on the faces of $P^{(u)}$ associated with $\mathcal{D}_R(u)$ whilst leaving its endpoints unchanged.
  To see that this can always be done, note that at the edge where faces $F_a^{(u)}$ and $F_b^{(v)}$ from two polyhedral cells $P^{(u)}$ and $P^{(v)}$ meet (see \cref{genus-fig}(b)), the faces $F_b^{(u)}$ and $F_a^{(v)}$ are associated with $\mathcal{D}_R(u)$ by \cref{edge_lemma}\labelcref{edge_lemma:ii,edge_lemma:iii}.
  The faces of $P^{(u)}$ associated with $\mathcal{D}_R(u)$ share common edges, so this deformation can be carried out while leaving the curve intact.

  We can repeat this contraction procedure until the loop $\mathcal{C}$ lies on the faces of $P^{(I)}$.
  At that point we can continuously contract $\mathcal{C}$ through $P^{(I)}$ to a point.
  Therefore every loop on $M$ can be contracted through the bulk of the tessellation to a point.

  Any torus (or connected sum of tori) contains curves which cannot be continuously contracted to a point through the solid torus forming its interior.
  Therefore $M$ cannot be homeomorphic to the connected sum of tori.
  Thus $M$ is homeomorphic to a 2-sphere.
\end{proof}

\begin{figure}
  \begin{tikzpicture} \draw (-2,2.5)--(0,3) node[midway,above] {$F_a^{(u)}$}; \draw (2,2.5)--(0,3) node[midway,above] {$F_b^{(u)}$}; \draw[dashed] (-2,2.5) -- (-2,1); \draw[dashed] (2,1) -- (-2,1); \draw[dashed] (2,1) -- (2,2.5); \node at (0,1.75) {$P^{(u)}$}; \node at (0,0) {(a)};

    \draw (3,2.5)--(5,3) node[midway,above] {$F_a^{(u)}$}; \draw (7,2.5)--(5,3) node[midway,above] {$F_b^{(v)}$}; \draw[dashed] (3,2.5) -- (3,1); \draw[dashed] (5,3) -- (3,1) node[midway,below] {$F_b^{(u)}$}; \draw[dashed] (7,2.5) -- (7,1); \draw[dashed] (5,3) -- (7,1) node[midway,below] {$F_a^{(v)}$}; \node at (3.5,2.25) {$P^{(u)}$}; \node at (6.5,2.25) {$P^{(v)}$}; \node at (5,0) {(b)};

    \draw (8,2.5)--(10,3) node[midway,above] {$F_a^{(u)}$}; \draw (12,2.5)--(10,3) node[midway,above] {$F_b^{(v)}$}; \draw[dashed] (8,2.5) -- (8,1); \draw[dashed] (12,2.5) -- (12,1); \draw[dashed] (10,3) -- (10,1); \draw[dashed] (8,1) -- (12,1); \node at (9,1.75) {$P^{(u)}$}; \node at (11,1.75) {$P^{(v)}$}; \node at (10,0) {(c)};

  \end{tikzpicture}
  \caption{A cross section image of the three possible ways which neighbouring faces in a loop $\mathcal{C}$ on $M$ could be connected.
    In (a) the neighbouring faces are associated with a single polyhedral cell.
    In (b) neighbouring faces are associated with polyhedral cells which share a single common edge.
    In (c) the neighbouring faces are associated with polyhedral cells which share a common edge.
    The figures all represent cross-sections of the tessellations, and in all figures dashed edges represent faces of polyhedra which do not form part of $M$, while solid edges represent faces of polyhedra which form part of $M$.}
  \label{genus-fig}
\end{figure}
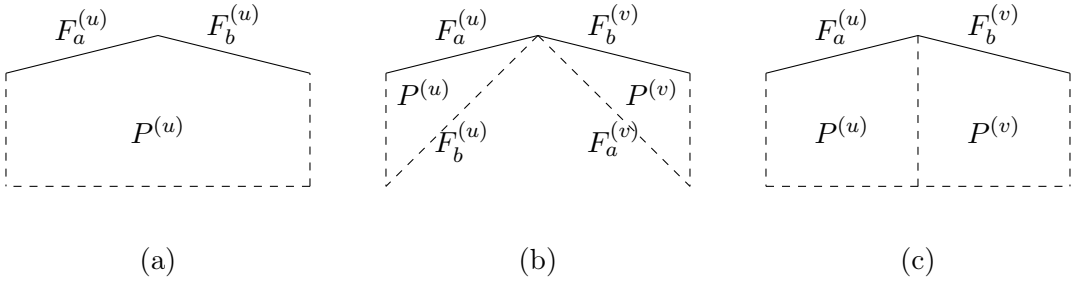

\subsubsection{The metric on the boundary surface of the HQECC}

We can upper-bound the distance between qudits on the boundary surface by the distance according the word metric between the corresponding elements of the Coxeter group.
Consider two boundary qudits, $q_a^{(u)}$ and $q_b^{(v)}$.
The Coxeter polytopes we use in the HQECC are of size $O(1)$ in every direction, so the distance between $q_a^{(u)}$ and $q_b^{(v)}$ is upper-bounded by $c d(u,v) = cl_S(u^{-1}v)$, for some constant $c$.

If $q_a^{(u)}$ and $q_b^{(v)}$ are nearest-neighbour qudits on the boundary surface of a HQECC then they are separated by distance $O(1)$.
This follows because the boundary surface of the HQECC is connected, so $F^{(u)}_a$ and $F^{(v)}_b$ must share a common edge.
The number of Coxeter polyhedra which fit round this edge is upper-bounded by $m_{ab}$, so $l_S(u^{-1}v) \leq m_{ab} = O(1).$

\subsubsection{Operators on the boundary surface of the HQECC} \label{ops}

In order to determine the overhead required to simulate the boundary Hamiltonian with a local model, we need to determine the distribution of operator weights that results from pushing the bulk Hamiltonian through the tensor network.
In~\cite{Pastawski:2015} it is shown that an operator $M$ can be reconstructed on a boundary region $A$ if $M$ lies in the greedy entanglement wedge of $A$, denoted $\mathcal{E}[A]$, where the greedy entanglement wedge is defined as below:

\begin{definition}[Greedy entanglement wedge, definition 8 from~\cite{Pastawski:2015}] Suppose $A$ is a (not necessarily connected) boundary region.
  The greedy entanglement wedge of $A$, denoted $\mathcal{E}[A]$, is the set of bulk points reached by applying the greedy algorithm to all connected components of $A$ simultaneously.
\end{definition}

The greedy algorithm is a simple procedure for finding bulk regions which can be reconstructed on a given boundary region.
It considers a sequence of cuts $\{c_\alpha\}$ through the tensor network which are bounded by $\partial A$ and which correspond to a set of isometries $\{P_\alpha\}$ from bulk indices to boundary indices.
The algorithm begins with the cut $c_1 = A$, corresponding to $P_1 = \identity$.
Each cut in the sequence is obtained from the previous one by identifying a (pseudo)-perfect tensor which has at least half of its indices contracted with $P_\alpha$, and adding that tensor to $P_\alpha$ to construct $P_{\alpha +1}$.
In this way $P_{\alpha +1}$ is guaranteed to be an isometry if $P_\alpha$ is.
The algorithm terminates when there are no tensors which have at least half their indices contracted with $P_\alpha$.
(See~\cite{Pastawski:2015} for details).
The greedy algorithm only relies on the properties of perfect tensors, so we can apply it to our HQECC in $\hyper$.

A given bulk point will be in the greedy entanglement wedge of many boundary regions.
As we are interested in minimising the operator weights of the boundary Hamiltonian we want to calculate the smallest boundary region needed to reconstruct a bulk operator.

Consider a HQECC described by a perfect tensor, $T$, and a Coxeter system $(W,S)$ with associated Coxeter polyhedra $P \subseteq \mathbb{H}^3$.
Let the growth rate of $W$ with respect to $S$ be $\tau$, and let the radius of the HQECC be $R$.

By the definition of the growth rate, the number of boundary qudits, $N$, scales as $O(\tau^R)$.
Reconstructing an operator which acts on the central bulk qudit requires an $O(1)$ fraction of the boundary, so requires $O(\tau^R)$ boundary qudits.\footnote{In theory it is possible to work out the value of the $O(1)$ constant from the properties of the (pseudo-)perfect tensor and Coxeter system used in a given HQECC, however as we are concerned with asymptotic scaling of weights we don't provide an example of this calculation.}

Consider the number of boundary qudits required to reconstruct on operator which acts on a qudit, $q^{(v)}$, at distance $x$ from the centre.
By assumption $\mathcal{A}_R(w) > \mathcal{D}_R(w)$ for all $w \in W$, and hence for all polyhedral cells in our tessellation of $\mathbb{H}^3$.
Therefore if we take an operator acting on $q^{(v)}$, we can push it to the boundary while at each step moving outwards in the tensor network - i.e.
we are guaranteed to be able to reconstruct the operator on the boundary using only qudits which are a distance $R-x$ from $q^{(v)}$.
If we consider shifting the centre of the tensor network to $q^{(v)}$ we can see that there are $O(\tau^{R-x})$ qudits which are at distance $R-x$ from $q^{(v)}$.
Not all of these lie on the boundary, but we can upper-bound the number of qudits needed for boundary reconstruction by $O(\tau^{R-x})$.

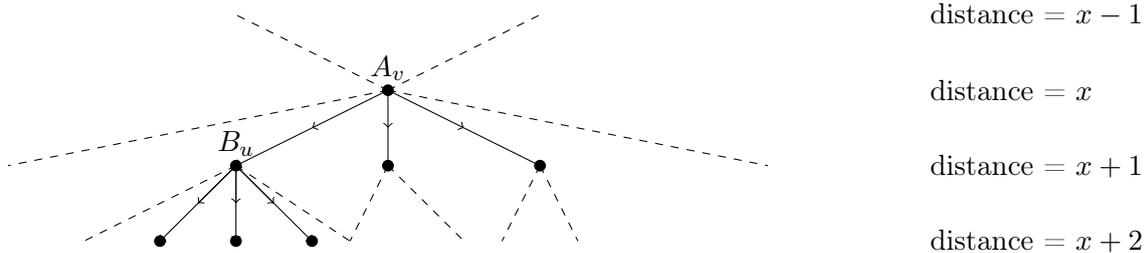
\begin{figure} \centering
  \begin{tikzpicture} \filldraw[black] (5,5) circle (2pt) node[anchor=south] {$A_v$}; \draw (12,5) node[anchor=west] {distance = $x$}; \draw [dashed] (5,5) -- (0,4); \filldraw[black] (3,4) circle (2pt) node[anchor=south] {$B_u$}; \filldraw[black] (5,4) circle (2pt); \filldraw[black] (7,4) circle (2pt); \draw [dashed] (5,5) -- (10,4); \draw [dashed] (5,5) -- (3,6); \draw [dashed] (5,5) -- (7,6); \draw (12,4) node[anchor=west] {distance = $x+1$}; \draw (12,6) node[anchor=west] {distance = $x-1$}; \draw [->] (5,5) -- (4,4.5); \draw [->] (4,4.5) -- (3,4); \draw [->] (5,5) -- (5,4.5); \draw (5,4.5) -- (5,4); \draw [->] (5,5) -- (6,4.5); \draw (6,4.5) -- (7,4); \draw (12,3) node[anchor=west] {distance = $x+2$}; \draw [dashed] (5,4) -- (6,3); \draw [dashed] (5,4) -- (4.5,3); \draw [dashed] (7,4) -- (6.5,3); \draw [dashed] (7,4) -- (7.5,3); \draw [dashed] (3,4) -- (4.5,3); \draw [dashed] (3,4) -- (1,3); \draw [->] (3,4) -- (3,3.5); \draw [->] (3,4) -- (3,3); \filldraw[black] (3,3) circle (2pt); \filldraw[black] (2,3) circle (2pt); \filldraw[black] (4,3) circle (2pt); \draw [->] (3,4) -- (2.5,3.5); \draw [->] (3,4) -- (2,3); \draw [->] (3,4) -- (3.5,3.5); \draw [->] (3,4) -- (4,3);
  \end{tikzpicture}
  \caption{If qudit $q^{(v)}$ is at distance $x$ from the centre of the HQECC, and qudit $q^{(u)}$ is a neighbouring qudit at distance $x+1$ from the centre, then an operator $A_v$ acting on $q^{(v)}$ can be pushed through $q^{(u)}$, so pushing an operator $A_v \otimes B_u$ where $B_u$ acts on $q^{(u)}$ through the tensor network will lead to a boundary operator with the same weight as pushing $A_v$ alone through the tensor network.}
  \label{fig-recon}
\end{figure}

If we consider a geometrically $k$-local operator in the bulk, where the deepest qudit the operator acts on is at distance $x$ from the centre,\footnote{Here by deepest operator we mean nearest the centre, so the minimum $x$.}
then the number of qudits needed for boundary reconstruction scales as $O(\tau^{R-x})$.
To see this consider an operator $A \otimes B$ where $A$ acts on qudit $q^{(v)}$ which is at distance $x$ from the centre, and $B$ acts on a neighbouring qudit $q^{(u)}$ at distance $x+1$, as in \cref{fig-recon}.
We can push $A$ through $q^{(u)}$ to reach the boundary, therefore $B$ can necessarily be reconstructed on a subset of the qudits required to reconstruct $A$.
Hence $B$ makes no difference to the number of qudits required for boundary reconstruction, and we only need to consider the deepest qudit a given operator acts on.
In general there may be more than one deepest qudit, however as $k$ is constant this can make at most a constant factor difference to the number of qudits needed for reconstruction.

The number of qudits at distance $x$ from the centre of the tensor network scales as $O(\tau^x)$, so we find that for $x = 0,\dots,R$ the boundary Hamiltonian has $O(\tau^x)$ operators of weight $O(\tau^{R-x})$.
All boundary operators can be chosen to be geometrically $O(\tau^{R-x})$ local (i.e.
the $O(\tau^{R-x})$ qudits which an operator act on are spread over an $O(\tau^{-x})$ fraction of the boundary).

\subsubsection{Full holographic duality} \label{main_theorem}

In this section we prove our main result: that using a HQECC and simulation techniques from Hamiltonian complexity it is possible to construct a full holographic duality between quantum many-body models in 3D hyperbolic space and models living on its 2D boundary.

We will require the following Lemma in the proof of the main theorem:

\begin{lemma} \label{propertiesHS}
Consider a HQECC constructed using Coxeter system $(W,S)$ and perfect tensor $T$.
Let $T^{(w)}$ denote the perfect tensor associated with element $w \in W$, and let $\mathcal{I}^{(w)}$ be the set of indices of $T^{(w)}$ which are contracted through faces $F_a^{(w)}$ for $ a \in \mathcal{D}_R(w)$.
Define:
  \begin{equation}
  \Pi_{\mathcal{C}^{(w)}} = \frac{1}{|\mathcal{S}^{(w)}|}\sum_{M \in \mathcal{S}^{(w)}} \overline{M}
  \end{equation}
where $\mathcal{S}^{(w)}$ is the stabilizer group of the QECC defined by viewing $T^{(w)}$ as an isometry from $\{\mathcal{I}^{(w)} \cup q^{(w)}\}$ to the complementary set of indices, and $\overline{M}$ is the boundary operator associated to the stabilizer $M$.

Let:
\begin{equation}
    H_S = \sum_{w \in W} \left( \identity - \Pi_{\mathcal{C}^{(w)}}\right)
\end{equation}
Then:
\begin{enumerate}
\item%
The kernel of $H_S$ is the code-subspace of the HQECC, $\mathcal{C}$.
\label{prop:i}

\item%
The smallest non-zero eigenvalue of $H_S$ is one.
\label{prop:ii}

\item%
Energy with respect to $H_S$ is equal to the number of logical qudits encoded by the HQECC which have a correctable error.
\label{prop:iii}

\item%
Eigenstates of $H_S$ with the same energy, but which pick up energy from errors on different logical qudits, are orthogonal.
\label{prop:4}

\end{enumerate}
\end{lemma}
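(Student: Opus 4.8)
The plan is to show that $\{\Pi_{\mathcal{C}^{(w)}}\}_w$ is a family of pairwise-commuting orthogonal projectors whose product equals the code projector $\Pi_{\mathcal{C}}:=VV^{\dagger}$ of the whole HQECC, and then to deduce all four assertions from elementary commuting-projector linear algebra. First I would record three structural facts. (a) \emph{Each $\Pi_{\mathcal{C}^{(w)}}$ is an orthogonal projector.} Because $T^{(w)}$ is a stabilizer (pseudo-)perfect tensor, pushing the local stabilizer group $\mathcal{S}^{(w)}$ through the remaining tensors produces operators $\{\overline{M}:M\in\mathcal{S}^{(w)}\}$ that are again generalised Paulis and still form a finite abelian group isomorphic to $\mathcal{S}^{(w)}$ --- this is precisely the consistency/Pauli-rank-preservation property of stabilizer tensor networks established in \cref{appendix_3} --- and the normalised sum over a finite abelian group of mutually commuting unitaries is the orthogonal projector onto their common $+1$-eigenspace. (b) \emph{The projectors commute.} Each $\overline{M}$ satisfies $\overline{M}V=V$ (the stabilizer $M$ fixes $\mathrm{Im}(T^{(w)})$ and the rest of the contraction is carried out isometrically), so $\overline{M}$ fixes $\mathrm{Im}(V)=\mathcal{C}$; being a Pauli fixing the code, it lies in the abelian stabilizer group $\mathcal{S}$ of the HQECC, hence all the $\overline{M}$, over all cells $w$, commute.

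\textbf{Step 1 (continued): the product of the projectors.} (c) \emph{$\prod_w\Pi_{\mathcal{C}^{(w)}}=\Pi_{\mathcal{C}}$.} By the stabilizer structure of a Coxeter HQECC (\cref{appendix_3}), the generators of the groups $\overline{\mathcal{S}^{(w)}}$ are independent and together generate $\mathcal{S}$, so $\mathcal{S}$ is the internal direct product of the $\overline{\mathcal{S}^{(w)}}$ (consistently with the dimension count $d^{\,n_{\mathrm{boundary}}}=d^{\,n_{\mathrm{bulk}}}\prod_w|\mathcal{S}^{(w)}|$). Expanding $\prod_w\Pi_{\mathcal{C}^{(w)}}=\prod_w\frac{1}{|\mathcal{S}^{(w)}|}\sum_{M\in\mathcal{S}^{(w)}}\overline{M}$ and using that $(M_w)_w\mapsto\prod_w\overline{M_w}$ is a bijection onto $\mathcal{S}$ gives $\frac{1}{|\mathcal{S}|}\sum_{N\in\mathcal{S}}N=VV^{\dagger}=\Pi_{\mathcal{C}}$.

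\textbf{Step 2: the four assertions.} Given (a)--(c), $H_S=\sum_w(\identity-\Pi_{\mathcal{C}^{(w)}})$ is a sum of pairwise-commuting orthogonal projectors, hence diagonalisable in a common eigenbasis with all eigenvalues non-negative integers. For the first assertion: a vector has $H_S$-eigenvalue $0$ iff it lies in $\mathrm{Im}(\Pi_{\mathcal{C}^{(w)}})$ for every $w$, i.e.\ in $\bigcap_w\mathrm{Im}(\Pi_{\mathcal{C}^{(w)}})=\mathrm{Im}(\prod_w\Pi_{\mathcal{C}^{(w)}})=\mathcal{C}$ by (c), so $\ker H_S=\mathcal{C}$. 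For the second: the non-zero eigenvalues are integers $\geq 1$, and $1$ is attained --- take a codeword $\ket{\phi}\in\mathcal{C}$ and, using independence of the generators of $\mathcal{S}$ (symplectic linear algebra over $\mathbb{Z}_p$), a boundary Pauli $E$ that $\omega$-commutes with a single generator of one group $\overline{\mathcal{S}^{(w_0)}}$ and commutes with every other generator of $\mathcal{S}$; a short computation then gives $\Pi_{\mathcal{C}^{(w)}}E\ket{\phi}=E\ket{\phi}$ for $w\neq w_0$ and $\Pi_{\mathcal{C}^{(w_0)}}E\ket{\phi}=0$, so $E\ket{\phi}$ has energy exactly $1$. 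For the third: for a joint eigenstate $\ket{\psi}$ the energy equals $|\{w:\Pi_{\mathcal{C}^{(w)}}\ket{\psi}=0\}|$, the number of cells whose local code has a violated syndrome; the cells are in bijection with the bulk/logical qudits $q^{(w)}$, and since the local code at $w$ is a (pseudo-)perfect-tensor code of distance $\lfloor t/2\rfloor-k+1$ (\cref{appendix_2}) a violated syndrome there is a correctable error whose image under decoding is localised to $q^{(w)}$ (in the bulk this is exactly the ``hole at cell $w$''/removed-tensor picture of \cref{black_hole_formation}), so the energy counts precisely the logical qudits carrying a correctable error. For the fourth: if $\ket{\psi_1},\ket{\psi_2}$ are joint eigenstates whose energies come from errors on different logical qudits, there is a cell $w_0$ with (say) $\Pi_{\mathcal{C}^{(w_0)}}\ket{\psi_1}=0$ and $\Pi_{\mathcal{C}^{(w_0)}}\ket{\psi_2}=\ket{\psi_2}$, whence $\langle\psi_1|\psi_2\rangle=\langle\psi_1|\Pi_{\mathcal{C}^{(w_0)}}|\psi_2\rangle=\langle\Pi_{\mathcal{C}^{(w_0)}}\psi_1|\psi_2\rangle=0$ by self-adjointness of $\Pi_{\mathcal{C}^{(w_0)}}$.

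\textbf{Main obstacle.} Everything past Step~1 is routine; the real content is Step~1, in particular (c): that the stabilizer group of the Coxeter HQECC is precisely the internal direct product $\bigoplus_w\overline{\mathcal{S}^{(w)}}$, together with the consistency of pushing Paulis through the network used in (a)--(b). This is the analogue for our construction of the corresponding fact for the HaPPY code, and it is exactly where the special properties of \emph{stabilizer} (pseudo-)perfect tensors are indispensable; I would prove it by induction on the layers $l_S(w)$ of the truncated tessellation, adding one tensor at a time and checking that its $|\mathcal{A}_R(w)|-|\mathcal{D}_R(w)|-1$ new stabilizer generators commute with, and are independent of, those already accumulated (with \cref{isometry_thm} guaranteeing the required isometry conditions hold at every cell). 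The only other mildly non-formal point is making precise, in the third assertion, the identification of a violated local syndrome at a cell $w$ with a single correctable error on the bulk qudit $q^{(w)}$.
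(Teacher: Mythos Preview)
Your approach is essentially the same as the paper's, with the paper being considerably terser: it simply asserts that $\bigcup_w\overline{\mathcal{S}^{(w)}}$ is a (non-minimal) generating set for the full stabilizer $\mathcal{S}$ of the HQECC, deduces $\ker H_S=\mathcal{C}$, notes each term is a projector for point~2, and argues points~3 and~4 exactly as you do via $\Pi_{\mathcal{C}^{(w)}}\ket{\phi}\in\{0,\ket{\phi}\}$. One small caution on your Step~1(c): the paper explicitly calls the generating set \emph{non-minimal}, so your stronger claim that $\mathcal{S}$ is the internal direct product $\bigoplus_w\overline{\mathcal{S}^{(w)}}$ (and the bijection argument you base on it) may not hold as stated; fortunately only generation is needed for (c), since the product of commuting projectors onto the $+1$-eigenspaces of subgroups of an abelian group is automatically the projector onto the $+1$-eigenspace of the group they jointly generate. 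Conversely, your treatment of point~2 is more complete than the paper's: you explicitly construct an eigenvalue-$1$ eigenstate via a Pauli that anticommutes with a single generator, whereas the paper only observes that the eigenvalues are non-negative integers without verifying that $1$ is attained.
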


\begin{proof}
1. The set $\left\{\cup_{w \in W} \mathcal{S}^{(w)}\right\}$ is a (non-minimal) generating set for the stabilizer group of the HQECC. Therefore, $\Pi_{\mathcal{C}^{(w)}}\ket{\psi} = \ket{\psi}$ for all $w \in W$ iff $\ket{\psi} \in \mathcal{C}$. So $H_S \ket{\psi} = 0$ iff $\ket{\psi} \in \mathcal{C}$.

2. This is immediate as each term in $H_S$ is a projector, so has eigenvalue zero or one.

3. Consider a boundary state $\ket{\phi}$.
If there is a correctable error affecting the state of encoded qudit $q^{(w)}$ then $ \exists M \in \mathcal{S}^{(w)}$ such that $\overline{M}\ket{\phi} \neq \ket{\phi}$.
Therefore $\Pi_{\mathcal{C}^{(w)}}\ket{\phi} \neq 1$.
Since $\Pi_{\mathcal{C}^{(w)}}$ is a projector, this gives $\Pi_{\mathcal{C}^{(w)}}\ket{\phi} = 0$. Therefore, $\ket{\phi}$ picks up energy +1 from the $\left( \identity - \Pi_{\mathcal{C}^{(w)}}\right)$ term in $H_S$.

If there is not a correctable error affecting the state of encoded qudit $q^{(w')}$ then $\overline{M}\ket{\phi} = \ket{\phi}$ for all $M \in \mathcal{S}^{(w)}$.
So $\Pi_{\mathcal{C}^{(w')}}\ket{\phi} = \ket{\phi}$, regardless of any errors affecting other encoded qudits.
Therefore $\ket{\phi}$ picks up zero energy from the $\left( \identity - \Pi_{\mathcal{C}^{(w')}}\right)$ term in $H_S$.

Therefore the energy with respect to $H_S$ counts the number of bulk qudits which have a correctable error.

4. Consider boundary states, $\ket{\psi}$, $\ket{\phi}$, which have correctable errors affecting the state of encoded qudits $q^{(w)}$ and $q^{(w')}$ respectively. We have:
\begin{equation}
 \Pi_{\mathcal{C}^{(w)}}\ket{\psi} = 0 \text{,          }
 \Pi_{\mathcal{C}^{(w)}}\ket{\phi} = \ket{\phi} \text{,          }
 \Pi_{\mathcal{C}^{(w')}}\ket{\psi} = \ket{\psi} \text{,           }
 \Pi_{\mathcal{C}^{(w')}}\ket{\phi} = 0
\end{equation}
Therefore $\bra{\psi}\ket{\phi} = \bra{\psi}\Pi_{\mathcal{C}^{(w)}}\ket{\phi}=\bra{\psi} \Pi_{\mathcal{C}^{(w')}}\ket{\phi} =0$.
\end{proof}

\begin{theorem}\label{main} Let $\hyper$ denote 3D hyperbolic space, and let $B_r(x)\subset\mathbb{H}^3$ denote a ball of radius $r$ centred at $x$.
  Consider any arrangement of $n$ qudits in $\hyper$ such that, for some fixed $r$, at most $k$ qudits and at least one qudit are contained within any $B_r(x)$.
  Let $L$ denote the minimum radius ball $B_L(0)$ containing all the qudits (which wlog we can take to be centred at the origin).
  Let $\Hbulk = \sum_Z h_Z$ be any local Hamiltonian on these qudits, where each $h_Z$ acts only on qudits contained within some $B_r(x)$.

  Then we can construct a Hamiltonian $\Hboundary$ on a 2D boundary manifold $\mathcal{M}\in\mathbb{H}^3$ with the following properties:
  \begin{enumerate}
  \item%
    $\mathcal{M}$ surrounds all the qudits, has diameter $O\left(\max(1,\frac{\ln(k)}{r}) L + \log\log n\right)$, and is homeomorphic to the Euclidean 2-sphere.
    \label{main:i}
  \item%
    The Hilbert space of the boundary consists of a triangulation of $\mathcal{M}$ by triangles of $O(1)$ area, with a qubit at the centre of each triangle, and a total of $O\left(n(\log n)^4\right)$ triangles/qubits.
    \label{main:ii}
  \item%
    Any local observable/measurement $M$ in the bulk has a set of corresponding observables/measurements $\{M'\}$ on the boundary with the same outcome.
    A local bulk operator $M$ can be reconstructed on a boundary region $A$ if $M$ acts within the greedy entanglement wedge of $A$, denoted $\mathcal{E}[A]$.
    \label{main:iii}
  \item%
    $\Hboundary$ consists of 2-local, nearest-neighbour interactions between the boundary qubits.
    Furthermore, $\Hboundary$ can be chosen to have full local $SU(2)$ symmetry; i.e.\ the local interactions can be chosen to all be Heisenberg interactions: $\Hboundary = \sum_{\langle i,j\rangle} \alpha_{ij} (X_iX_j + Y_iY_j + Z_iZ_j)$.
    \label{main:iv}
  \item%
    $\Hboundary$ is a $(\Delta_L,\epsilon,\eta)$-simulation of $\Hbulk$ in the rigorous sense of~\cite[Definition~23]{cubitt:2017}, with $\epsilon,\eta = 1/\poly(\Delta_L)$, $\Delta_L = \Omega\left(\norm{\Hbulk}^6\right)$, and where the maximum interaction strength $\Lambda = \max_{ij}\abs{\alpha_{ij}}$ in $\Hboundary$ scales as $\Lambda = O\left(\Delta_L^{\poly(n\log(n))}\right)$.
    \label{main:v}
  \end{enumerate}
\end{theorem}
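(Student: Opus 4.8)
The plan is to produce $\Hboundary$ by composing three simulations — a holographic stabilizer code, a cascade of perturbation gadgets, and one final universal-simulator step — and to glue them together using the fact (from \cite{cubitt:2017}) that a composition of $(\Delta,\eta,\epsilon)$-simulations is again such a simulation with controlled parameters.

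\textbf{Step 1 (Coxeter HQECC).} First I would fix a hyperbolic Coxeter system $(W,S)$ whose compact Coxeter polytope $P\subseteq\hyper$ meets the hypothesis of \cref{isometry_thm}: every $J\subseteq S$ with $W_J$ finite has $|J|\le\lfloor\tfrac{t-2}{2}\rfloor$. Since $P$ has codimension-$3$ faces, \cref{Vinberg} forces $t\ge 8$, so I take $T$ to be an $8$-index \emph{stabilizer} (pseudo-)perfect tensor on prime-dimensional qudits (\cref{stab_tensor_def}, constructed in \cref{appendix_4}), which preserves generalised Pauli rank. Embedding $T$ in every cell of the tessellation truncated at radius $R$, and using the density hypothesis to bound how many cells any $B_r(x)$ can meet, the $n$ bulk qudits fit one-per-cell inside radius $O(\max(1,\tfrac{\ln k}{r})L)$; taking $R$ a further $O(\log\log n)$ larger supplies the $\poly(n)$ boundary qudits needed downstream, which together with \cref{surface,closed,orientable,connected,2_Sphere} gives part~\labelcref{main:i}. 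Pushing $\Hbulk$ and the per-tensor stabilizer projectors through the network produces the non-local boundary Hamiltonian $\Hboundary' = H' + \Delta_S H_S$ with $V\Hbulk V^\dagger = H'\Pi_{\mathcal C}$; by \cref{propertiesHS} and \cite{cubitt:2017} this is a perfect simulation of $\Hbulk$ below $\Delta_S$, once $\Delta_S\gg\norm{\Hbulk}$. Part~\labelcref{main:iii} is then inherited verbatim from this code together with the greedy entanglement wedge of \cref{ops}.

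\textbf{Step 2 (localisation by gadgets).} The greedy/entanglement-wedge count of \cref{ops} shows $\Hboundary'$ has $O(\tau^x)$ terms of geometric weight $O(\tau^{R-x})$, $x=0,\dots,R$. I would reduce these to $2$-local nearest-neighbour form by recursively applying the qudit subdivision, fork, $3$-$2$ and crossing gadgets (\cref{subdivision-fig,fork-fig,three-two-fig,crossing-fig}) as in \cite{oliveira:2005}, but adapted to operators $P_A+P_A^\dagger$: each subdivision roughly halves a term's weight, so $O(\log n)$ layers of gadgets suffice, after which forks reduce vertex degree and crossings planarise. Each gadget is a second- or third-order perturbative simulation (\cref{second_order,third_order}); distinct gadgets act on disjoint mediator qudits, so they compose in parallel, and the series composition over the $O(\log n)$ layers is again a simulation. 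Counting the mediators introduced ($\poly(n)$ terms times $O(\log n)$ layers, yielding $O(n(\log n)^4)$ qudits) gives part~\labelcref{main:ii}, and tracking the powers of $\Delta$ through the nested gadgets — each layer multiplying the scale by a polynomial in the previous scale — produces the interaction-strength blow-up $\Lambda = O(\Delta_L^{\poly(n\log n)})$ of part~\labelcref{main:v}; setting $\Delta_L = \Omega(\norm{\Hbulk}^6)$ and $\epsilon,\eta = 1/\poly(\Delta_L)$ verifies the hypotheses of \cref{second_order,third_order} uniformly over all terms.

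\textbf{Step 3 (qubit Heisenberg model and assembly).} Finally I would apply the universal-simulator machinery of \cite{cubitt:2017} to simulate the planar $2$-local qudit Hamiltonian by a nearest-neighbour qubit Hamiltonian with full local $SU(2)$ symmetry, placing one qubit at the centre of each triangle of an $O(1)$-area triangulation of $\mathcal{M}$; this gives parts~\labelcref{main:ii,main:iv}. Composing the three simulations and using that composition preserves the simulation property, with errors adding and energy cut-offs combining (\cref{physical-properties} and \cite{cubitt:2017}), yields the final $(\Delta_L,\eta,\epsilon)$-simulation claimed in part~\labelcref{main:v}. I expect the principal difficulty to lie in Step 2: arranging a \emph{single} global $\Delta_L$ — hence a single coherent low-energy subspace — that works simultaneously for the $\poly(n)$ terms of wildly differing weight, while keeping the mediator qudits placeable on the curved surface $\mathcal{M}$ so that the final interaction graph really is planar and nearest-neighbour, and controlling how the perturbative errors of the many nested gadgets compound.
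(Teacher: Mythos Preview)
Your proposal is correct and follows essentially the same four-stage architecture as the paper's proof: embed $\Hbulk$ into a Coxeter HQECC to obtain the non-local $H'+\Delta_S H_S$, reduce locality with subdivision and $3$-$2$ gadgets, then reduce degree and planarise with triangle/fork and crossing gadgets, and finally invoke the universal Heisenberg simulation of \cite{cubitt:2017}. The paper separates your Step~2 into two distinct rounds (locality reduction first, then degree/crossing reduction), which is where the $(\log n)^4$ ancilla count is pinned down precisely, and it places the $O(\log\log n)$ radius inflation at the very end (to accommodate the final $O(1)$-area triangulation) rather than in the HQECC layer --- but these are organisational differences, not substantive ones.
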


\begin{proof}

  There are four steps to this simulation:

  \noindent\paragraph{Step 1}\\
  Simulate $\Hbulk$ with a Hamiltonian which acts on the bulk indices of a HQECC in $\hyper$ of radius $R = O\left(\max(1,\frac{\ln(k)}{r}) L\right)$.

  Note that in a tessellation of $\hyper$ by Coxeter polytopes the number of polyhedral cells in a ball of radius $r'$ scales as $O(\tau^{r'})$, where we are measuring distances using the word metric, $d(u,v) = l_S(u^{-1}v)$.
  If we want to embed a Hamiltonian $\Hbulk$ in a tessellation we will need to rescale distances between the qudits in $\Hbulk$ so that there is at most one qudit per polyhedral cell of the tessellation.
  If $\tau^{r'} = k$, then $\frac{r'}{r} = \frac{\ln(k)}{\ln(\tau) r} = O\left(\frac{\ln(k)}{r}\right)$.
  If $\frac{\ln(k)}{r} \geq 1$ then the qudits in $\Hbulk$ are more tightly packed than the polyhedral cells in the tessellation, and we need to rescale the distances between the qudits by a factor of $O\left(\frac{\ln(k)}{r}\right)$.
  If $\frac{\ln(k)}{r} < 1$ then the qudits in $\Hbulk$ are less tightly packed then the cells of the tessellation, and there is no need for rescaling.

  The radius, $R$, of the tessellation needed to contain all the qudits in $\Hbulk$ is then given by:
  \begin{equation}
    R =
    \begin{cases}
      O\left(\frac{\ln(k)}{r}L\right),& \text{if } \frac{\ln(k)}{r} \geq 1 \\
      O(L) & \text{otherwise}
    \end{cases}
  \end{equation}

  After rescaling there is at most one qudit per cell of the tessellation.
  There will be some cells of the tessellation which don't contain any qudits.
  We can put `dummy' qudits in these cells which don't participate in any interactions, so their inclusion is just equivalent to tensoring the Hamiltonian with an identity operator.
  We can upper and lower bound the number of `real' qudits in the tessellation.
  If no cells contain dummy qudits then the number of real qudits in the tesselation is given by $n_{\max} = N = O(\tau^R)$, where $N$ is the number of cells in the tessellation.
  By assumption there is at least one real qudit in a ball of radius $r'$, therefore the minimum number of real qudits in the tessellation scales as $n_{\min} = O\left(\frac{\tau^R}{\tau^{r'}} \right) = O(\tau^R) = O(N)$.
  Therefore $n = \Theta(\tau^R) = \Theta(N)$.

  If the tessellation of $\hyper$ by Coxeter polytopes is going to form a HQECC, the Coxeter polytope must have at least 7 faces.
  We show in \cref{example_1} that this bound is achievable, so we will wlog assume the tessellation we are using is by a Coxeter polytope with 7 faces.
  The perfect tensor used in the HQECC must therefore have 8 indices.
  Our method to construct perfect tensors can be used to construct perfect tensors with 8 indices for qudits of prime dimension $p \geq 11$.
  Qudits of general dimension $d$ can be incorporated by embedding qudits into a $d$-dimensional subspace of the smallest prime which satisfies both $p \geq d$ and $p \geq 11$.
  We then add one-body projectors onto the orthogonal complement of these subspaces, multiplied by some $\Delta_S' \geq |\Hbulk|$ to the embedded bulk Hamiltonian.
  The Hamiltonian, $\Hbulk'$ on the $n$ $p$-dimensional qudits is then a perfect simulation of $\Hbulk$.

 We can therefore simulate any $\Hbulk$ which meets the requirements stated in the theorem with a Hamiltonian which acts on the bulk indices of a HQECC in $\hyper$.

  Now consider simulating $\Hbulk$ with a Hamiltonian $H_B$ on the boundary surface of the HQECC.
  Let:
  \begin{equation}
    H_B =  H' + \Delta_SH_S
  \end{equation}
  where $H_S$ is as defined in \cref{propertiesHS}, $H'$ satisfies $H'\Pi_{\mathcal{C}} = V(\Hbulk' \otimes \identity_{dummy})V^\dagger$, $V$ is the encoding isometry of the HQECC, $\Pi_{\mathcal{C}}$ is the projector onto the code-subspace of the HQECC and $\identity_{dummy}$ acts on the dummy qudits.

Provided $\Delta_S \geq \norm{\Hbulk'}$, \cref{prop:i} and \cref{prop:ii} from \cref{propertiesHS} ensure that $H_B$ meets the conditions in~\cite{cubitt:2017} to be a perfect simulation of $\Hbulk'$ below energy $\Delta_S$, and (as simulations compose) a perfect simulation of $\Hbulk$.

%

  There is freedom in this definition as there are many $H'$ which satisfy the condition stated.
  We will choose an $H'$ where every bulk operator has been pushed directly out to the boundary, so that a 1-local bulk operator at radius $x$ corresponds to a boundary operator of weight $O(\tau^{R-x})$.
  We will also require that the Pauli rank of every bulk operator has been preserved (see \cref{pauli_rank} for proof we can choose $H'$ satisfying this condition).

  \noindent\paragraph{Step 2 \footnote{In steps 2 and 3 we are following the methods developed in~\cite{oliveira:2005}, replacing the qubit perturbation gadgets with qudit perturbation gadgets, and making use of the structure of the interaction graph on the boundary.
    }}\\
  Having constructed $H_B$ we now want to simulate it with a geometrically 2-local qudit Hamiltonian.
  To achieve this, we make use of the subdivision and 3-2 gadgets from \cref{perturbative_simulations}.

Consider simulating a single $k$-local interaction which is a tensor product of operators of the form $P_A + P_A^\dagger$ by 2-local interactions using these gadgets. The first step will be to simulate the interaction by two $\ceil{\frac{k}{2}}+1$- local interactions by applying the subdivision gadget. Then apply the subdivision gadget again to give four $O(\frac{k}{4})$-local interactions. Continue until all the interactions are 3-local. Finally use the 3-2 gadget to simulate on each 3-local interaction. This process requires $O(k)$ ancillas, and $O(\log(k))$ rounds of perturbation theory.

  The original qudits are in the centre of the polygon-cells which form the boundary.\footnote{The polygon cells are the faces of the tensor network which correspond to the uncontracted tensor indices.}
Place the ancilla qudits required for this simulation on the edges of the cells separating the sets of qudits they are interacting with (see \cref{ancilla_placement} for an example).

This process can be applied to each of the interactions in $H_B$ independently. $H_B$ contains $O(\tau^x)$ operators of weight $O(\tau^{R-x})$ for $x \in [0,R]$ (see \cref{ops}).
  Therefore applying this step to every interaction in $H_B$ will require a total of:
  \begin{equation}
    N_a = O( \sum_{x=0}^R \tau^x \tau^{R-x}) = O(R\tau^R) = O(n\log(n))
  \end{equation} ancilla qudits.
  Each edge will therefore contain $O(R) = O(\log(n))$ qudits.
  When breaking down the interactions to 2-local the ancillas are placed nearest the qudits they are interacting with, so none of the resulting 2-local interactions cross more than two of the cells which make up the boundary.

  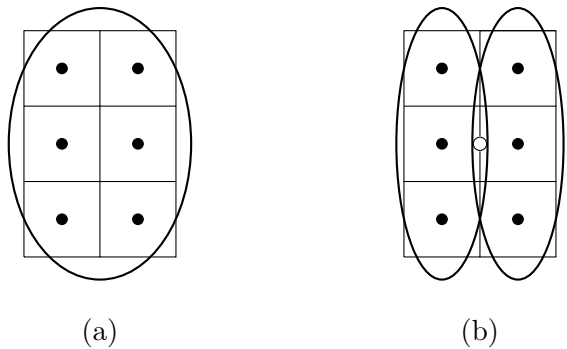
\begin{figure}
  \centering
  \begin{tikzpicture}
  \draw (0,0) -- (1,0) -- (1,1) -- (0,1) -- (0,0);
   \draw (1,1) -- (1,2) -- (0,2) -- (0,1);
   \draw (1,1) -- (2,1) -- (2,0) -- (1,0);
   \draw (2,1) -- (2,2) -- (1,2);
   \draw (2,2) -- (2,3) -- (1,3) -- (1,2);
\draw(0,2) -- (0,3) -- (1,3);
\filldraw[black] (0.5,0.5) circle (2pt);
\filldraw[black] (1.5,0.5) circle (2pt);
\filldraw[black] (0.5,2.5) circle (2pt);
\filldraw[black] (0.5,1.5) circle (2pt);
\filldraw[black] (1.5,1.5) circle (2pt);
\filldraw[black] (1.5,2.5) circle (2pt);
\draw[thick] (1,1.5) ellipse (1.2cm and 1.8cm);
\node at (1,-1) {(a)};

  \draw (5,0) -- (6,0) -- (6,1) -- (5,1) -- (5,0);
   \draw  (5,2) -- (5,1);
   \draw (6,1) -- (6,1.4);
   \draw (6,1.6) -- (6,2);
   \draw (5,2)--(6,2);
   \draw (6,1) -- (7,1) -- (7,0) -- (6,0);
   \draw (7,1) -- (7,2) -- (6,2);
   \draw (7,2) -- (7,3) -- (6,3) -- (6,2);
\draw(5,2) -- (5,3) -- (7,3);
\filldraw[black] (5.5,0.5) circle (2pt);
\filldraw[black] (6.5,0.5) circle (2pt);
\filldraw[black] (5.5,2.5) circle (2pt);
\filldraw[black] (5.5,1.5) circle (2pt);
\filldraw[black] (6.5,1.5) circle (2pt);
\filldraw[black] (6.5,2.5) circle (2pt);
\draw (6,1.5) circle (2.5pt);
\draw[thick] (5.5,1.5) ellipse (0.6cm and 1.8cm);
\draw[thick] (6.5,1.5) ellipse (0.6cm and 1.8cm);
\node at (6,-1) {(b)};
  \end{tikzpicture}
  \caption{(a) A 6-local interaction in a HQECC where the cells composing the boundary surfaces are square. (b) The 6-local interaction is simulated by two 4-local interactions, by introducing an ancilla qudit (denoted by the white vertex) which is placed on an edge separating the sets of qudits it is interacting with.} \label{ancilla_placement}
  \end{figure}

  As there are interactions with Pauli-weights which scale as $O(n)$ this step requires $O(\log(n))$ rounds of perturbation theory.
  By \cref{second_order}, the first round of perturbation theory will require interaction strengths of $\Delta_L = \Omega\left(\norm{\Hbulk}^{6}\right)$, while $r$ rounds of perturbation theory requires interaction strengths to scale as $\Delta_L^{6^r}$. Therefore, this step requires maximum interaction strengths scaling as $\Lambda = O\left(\Delta_L^{\poly(n)}\right)$.

  \noindent\paragraph{Step 3}\\
  Each of the ancillas introduced in step 2 has degree at most 6.
  The degree of the original qudits after step 2 is the same as their degree in the initial hypergraph, which can be calculated as:
  \begin{equation}
    d = \frac{\sum_{x=0}^R \tau^x \tau^{R-x}}{\tau^R} = R
  \end{equation}
  Therefore there are $O(\tau^R)$ qudits of degree $O(R)$, and $O(R\tau^R)$ qudits of degree $O(1)$.

  We reduce the degree of each vertex to at most $3(p-1)$ (where $p$ is the local dimension of each qudit) in the following manner:\footnote{A recent paper~\cite{Aharonov:2018} has derived a method to reduce the degree of a Hamiltonian, $H$, using only polynomial strength interactions.
    However, this method cannot be used here as it assumes $||H|| = O(\poly(n))$, whereas $||H_B|| = O(\exp(n))$.}
  \begin{enumerate}[i.]
  \item Use the subdivision gadget to localise each qudit with degree $O(R)$.
    This requires $O(R)$ ancilla qudits per cell of the boundary, so $O(R\tau^R)$ ancillas in total.
  \item Apply the triangle gadget to each qudit to reduce the degree to $3(p-1)$, by pairing edges of the form $P_a + P_a^\dagger$ in parallel.
    Reducing the degree of one $O(R)$ degree vertex in this manner requires $O(\log(R))$ rounds of perturbation theory, and $O(R)$ ancillas.
    Therefore applying this step to the entire graph requires $O(R\tau^R)$ ancillas.
  \end{enumerate}

  Once the degree of each qudit has been reduced there are $O(R)$ qudits in each of the cells of the boundary.

  Finally we need to remove all the crossings using crossing gadgets.
  Each interaction is constrained to 2 of the cells which make up the boundary surface, so we can consider each cell and its adjacent cells separately.\footnote{This will double-count some crossings as each cell will be included when considering its adjacent cells too, but as we are only interested in the asymptotic scaling this double-counting is not important.}
  There are $O(R)$ qudits, and hence at most $O(R^2)$ interactions in each cell, including contributions from adjacent cells.
  Therefore there are at most $O(R^4)$ crossings.
  We use the subdivision gadget to localise each crossing,\footnote{This step can be skipped for edges with only one crossing, where each qudit involved in the crossing interactions has degree at most $3(p-1)$.}
  then apply the crossing gadget in parallel to each localised crossing.
  This requires $O(R^4)$ ancillas per cell of the boundary, so requires $O(R^4\tau^R) = O(n\log(n)^4)$ ancilla qudits.
  These ancilla qudits are placed within the corresponding face of the boundary surface.

  This step required $O(\log \log n)$ rounds of perturbation theory, so we require the maximum interaction strength $\Lambda$ to scale as $\Lambda = O\left(\Delta_L^{\poly(n\log(n))}\right)$

  Label the Hamiltonian resulting from this step as $H_B'$.

  \noindent\paragraph{Step 4}\\
  Finally, if we want the boundary Hamiltonian to have full local $SU(2)$ symmetry, we can simulate $H_B'$ with a qubit $\{XX+YY+ZZ\}$-Hamiltonian on a $2d$ lattice, $\Hboundary$.

  First use the technique from Lemma 21 in~\cite{cubitt:2017} to simulate $H_B'$ with a qubit Hamiltonian by simulating each $p$-dimensional qudit with $\ceil*{\log_2p}$ qubits.
  The resulting Hamiltonian is given by:
  \begin{equation}
    H_B'' = \mathcal{E}(H_B') + \Delta \sum_{i=0}^{n'} P_i
  \end{equation}
  where $\mathcal{E}(M) = VMV^\dagger$, $V = W^{\otimes n'}$,\footnote{$n'$ is the total number of qudits in $H_B'$.}
  $W$ is an isometry $W: \mathbb{C}^d \rightarrow \left(\mathbb{C}^2 \right)^{\otimes \ceil*{\log_2 p}}$, and $P = \identity - WW^\dagger$.
  This requires $n' \ceil*{\log_2p} = O(n')$ qubits.

  The operators in $H_B''$ are at most $2 \ceil*{\log_2 p}$-local, and the qubits have degree at most $3(p-1)\ceil*{\log_2 p}$.

  Next we use the technique from~\cite[Lemma~22]{cubitt:2017} to simulate $H_B''$ with a real Hamiltonian.
  This is a perfect simulation, which requires $2N' = O(n')$ qubits, it increases the locality of the interactions to at most $4 \ceil*{\log_2 p} = O(1)$ and doesn't change the degree of the qubits.

  Using the technique from~\cite[Lemma~39]{cubitt:2017} we then simulate the real Hamiltonian with a Hamiltonian containing no $Y$ operators.
  This involves adding an ancilla qubit for every interaction in the Hamiltonian.
  As each qubit is involved in a fixed number of interactions, this only requires $O(n')$ ancilla qubits, so the total number of qubits involved in the Hamiltonian is still $O(n')$.
  The locality of each interaction in the Hamiltonian is increased by 1.
  This requires $O(1)$ rounds of perturbation theory.

  The qubit subdivision and 3-2 perturbation gadgets from~\cite{oliveira:2005} can then be used to reduce the Hamiltonian containing no $Y$s to a 2-local Pauli interaction with no $Y$s, leaving a Hamiltonian of the form $\sum_{i>j} \alpha_{ij}A_{ij} + \sum_k\left(\beta_kX_k + \gamma_k Z_k \right)$ where $A_{ij}$ is one of $X_iX_j$, $X_iZ_j$, $Z_iX_j$ or $Z_iZ_j$~\cite[Lemma~39]{cubitt:2017}.
  This requires $O(1)$ rounds of perturbation theory, and $O(n')$ ancilla qubits.

  Next we use the subspace perturbation gadget from~\cite[Lemma~41]{cubitt:2017} to simulate the Hamiltonian of the form $\sum_{i>j} \alpha_{ij}A_{ij} + \sum_k\left(\beta_kX_k + \gamma_k Z_k \right)$ with a $\{XX+YY+ZZ\}$-Hamiltonian.
  This requires encoding one logical qubit in four physical qubits, so introduces $O(N')$ ancilla qubits, and requires $O(1)$ rounds of perturbation theory.

  Finally, we can simulate the general $\{XX+YY+ZZ\}$-Hamiltonian with a $\{XX+YY+ZZ\}$-Hamiltonian on a triangulation of the boundary surface of the HQECC using the perturbation gadgets from~\cite{Piddock:2017}.
  These perturbation gadgets are generalisations of the fork, crossing and subdivision gadgets from~\cite{oliveira:2005} which use a pair of mediator qubits, rather than a single ancilla qubit, so that all interactions in the final Hamiltonian are of the form $\{XX+YY+ZZ\}$.
  Following the method in~\cite{Piddock:2017}, first reduce the degree of all vertices in the interaction graph to 3 using the subdivision and fork gadgets.
  This requires $O(1)$ ancillas and $O(1)$ rounds of perturbation theory per qubit, and can be done to all qubits in the Hamiltonian in parallel.
  Next remove all the crossings.
  The qudit Hamiltonian $H_B'$ had no crossings, and our simulation of $H_B'$ with a $\{XX+YY+ZZ\}$-Hamiltonian will have introduced $O(1)$ crossings per qudit in $H_B'$, so $O(n')$ crossings across the entire interaction graph.
  The crossings are localised using the subdivision gadget, then removed using the crossing gadget.
  This requires $O(n')$ ancilla qubits.

  Step 4 therefore requires a total of $O(n') = O\left(n \log(n)^4\right)$ qubits.
  The scaling of the interaction strengths in the Hamiltonian is unchanged by this final step as it only required $O(1)$ rounds of perturbation theory.

  Each qubit has degree at most 3, so we can construct a triangulation of the boundary surface with a qubit in the centre of each triangle.
  This is not an $O(1)$ triangulation, but if we increase the diameter of our boundary manifold to $O\left(\max(1,\frac{\ln(k)}{r}) L + \log\log n\right)$ then we can construct an $O(1)$ triangulation with a qubit in the centre of each triangle (this follows because we are working in hyperbolic space).
  This surface will be homeomorphic to a sphere as boundary surface of the HQECC is homeomorphic to a sphere by \cref{2_Sphere}.

  The final Hamiltonian, $\Hboundary$, is a $(\Delta_L,\epsilon,\eta)$-simulation of $\Hbulk$ with full local SU(2) symmetry.
  The total number of qubits required scales as $O\left(n\log(n)^4\right)$, and the interaction strengths in the Hamiltonian scale as $\max_{ij}|\alpha_{ij}| = O\left(\Delta_L^{\poly(n \log(n))} \right)$ where $\Delta_L = \Omega\left(\norm{\Hbulk}^6\right)$.
  The perturbation gadget techniques require that $\epsilon, \eta = 1 / \poly(\Delta_L)$.\footnote{In steps 2 and 3 we assume that all operators are Pauli rank 2 operators of the form $P_A + P_A^\dagger$.
    We have shown that the HQECC preserves the Pauli rank of operators (\cref{pauli_rank}), so accounting for operators of general form will only increase the overheads calculated by a constant factor.}

  It is immediate from the definition of the greedy entanglement wedge~\cite[Definition~8]{Pastawski:2015} that bulk local operators in $\mathcal{E}(A)$ can be reconstructed on $A$.
  The boundary observables / measurements $\{M'\}$ corresponding to a bulk observable / measurement $M$ have the same outcome because simulations preserve the outcome of all measurements.

\end{proof}

Note that the fact that $\Hboundary$ is a $(\Delta_L,\epsilon,\eta)$-simulation of $\Hbulk$ immediately implies, by \cref{physical-properties}, that the partition function, time dynamics, and all measurement outcomes of the boundary are the same as that of the bulk, up to $O(1/\poly(\epsilon,\eta))$ errors which can be made as small as desired by increasing $\Delta_L$.

\subsection{HQECC constructed from pentagonal prisms} \label{example_1}

The proof of \cref{main} does not require any particular HQECC -- all it requires is that one exists.
Here and in \cref{example_2} we provide examples of two pairs of Coxeter group and tensor which can be used to construct a HQECC.
There are many more which could be constructed.

First we construct a HQECC using a perfect tensor, and a non-uniform Coxeter polytope.
The Coxeter polytope, $P_1$, we use is a pentagonal prism.
It is described by the Coxeter diagram, $\Sigma(P_1)$, shown in \cref{pentagonal_prism}.
The elliptic subdiagrams of $\Sigma(P_1)$ are shown in \cref{prism_elliptic}.\footnote{Elliptic subdiagrams are subdiagrams containing $J \subseteq S$ such that $W_J$ is finite.}

The maximum $|J|$ such that $W_J$ is finite is three, so $|\mathcal{D}_R(w)| \leq 3 \; \forall w \in W$.
Clearly if we construct a perfect tensor with 8 legs, and place one tensor in each polyhedral-cell in a tessellation of $\mathbb{H}^3$ by pentagonal prisms then the tensor network will be a HQECC.
Details of the tensor are given in \cref{perf_sec}.

The growth rate of the Coxeter group is 3.13.\footnote{The growth rate was calculated using CoxIterWeb~\cite{Guglielmetti}, a web applet which computes invariants of Coxiter groups.}

\begin{figure} \centering
  \begin{tikzpicture} \draw [dashed] (-0.5,0) -- (1,0); \draw (1,0) -- (2,1); \draw (1,0) -- (2,-1); \draw [dashed] (2,1) -- (3.5,1); \draw [dashed] (2,-1) -- (3.5,-1); \draw [dashed] (3.5,1) -- (4.5,0); \draw [dashed] (3.5,-1) -- (4.5,0); \draw [dashed] (2,-1) -- (2,1); \filldraw[black] (-0.5,0) circle (2pt) node[anchor=east] {$s_g$}; \filldraw[black] (1,0) circle (2pt) node[anchor=south] {$s_f$}; \filldraw[black] (2,1) circle (2pt) node[anchor=south] {$s_a$}; \filldraw[black] (2,-1) circle (2pt) node[anchor=north] {$s_d$}; \filldraw[black] (3.5,1) circle (2pt) node[anchor=south] {$s_c$}; \filldraw[black] (3.5,-1) circle (2pt) node[anchor=north] {$s_b$}; \filldraw[black] (4.5,0) circle (2pt) node[anchor=west] {$s_e$};
  \end{tikzpicture}
  \caption{Coxeter diagram, $\Sigma(P_1)$, for the the Coxeter group associated with the pentagonal prism.
    The vertices of the graph are labelled with the corresponding generator of the Coxeter group.
    The faces $F_f$ and $F_g$ of the Coxeter polyhedra (corresponding to the generators $s_f$ and $s_g$) are the pentagonal faces of the prism, the faces $F_a-F_e$ are the quadrilateral faces.}
  \label{pentagonal_prism}
\end{figure}
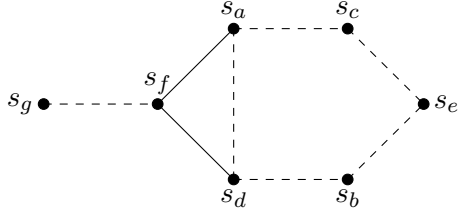

\begin{table}
  \begin{tabular}{c|c|c}
    Name & Diagram & Generating sets \\  \hline
    $A_1$ &
            \begin{tikzpicture} \filldraw[black] (0,0) circle (2pt) node[anchor=west] {};
            \end{tikzpicture}  & $\{s_a\}, \{s_b\}, \{s_c\}, \{s_d\}, \{s_e\}, \{s_f\}, \{s_g\}$
    \\

    \hline

         &

                   &  $ \{s_a, s_b\}, \{s_a, s_e\},\{s_a, s_g\},\{s_b, s_c\},$ \\ $A_1 \times A_1$ &
                                                                                                     \begin{tikzpicture} \filldraw[black] (0,0) circle (2pt) node[anchor=west] {}; \filldraw[black] (1,0) circle (2pt) node[anchor=west] {};
                                                                                                     \end{tikzpicture} &  $\{s_b, s_f\}, \{s_b, s_g\},\{s_c, s_d\}, $ \\ & & $ \{s_c, s_f\}, \{s_c, s_g\},\{s_d, s_e\}, $ \\ & & $ \{s_d, s_g\}, \{s_e, s_f\}, \{s_e, s_g\}$

    \\ \hline

    $A_2$ (equivalently $I_2^{(3)}$)
         &
           \begin{tikzpicture} \filldraw[black] (0,0) circle (2pt) node[anchor=west] {}; \filldraw[black] (1,0) circle (2pt) node[anchor=west] {}; \draw (0,0) -- (1,0);
           \end{tikzpicture}
                   & $ \{s_a, s_f\},\{s_d, s_f\}$

    \\ \hline
    $A_1 \times A_1 \times A_1$
         &
           \begin{tikzpicture} \filldraw[black] (0,0) circle (2pt) node[anchor=west] {}; \filldraw[black] (1,0) circle (2pt) node[anchor=west] {}; \filldraw[black] (2,0) circle (2pt) node[anchor=west] {};
           \end{tikzpicture}
                   & $\{s_a, s_e, s_g\} , \{s_a, s_b, s_g\} , \{s_c, s_d, s_g\} , $ \\ & & $ \{s_d, s_e, s_g\}, \{s_b, s_c, s_f\} , \{s_b, s_c, s_g\}$

    \\ \hline

    $A_2 \times A_1$
         &
           \begin{tikzpicture} \filldraw[black] (0,0) circle (2pt) node[anchor=west] {}; \filldraw[black] (1,0) circle (2pt) node[anchor=west] {}; \filldraw[black] (2,0) circle (2pt) node[anchor=west] {}; \draw (0,0) -- (1,0);
           \end{tikzpicture}

                   & $\{s_a, s_b, s_f\},\{s_a, s_e, s_f\}, \{s_c, s_d, s_f\} , \{s_d, s_e, s_f\} $

  \end{tabular}
  \caption{Elliptic subdiagrams of $\Sigma(P_1)$.}
  \label{prism_elliptic}
\end{table}

\subsubsection{Perfect tensor} \label{perf_sec}

We use the procedure set out in \cref{appendix_5} to construct a perfect tensor with the required properties.

Construct a AME$(8,11)$ stabilizer state via a classical Reed Solomon code with $n=8,k=4$ over $\mathbb{Z}_{11}$ defined by the set $S = \{1,2,3,4,5,6,7,8\} \in \mathbb{Z}_{11}$.
The generator matrix is given by:
\begin{equation} \label{non-standard}
  G = \left(
    \begin{array}{c c c c c c c c}
      1 & 1 & 1 & 1 & 1 & 1 & 1 & 1 \\
      1 & 2 & 3 & 4 & 5 & 6 & 7 & 8 \\
      1 & 4 & 9 & 5 &  3 & 3 & 5 & 9 \\
      1 & 8 & 5 & 9 &  4 & 7 & 2 & 6
    \end{array}
  \right)
\end{equation}

In standard form this becomes:

\begin{equation}
  G = \left(
    \begin{array}{@{}c c c c  c c c c@{}}
      1 & 0 & 0 & 0 & 10 & 7 & 1 & 2 \\
      0 & 1 & 0 & 0 & 4 & 4 & 3 & 4 \\
      0 & 0 & 1 & 0 & 5 & 2 & 10 & 4 \\
      0 & 0 & 0 & 1 & 4 & 10 & 9 & 2
    \end{array}
  \right)
\end{equation}

Giving a parity check matrix:
\begin{equation}
  H = \left(
    \begin{array}{@{}c c c c  c c c c@{}}
      1 & 7 & 6 & 7 &1 & 0 & 0 & 0  \\
      4 & 7 & 9 & 1 & 0 & 1 & 0 & 0 \\
      10 & 8 & 1 & 2 & 0 & 0 & 1 & 0  \\
      9 & 7 & 7 & 9 & 0 & 0 & 0 & 1
    \end{array}
  \right)
\end{equation}

The stabilizer generators of the AME$(8,11)$ stabilizer state are then given by:
\begin{equation}
  M = \left(
    \begin{array}{@{}c c c c  c c c c | c c c c  c c c c@{}}
      1 & 0 & 0 & 0 & 10 & 7 & 1 & 2 &0 &  0 &  0 &  0 &   0 & 0 &  0 & 0\\
      0 & 1 & 0 & 0 & 4 & 4 & 3 & 4 &0 &  0 &  0 &  0 &   0 & 0 &  0 & 0 \\
      0 & 0 & 1 & 0 & 5 & 2 & 10 & 4  &0 &  0 &  0 &  0 &   0 & 0 &  0 & 0 \\
      0 & 0 & 0 & 1 & 4 & 10 & 9 & 2 &0 &  0 &  0 &  0 &   0 & 0 &  0 & 0 \\
      0 &  0 &  0 &  0 &   0 & 0 &  0 & 0 &1 & 7 & 6 & 7 &1 & 0 & 0 & 0  \\
      0 &  0 &  0 &  0 &   0 & 0 &  0 & 0 & 4 & 7 & 9 & 1 & 0 & 1 & 0 & 0 \\
      0 &  0 &  0 &  0 &   0 & 0 &  0 & 0 & 10 & 8 & 1 & 2 & 0 & 0 & 1 & 0  \\
      0 &  0 &  0 &  0 &   0 & 0 &  0 & 0 & 9 & 7 & 7 & 9 & 0 & 0 & 0 & 1
    \end{array}
  \right)
\end{equation}

The tensor which describes the stabilizer state is a perfect tensor.

\subsection{HQECC based on the order-4 dodecahedral honeycomb} \label{example_2}

There are only four compact, regular honeycombings of $\hyper$, and all the honeycombings are by polyhedra with even number of faces, so to use any of them in a HQECC would require a pseudo-perfect tensor.
Here we use the order-4 dodecahedral honeycomb.

The Coxeter polytope, $P_2$, we use is right angled dodecahedron.
It is described by the Coxeter diagram, $\Sigma(P_2)$, shown in \cref{dodecahedral_honeycomb}.
The elliptic subdiagrams of $\Sigma(P_2)$ are shown in \cref{dodec_ellip}.

The maximum $|J|$ such that $W_J$ is finite is three, so $|\mathcal{D}_R(w)| \leq 3 \; \forall w \in W$.
Clearly if we construct a pseudo-perfect tensor with 13 legs, and place one tensor in each polyhedral-cell in a tessellation of $\mathbb{H}^3$ by right-angled dodecahedra then the tensor network will be a HQECC.
Details of the tensor are given in \cref{pseudo_sec}.

The growth rate of the Coxeter group is 7.87.\footnote{The growth rate was calculated using CoxIterWeb~\cite{Guglielmetti}, a web applet which computes invariants of Coxiter groups.}

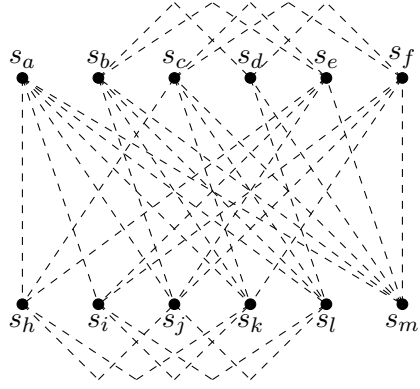
\begin{figure} \centering
  \begin{tikzpicture} \draw[dashed] (0,2) -- (0,-1); \draw[dashed] (0,2) -- (1,-1); \draw[dashed] (0,2) -- (2,-1); \draw[dashed] (0,2) -- (3,-1); \draw[dashed] (0,2) -- (4,-1); \draw[dashed] (0,2) -- (5,-1);

    \draw[dashed] (5,-1) -- (1,2); \draw[dashed] (5,-1) -- (2,2); \draw[dashed] (5,-1) -- (3,2); \draw[dashed] (5,-1) -- (4,2); \draw[dashed] (5,-1) -- (5,2);

    \draw[dashed](1,2) -- (2,3); \draw[dashed](2,3) -- (3,2);

    \draw[dashed](1,2) -- (2.5,3); \draw[dashed](4,2) -- (2.5,3);

    \draw[dashed](1,2) -- (4,-1); \draw[dashed](1,2) -- (3,-1); \draw[dashed](1,2) -- (2,-1);

    \draw[dashed](2,2) -- (3,3); \draw[dashed](3,3) -- (4,2);

    \draw[dashed](2,2) -- (3.5,3); \draw[dashed](3.5,3) -- (5,2);

    \draw[dashed](2,2) -- (0,-1); \draw[dashed](2,2) -- (4,-1); \draw[dashed](2,2) -- (3,-1);

    \draw[dashed](3,2) -- (4,3); \draw[dashed](5,2) -- (4,3);

    \draw[dashed](3,2) -- (4,-1);

    \draw[dashed](4,2) -- (0,-1); \draw[dashed](4,2) -- (1,-1); \draw[dashed](4,2) -- (2,-1);

    \draw[dashed](5,2) -- (1,-1); \draw[dashed](5,2) -- (2,-1); \draw[dashed](5,2) -- (3,-1);

    \draw[dashed](0,-1) -- (1,-2); \draw[dashed](1,-2) -- (2,-1);

    \draw[dashed](0,-1) -- (1.5,-2); \draw[dashed](1.5,-2) -- (3,-1);

    \draw[dashed](1,-1) -- (2,-2); \draw[dashed](2,-2) -- (3,-1);

    \draw[dashed](1,-1) -- (2.5,-2); \draw[dashed](2.5,-2) -- (4,-1);

    \draw[dashed](2,-1) -- (3,-2); \draw[dashed](3,-2) -- (4,-1);

    \filldraw[black] (0,2) circle (2pt) node[anchor=south] {$s_a$}; \filldraw[black] (1,2) circle (2pt) node[anchor=south] {$s_b$}; \filldraw[black] (2,2) circle (2pt) node[anchor=south] {$s_c$}; \filldraw[black] (3,2) circle (2pt) node[anchor=south] {$s_d$}; \filldraw[black] (4,2) circle (2pt) node[anchor=south] {$s_e$}; \filldraw[black] (5,2) circle (2pt) node[anchor=south] {$s_f$}; \filldraw[black] (0,-1) circle (2pt) node[anchor=north] {$s_h$}; \filldraw[black] (1,-1) circle (2pt) node[anchor=north] {$s_i$}; \filldraw[black] (2,-1) circle (2pt) node[anchor=north] {$s_j$}; \filldraw[black] (3,-1) circle (2pt) node[anchor=north] {$s_k$}; \filldraw[black] (4,-1) circle (2pt) node[anchor=north] {$s_l$}; \filldraw[black] (5,-1) circle (2pt) node[anchor=north] {$s_m$};
  \end{tikzpicture}
  \caption{Coxeter diagram, $\Sigma(P_2)$, for the the Coxeter group associated with order-4 dodecahedral honeycomb.
    Each pentagonal face of a dodecahedral cell has a dihedral angle of $\frac{\pi}{2}$ with the five faces it intersects, and diverges from the other six faces of the dodecahedron.}
  \label{dodecahedral_honeycomb}
\end{figure}

\begin{table}
  \begin{tabular}{c|c|c}
    Name & Diagram & Generating sets \\  \hline
    $A_1$ &
            \begin{tikzpicture} \filldraw[black] (0,0) circle (2pt) node[anchor=west] {};
            \end{tikzpicture}  & $ \{s_a\}, \{s_b\}, \{s_c\}, \{s_d\}, \{s_e\}, \{s_f\}, $ \\ & & $\{s_g\}, \{s_h\}, \{s_i\}, \{s_j\}, \{s_k\}, \{s_l\}$
    \\

    \hline

         &

                   &  $ \{s_a,s_b\},  \{s_a,s_c\}, \{s_a,s_d\}, \{s_a,s_e\}, $ \\ & & $ \{s_a,s_f\},  \{s_b,s_c\},  \{s_b,s_f\}, \{s_b,s_g\}$ \\ & &
                                                                                                                                                     $\{s_b,s_h\}, \{s_c,s_d\}, \{s_c,s_h\},\{s_c,s_i\}, $ \\ & & $ \{s_d,s_e\} ,  \{s_d,s_i\}, \{s_d,s_j\},  \{s_e,s_f\} , $

    \\ $A_1 \times A_1$ &
                          \begin{tikzpicture} \filldraw[black] (0,0) circle (2pt) node[anchor=west] {}; \filldraw[black] (1,0) circle (2pt) node[anchor=west] {};
                          \end{tikzpicture} &

                                              $\{s_e,s_j\}, \{s_e,s_k\},  \{s_f,s_k\},\{s_f,s_g\},  $ \\ & & $ \{s_g,s_l\} ,  \{s_g,s_h\}, \{s_g,s_k\},  \{s_h,s_l\}$ \\
         & & $\{s_h,s_i\}, \{s_i,s_j\},  \{s_i,s_l\},\{s_j,s_l\},  $ \\ & & $  \{s_j,s_k\} ,  \{s_k,s_l\}$

    \\ \hline

         & & $\{s_a,s_b,s_c\}, \{s_a,s_c,s_d\},  \{s_a,s_d,s_e\},  \{s_a,s_e,s_f\}$ \\ & & $\{s_a,s_f,s_b\},   \{s_b,s_f,s_g\},  \{s_b,s_c,s_h\},  \{s_b,s_g,s_h\}$ \\  & & $\{s_c,s_h,s_i\},  \{s_c,s_d,s_i\},  \{s_d,s_e,s_j\},   \{s_d,s_i,s_j\},$
    \\
    $A_1 \times A_1 \times A_1$
         &
           \begin{tikzpicture} \filldraw[black] (0,0) circle (2pt) node[anchor=west] {}; \filldraw[black] (1,0) circle (2pt) node[anchor=west] {}; \filldraw[black] (2,0) circle (2pt) node[anchor=west] {};
           \end{tikzpicture}
                   & $\{s_e,s_f,s_k\}, \{s_e,s_j,s_k\},  \{s_f,s_g,s_k\},  \{s_g,s_k,s_l\}$ \\ & & $\{s_g,s_h,s_l\},   \{s_h,s_i,s_l\},  \{s_i,s_j,s_l\}, \{s_j,s_k,s_l\}$
  \end{tabular}
  \caption{Elliptic subdiagrams of $\Sigma(P_2)$.}
  \label{dodec_ellip}
\end{table}

\subsubsection{Pseudo-perfect tensor} \label{pseudo_sec}

We use the procedure set out in \cref{appendix_5} to construct a pseudo-perfect tensor with the required properties.

Construct a AME$(13,13)$ stabilizer state via a classical Reed Solomon code with $n=13,k=6$ over $\mathbb{Z}_{13}$ defined by the set $S = \{0,1,2,3,4,5,6,7,8,9,10,11,12\} \in \mathbb{Z}_{13}$.
The generator matrix is given by:
\begin{equation}
  G = \left(
    \begin{array}{c c c c c c c c c c c c c}
      1 & 1 & 1 & 1 & 1 & 1 & 1 & 1 & 1 & 1 & 1 & 1 & 1 \\
      0 & 1 & 2 & 3 & 4 & 5 & 6 & 7 & 8 & 9 & 10 & 11 & 12 \\
      0 & 1 & 4 & 9 & 3 & 12 & 10 & 10 & 12 & 3 & 9 & 4 & 1 \\
      0 & 1 & 8 & 1 & 12 & 8 & 8 & 5 & 5 & 1 & 12 & 5 & 12 \\
      0 & 1 & 3 & 3 & 9 & 1 & 9 & 9 & 1 & 9 & 3 & 3 & 1 \\
      0 & 1 & 6 & 9 & 10 & 5 & 2 & 11 & 8 & 3 & 4 & 7 & 12
    \end{array}
  \right)
\end{equation}

In standard form this becomes:

\begin{equation}
  G = \left(
    \begin{array}{@{}c c c c c c c c c c c c c @{}}
      1 & 0 & 0 & 0 & 0 & 0 & 12 & 7 & 5 & 9 & 4 & 8 & 6 \\
      0 & 1 & 0 & 0 & 0 & 0 & 6 & 9 & 3 & 3 & 11 & 8 &11 \\
      0 & 0 & 1 & 0 & 0 & 0 & 11 & 7 & 6 & 8 & 11& 1 & 7 \\
      0 & 0 & 0 & 1 & 0 & 0 & 7 & 1 & 11 & 8 & 6 & 7 & 11 \\
      0 & 0 & 0 & 0 & 1 & 0 & 11 & 8 & 11 & 3 & 3 & 9 & 6 \\
      0 & 0 & 0 & 0 & 0 & 1 & 6 & 8 & 4 & 9 & 5 & 7 & 12
    \end{array}
  \right)
\end{equation}

Giving a parity check matrix:
\begin{equation}
  H = \left(
    \begin{array}{@{}c c c c c c c c c c c c c@{}}
      1 & 7 & 2 & 6 & 2 & 7& 1 & 0 & 0 & 0 & 0 & 0 & 0\\
      6 & 4 & 6 & 12&5 & 5 & 0 & 1 & 0 & 0 & 0 & 0 & 0 \\
      8 & 10 & 7 & 2 & 2 &9 & 0 & 0 & 1 & 0 & 0 & 0 & 0 \\
      4 & 10 & 5 & 5 & 10 & 4 & 0 & 0 & 0 & 1 & 0 & 0 & 0 \\
      9 & 2 & 2 & 7 & 10 & 8 & 0 & 0 & 0 & 0 & 1 & 0 & 0 \\
      5 & 5 & 12 & 6 & 4 & 6 & 0 & 0 & 0 & 0 & 0 & 1 & 0 \\
      7&  2 & 6 & 2& 7 &1& 0 & 0 & 0 & 0 & 0 & 0 & 1
    \end{array}
  \right)
\end{equation}

The stabilizer generators of the AME$(13,13)$ state are given by:
\begin{equation} \label{gen-pseudo}
  M = \left(
    \begin{array}{@{}c|c@{}}
      G & 0 \\
      0 & H
    \end{array}
  \right)
\end{equation} 

The tensor which describes the AME($13,13$) stabilizer state is a pseudo-perfect tensor.

\section*{Acknowledgements}
The authors would like to thank Eva Silverstein for very helpful discussions about holographic dualities, and Fernando Pastawski for explaining holographic tensor network constructions, during the 2017 KITP programme on Quantum Physics of Information.
This research was supported in part by the National Science Foundation under Grant No. NSF PHY-1748958.
TK is supported by the EPSRC Centre for Doctoral Training in Delivering Quantum Technologies [EP/L015242/1].
TC is supported by the Royal Society.

\appendix
\section{(Pseudo-)perfect tensors and absolutely maximally entangled states} \label{appendix_1}

Perfect and pseudo-perfect tensors are closely related to the concept of absolutely maximally entangled (AME) states, which are maximally entangled across all bipartitions. More formally:

\begin{definition}[Absolutely maximally entangled states, definition 1 from~\cite{Helwig:2013}]
An AME state is a pure state, shared among $n$ parties $P = \{1,...,n\}$, each having a system of dimension $q$. Hence, $\ket{\Phi} \in \mathcal{H}_1 \otimes ... \otimes \mathcal{H}_n$, with the following equivalent properties:
\begin{enumerate}[i.]
\item $\ket{\Phi}$ is maximally entangled for any possible bipartition. This means that for any bipartition of $P$ into disjoint sets $A$ and $B$ with $A \cup B = P$, and without loss of generality $m = |B| \leq |A| = n-m$, the state $\ket{\Phi}$ can be written in the form:
\begin{equation}
\ket{\Phi} = \frac{1}{\sqrt{d^m}} \sum_{\bm{k} \in \mathbb{Z}_q^m} \ket{k_1}_{B_1}\ket{k_2}_{B_2}...\ket{k_m}_{B_m} \ket{\phi(\bm{k})}_A
\end{equation}
with $\bra{\phi(k)}\ket{\phi(k')} = \delta_{kk'}$
\item The reduced density matrix of every subset of parties $A \subset P$ with $|A| = \floor*{\frac{n}{2}}$ is maximally mixed, $\rho_A = q^{- \floor*{\frac{n}{2}}} \identity_{q^{- \floor*{\frac{n}{2}}}}$.
\item The reduced density matrix of every subset of parties $A \subset P$ with $|A| \leq \frac {n}{2}$ is maximally mixed.
\item The von Neumann entropy of every subset of parties $A \subset P$ with $|A| = \floor*{\frac{n}{2}}$ is maximal, $S(A) = \floor*{\frac{n}{2}} \log{q}$.
\item The von Neumann entropy of every subset of parties $A \subset P$ with $|A| \leq \frac{n}{2}$ is maximal, $S(A) = |A| \log{q}$.
\end{enumerate}
These are all necessary and sufficient conditions for a state to be absolutely maximally entangled. We denote such state as an AME(n,q) state.
\end{definition}

The connection between perfect tensors and AME states was noted in~\cite{Pastawski:2015}, and separately in~\cite{Goyeneche:2015} (where perfect tensors are referred to as multi-unitary matrices). Here we generalise the arguments from~\cite{Goyeneche:2015} to encompass the case of pseudo-perfect tensors.

A $t$-index tensor, where each index ranges over $q$ values, describes a pure quantum state of $t$ $q$-dimensional qudits:
\begin{equation}
\ket{\psi} = \sum_{a_1a_2...a_t \in \mathbb{Z}_q^t} T_{a_1a_2...a_t} \ket{a_1a_2...a_t}
\end{equation}

A necessary and sufficient condition for $\ket{\psi}$ to be an AME state is that the reduced density matrix of any set of particles $A$ such that $|A| \leq \floor*{\frac{t}{2}}$ is maximally mixed. The reduced density matrix $\rho_A$ can be calculated as $\rho_A = MM^\dagger$, where $M$ is a $|A| \times |A^c|$ matrix formed by reshaping $T$. Therefore, the state $\ket{\psi}$ is an AME state if and only if the tensor $T$ is an isometry from any set of indices $A$ to the complementary set of indices $A^c$ with $|A| \leq |A^c|$.

If $t$ is even (odd) this implies that $T$ is a perfect (pseudo-perfect) tensor. Therefore an AME state containing an even (odd) number of qudits can be described by a perfect (pseudo-perfect) tensor, and every perfect (pseudo-perfect) tensor describes an AME state on an even (odd) number of qudits.

\section{(Pseudo-)perfect tensors and quantum error correcting codes} \label{appendix_2}
An $[n,k,d]_q$ quantum error correcting code (QECC) encodes $k$ $q$-dimensional qudits into $n$ $q$-dimensional qudits, such that $d-1$ located errors (or $\frac{d-1}{2}$ unlocated errors) can be corrected. The quantum Singleton bound states that $n-k \geq 2(d-1)$. A QECC that saturates the quantum Singleton bound is known as a quantum maximum distance separable (MDS) code.

Previous work has established that every AME$(2m,q)$ state is the purification of a quantum MDS code~\cite{Helwig:2013,Helwig:2012}.\footnote{The original proof actually demonstrates that AME$(2m,q)$ states are the purification of a threshold quantum secret sharing (QSS) scheme, however every pure QSS scheme is equivalent to a quantum MDS code~\cite{Cleve:1999} so the result follows immediately.} Furthermore, viewing the perfect tensor which describes an AME$(2m,q)$ state as a linear map from 1 leg to $2m-1$ legs, it is  the encoding isometry of the quantum MDS code encoding one logical qudit~\cite{Pastawski:2015}.

We can generalise the proof in~\cite{Helwig:2012} to further characterise the connection between (pseudo-)perfect tensors and QECC:

\begin{theorem}
Every AME$(t,q)$ state is the purification of a $[t-k,k,\floor*{\frac{t}{2}}-k+1]_q$ QECC for $1 \leq k \leq \floor*{\frac{t}{2}}$\footnote{The proof of this theorem is a straightforward generalisation of~\cite[Theorem 2]{Helwig:2012}}.
\end{theorem}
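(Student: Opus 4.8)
I would follow the strategy of the proof of~\cite[Theorem~2]{Helwig:2012}, establishing the bounds $d \ge \floor*{\frac{t}{2}}-k+1$ and $d \le \floor*{\frac{t}{2}}-k+1$ separately so that the argument is uniform in the parity of $t$. First, fix $k$ with $1 \le k \le \floor*{\frac{t}{2}}$ and split the $t$ indices of the tensor $T$ describing the AME$(t,q)$ state into a set $K$ of $k$ ``logical'' legs and the complementary set $N$ of $n := t-k$ ``physical'' legs. Since $k \le \floor*{\frac{t}{2}}$ we have $|K| < |N|$ when $t$ is odd and $|K| \le |N|$ when $t$ is even, so in either case the (pseudo-)perfect property of $T$ (\cref{appendix_1}) says that $T$, viewed as a map $\mathcal{H}_K \to \mathcal{H}_N$, is proportional to an isometry; after normalising we obtain an encoding isometry $V$, hence an $[n,k]_q$ code with codespace $\operatorname{Im}(V)$. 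Grouping the indices of $\ket{\Phi} = \sum_{a_1\dots a_t} T_{a_1\dots a_t}\ket{a_1\dots a_t}$ into those in $K$ and those in $N$ then shows that $\ket{\Phi}$ is, up to normalisation, obtained by applying $V$ to one half of a maximally entangled pair of $k$-qudit systems and keeping the other half as a reference $R\cong\mathcal{H}_K$; i.e.\ $\ket\Phi$ is the purification of this code, with the $K$-legs playing the role of $R$. It then remains to pin down the distance $d$.

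For the lower bound I would use the standard decoupling criterion for correcting located (erasure) errors: for a code with purification $\ket\Phi$ on (reference $\cup$ physical) legs, erasure of a set $E$ of physical qudits is correctable provided the reference is decoupled from $E$, i.e.\ $\rho_{K\cup E} = \rho_K\otimes\rho_E$. By the AME property -- the reduced state of any subset of at most $\floor*{\frac{t}{2}}$ of the $t$ parties is maximally mixed -- the state $\rho_{K\cup E}$ is maximally mixed whenever $k+|E| \le \floor*{\frac{t}{2}}$, and a maximally mixed state on $\mathcal{H}_K\otimes\mathcal{H}_E$ factorises as $\rho_K\otimes\rho_E$. Hence every erasure of at most $\floor*{\frac{t}{2}}-k$ physical qudits is correctable, so the code corrects $\floor*{\frac{t}{2}}-k$ located errors and $d \ge \floor*{\frac{t}{2}}-k+1$.

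For the matching upper bound I would invoke the quantum Singleton bound $n-k\ge 2(d-1)$ recalled at the start of this appendix, which gives $d-1 \le \tfrac{t-2k}{2}$; since $d-1$ is a non-negative integer this forces $d-1 \le \floor*{\frac{t}{2}}-k$ whether $t$ is even or odd, hence $d \le \floor*{\frac{t}{2}}-k+1$. Combining the two bounds gives $d = \floor*{\frac{t}{2}}-k+1$, so the AME$(t,q)$ state is the purification of a $[t-k,k,\floor*{\frac{t}{2}}-k+1]_q$ code for every $1 \le k \le \floor*{\frac{t}{2}}$. The only point that needs care is the precise statement of the erasure/decoupling criterion (only the ``decoupled $\Rightarrow$ correctable'' direction is needed, which is the easy half of the standard reversibility argument for quantum channels), together with the small integrality step that makes the Singleton bound tight in the odd-$t$ case; everything else is routine bookkeeping about reshaping the tensor.
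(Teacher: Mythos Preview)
Your proof is correct and reaches the same conclusion, but the mechanism differs from the paper's in both directions of the distance bound. For the lower bound $d \ge \floor*{\frac{t}{2}}-k+1$, the paper gives a \emph{constructive} argument: it writes the AME state in Schmidt form across a tripartition $L\cup A\cup B$ with $|L|=k$, $|A|=\floor*{\frac{t}{2}}-k$, $|B|=\ceil*{\frac{t}{2}}$, then exhibits an explicit recovery unitary $U_B$ on the surviving block $B$ after tracing out $A$. Your argument instead invokes the abstract decoupling/erasure criterion (maximally mixed $\rho_{K\cup E}$ factorises, hence reference is decoupled from the erased region), which is shorter but imports a standard lemma. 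For the upper bound, the paper appeals to no-cloning (any $\ceil*{\frac{t}{2}}$ physical qudits already determine the logical state, so the complement cannot), whereas you use the quantum Singleton bound together with the integrality observation that handles odd $t$. Both pairs of arguments are essentially equivalent---the explicit $U_B$ is a direct instantiation of the decoupling criterion, and no-cloning underlies Singleton---so the difference is one of style: the paper's version is self-contained and constructive, yours is cleaner and makes the parity-of-$t$ bookkeeping explicit.
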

\begin{proof}

Let $\ket{\Phi}$ be an AME$(t,q)$ state, and let $m =  \floor*{\frac{t}{2}}$. For any partition of the state into disjoint sets $L$, $A$ and $B$ such that $|L| = k \leq m$, $|A| = m - k$ and $|B| = \ceil*{\frac{t}{2}}$ we can write:

\begin{equation} \label{eq}
\ket{\Phi} = \sqrt{\frac{1}{q^m}} \sum_{\bm{i} \in \mathbb{Z}_q^k, \bm{j} \in \mathbb{Z}_q^{m-k}}\ket{i_1...i_k}_L\ket{j_1...j_{m-k}}_A \ket{\phi(\bm{i,j})}_B
\end{equation}
The set $A \cup B$ are the physical qudits.
Define the basis states of a QECC as:
\begin{equation}
\begin{multlined}
\ket{\Phi_{\bm{i}}} = \sqrt{q^k}_L\bra{i_1...i_k}\ket{\Phi} \\
 = \sum_{\bm{j} \in \mathbb{Z}_q^{m-k}} \sqrt{\frac{1}{q^{m-k}}} \ket{j_1...j_{m-k}}_A \ket{\phi(\bm{i,j})}_B
 \end{multlined}
\end{equation}
Encode a logical state in the physical qudits as:
\begin{equation}\label{basis}
\ket{a} = \sum_{\bm{i} \in \mathbb{Z}_q^k} a_{\bm{i}} \ket{\bm{i}} \rightarrow \sum_{\bm{i} \in \mathbb{Z}_q^k} a_{\bm{i}} \ket{\Phi_{{\bm{i}}}}
\end{equation}

Now consider tracing out $m-k$ of the physical qudits. Since the sets $A$ and $B$ in \cref{eq} are arbitrary, we can always choose that the qudits we trace out are in the set $A$. The qudits we are left with are then in the state:
\begin{equation}
\rho_B = \frac{1}{q^{m-k}} \sum_{\bm{i,i',j}} a_{\bm{i}} a^*_{\bm{i'}} \ket{\phi(\bm{i,j})}\bra{\phi(\bm{i',j})}
\end{equation}
We can recover the logical state by performing the unitary operation:
\begin{equation}
U_B \ket{\phi(\bm{i,j})}_B =
\begin{cases}
\ket{i_1}_{B_{1}}...\ket{i_k}_{B_{k}} \ket{j_1}_{B_{k+1}}...\ket{j_{m-k}}_{B_{m}} & \text{for $t$ even}\\
\ket{i_1}_{B_{1}}...\ket{i_k}_{B_{k}} \ket{j_1}_{B_{k+1}}...\ket{j_{m-k}}_{B_{m}} \ket{0}_{B_{m+1}} & \text{for $t$ odd}
\end{cases}
\end{equation}

which will give:
\begin{equation}
U_B \rho_B U_b^\dagger = \ket{a}\bra{a} \otimes \frac{1}{q^{m-k}}\sum_{\bm{j}} \ket{\bm{j}}\bra{\bm{j}}
\end{equation}

Therefore, any set of $t-m$ qudits contains all the information about the logical state. By the no-cloning theorem, any set of $m$ qudits contains no information about the logical state, so the QECC can correct exactly $m -k  = \floor*{\frac{t}{2}} -k$ erasure errors. This gives $d = \floor*{\frac{t}{2}} -k +1$.
\end{proof}

Therefore, an AME$(2m,q)$ state is the purification of a quantum MDS code with  parameters $[2m-k,k,m-k+1]_q$; while an AME$(2m+1,q)$ state is the purification of a QECC with parameters $[2m+1-k,k,m-k+1]_q$. The parameters in the AME$(2m+1,q)$ case do not saturate the Singleton bound, so it is not an MDS code, but it is an optimal QECC.\footnote{The terms MDS quantum code and optimal quantum code are sometimes used interchangeably. Here, by an optimal quantum code we mean either an MDS code, or a code for which $n-k$ is odd so which cannot saturate the Singleton bound, but for which the distance $d$ is maximal given this constraint.}

If we consider the (pseudo-)perfect tensor, $T$, which describes an AME$(t,q)$ state $\ket{\Phi}$ we have:
\begin{equation}
\ket{\Phi} = \sum_{\bm{i} \in \mathbb{Z}_q^k,\bm{j}\in \mathbb{Z}_q^{t-k}} T_{\bm{i,j}} \ket{\bm{i}}_L \ket{\bm{j}}_P
\end{equation}
where $L$ and $P$ are the sets of logical and physical qudits in the corresponding QECC, $|L| = k$, $|P| = t-k$. The basis states for the QECC are then:
\begin{equation}
\ket{\Phi_{\bm{i}}} = \sum_{\bm{j} \in \mathbb{Z}_q^{t-k}} T_{\bm{i,j}} \ket{\bm{j}}_P
\end{equation}
and the encoding isometry is:
\begin{equation}
V = \sum_{\bm{i} \in \mathbb{Z}_q^k,\bm{j}\in \mathbb{Z}_q^{t-k}} T_{\bm{i,j}} \ket{\bm{j}}\bra{\bm{i}}
\end{equation}
So, viewed as an isometry from $k$ legs to $t-k$ legs a (pseudo-)perfect tensor is the encoding isometry of a $[t-k,k,\floor*{\frac{t}{2}}-k+1]_q$ QECC.


\section{Qudit stabilizer codes and states} \label{appendix_3}
We restrict our attention to qudits of dimension $p$ where $p$ is an odd prime.

\subsection{Generalised Pauli group}
The generalised Pauli operators on $p$ dimensional qudits are defined as:
\begin{equation}
X = \sum_{j=0}^{p-1}\ket{j+1}\bra{j}
\end{equation}
\begin{equation}
Z = \sum_{j=0}^{p-1} \omega^{j} \ket{j}\bra{j}
\end{equation}
where $\omega = e^{\frac{2\pi i}{p}}$. The generalised Pauli operators obey the relations $X^p = Z^p = \identity$ and $XZ = \omega ZX$.

The Pauli group on $n$ qudits is given by $\mathcal{G}_{n,p} = \langle \omega^a X^{\bm{b}} Z^{\bm{c}}\rangle$ where  $a \in \mathbb{Z}_p$, $\bm{b,c} \in \mathbb{Z}_p^n$. Two elements $\omega^a X^{\bm{b}} Z^{\bm{c}}$ and $\omega^{a'} X^{\bm{b'}} Z^{\bm{c'}}$ commute if and only if $\bm{b'}\cdot \bm{c} = \bm{b}\cdot \bm{c'}$, where all addition is mod $p$.

\subsection{Qudit stabilizer codes} \label{codes_section}
A stabilizer code $\mathcal{C}$ on $n$ qudits is a $p^k$-dimensional subspace of the Hilbert space given by:
\begin{equation}
\mathcal{C} = \{\ket{\psi} \mid M\ket{\psi} = \ket{\psi} \forall M \in S\}
\end{equation}
 where $S$ is an Abelian subgroup of $\mathcal{G}_{n,p}$ that does not contain $\omega \identity$.

The projector onto $\mathcal{C}$ is given by~\cite{Gheorghiu:2014}:
\begin{equation}
\Pi = \frac{1}{|S|} \sum_{M \in S}M
\end{equation}
where $|S| = p^{n-k}$. $S$ is an elementary Abelian $p$-group, so this implies that a minimal generating set for $S$ contains $n-k$ elements~\cite{finite_groups}.

The minimum weight of a logical operator in an $[n,k,d]$ stabilizer code is $d$. This is also the minimum weight of any operator that is not in the stabilizer, but which commutes with every element of the stabilizer.

A stabilizer code with $k=0$ is a stabilizer state.

\section{Stabilizer (pseudo-)perfect tensors} \label{appendix_4}
A stabilizer (pseudo-)perfect tensor describes stabilizer AME states. This implies, using the method in~\cite{Gottesman:1997} for generating short qubit stabilizer codes from longer ones, that the QECCs described by the tensors are stabilizer codes:

\begin{theorem} \label{stab_tensor}
If a (pseudo-)perfect tensor, $T$, with $t$ legs describes a stabilizer AME$(t,p)$ state, then the $[t-k,k,\floor*{\frac{t}{2}}-k+1]_p$ QECCs (for $1 \leq k \leq \floor*{\frac{t}{2}}$) described by the tensor are stabilizer codes. The stabilizers of the code are given by the stabilizers of the AME state which start with $I^{\otimes k}$, restricted to the last $t-k$ qudits.
\end{theorem}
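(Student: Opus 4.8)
The plan is to follow the ``shortening'' construction of~\cite{Gottesman:1997}, but phrased directly in terms of the encoding isometry exhibited in \cref{appendix_2}. Fix $1\le k\le\floor*{\frac{t}{2}}$ and partition the $t$ legs of $T$ into a ``logical'' set $L$ with $\abs L=k$ and a ``physical'' set $P$ with $\abs P=t-k$. As in \cref{appendix_2} write the stabilizer $\mathrm{AME}(t,p)$ state as $\ket\Phi = p^{-k/2}\sum_{\bm i\in\mathbb Z_p^k}\ket{\bm i}_L\ket{\Phi_{\bm i}}_P$, where $\ket{\Phi_{\bm i}}_P$ is the unnormalised state obtained by projecting $\ket\Phi$ onto $\ket{\bm i}_L$; these are the logical basis states of the QECC, the encoding isometry is $V=\sum_{\bm i}\ket{\Phi_{\bm i}}\bra{\bm i}$, and the code subspace is $\mathcal C=\operatorname{Im}(V)$. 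Because $\abs L=k\le\floor*{\frac{t}{2}}$, the AME property forces $\rho_L=p^{-k}\identity$, hence $\braket{\Phi_{\bm i}}{\Phi_{\bm j}}=\delta_{\bm i\bm j}$, so $\dim\mathcal C=p^k$ and $\{\ket{\Phi_{\bm i}}\}$ is an orthonormal basis of it. I also use that, since $p$ is an odd prime, the stabilizer $S_\Phi\le\mathcal G_{t,p}$ of $\ket\Phi$ is an elementary abelian $p$-group of order $p^t$ containing no $\omega^a\identity$ with $a\ne 0$.

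First I would pin down the candidate stabilizer group. Let $G\le S_\Phi$ be the subgroup of elements that act as a scalar on the legs in $L$; any $M\in G$ factorises as $M=(\omega^{a_M}\identity)_L\otimes M_P$ with $M_P\in\mathcal G_{t-k,p}$, and I set $S'=\{\,\omega^{a_M}M_P : M\in G\,\}$, the restriction of $G$ to the physical legs. Then $S'$ is abelian (if $M,N\in G$ they commute in $S_\Phi$ and act trivially on $L$, so $M_P$ and $N_P$ commute), it contains no $\omega^a\identity$ with $a\ne 0$ (such an element would pull back to $\omega^a\identity^{\otimes t}\in S_\Phi$), and $M\mapsto\omega^{a_M}M_P$ is injective on $G$, so $\abs{S'}=\abs G$. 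This $S'$ is exactly the group of ``stabilizers of the AME state which start with $\identity^{\otimes k}$, restricted to the last $t-k$ qudits'' from the statement.

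The substance is then two computations. \emph{(i) $S'$ stabilises $\mathcal C$.} For $M\in G$, substituting $M=(\omega^{a_M}\identity)_L\otimes M_P$ into $M\ket\Phi=\ket\Phi$ and comparing the coefficients of each $\ket{\bm i}_L$ yields $\omega^{a_M}M_P\ket{\Phi_{\bm i}}=\ket{\Phi_{\bm i}}$ for all $\bm i$; hence every generator of $S'$ fixes every codeword, so $\mathcal C\subseteq\operatorname{code}(S')$. \emph{(ii) $\abs{S'}=p^{t-2k}$.} Here I would invoke the standard structure of a bipartite stabilizer state --- the decomposition into subgroups supported purely on $L$, purely on $P$, and ``EPR-type'' generators shared across the cut (cf.~\cite{Gottesman:1997}): the entanglement entropy across $L\mid P$ equals $\bigl(\abs L-\log_p\abs{S_\Phi\cap\mathcal G_L}\bigr)\log p=\bigl(\abs P-\log_p\abs{S_\Phi\cap\mathcal G_P}\bigr)\log p$; since $\rho_L=p^{-k}\identity$, the subgroup $S_\Phi\cap\mathcal G_L$ is trivial and the entropy is $k\log p$, so $\abs G=\abs{S_\Phi\cap\mathcal G_P}=p^{(t-k)-k}=p^{t-2k}$. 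Combining, $S'$ is an abelian subgroup of $\mathcal G_{t-k,p}$ of order $p^{t-2k}$ with no $\omega^a\identity$, so $\operatorname{code}(S')$ has dimension $p^{(t-k)-(t-2k)}=p^k=\dim\mathcal C$; with (i) this gives $\mathcal C=\operatorname{code}(S')$. Thus the QECC is a stabilizer code with stabilizer group $S'$, and its parameters $[t-k,k,\floor*{\frac{t}{2}}-k+1]_p$ are exactly those already established in \cref{appendix_2}. The argument is identical for perfect ($t$ even) and pseudo-perfect ($t$ odd) $T$.

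The main obstacle is step (ii): every other point is routine manipulation of Pauli strings, but the order count $\abs G=p^{t-2k}$ genuinely relies on the structural fact that the entanglement entropy of a stabilizer state is symmetric under exchanging the two halves of the bipartition and is read off from the purely-supported subgroups. One can either cite this directly, or prove it in a line or two by choosing a generating set of $S_\Phi$ adapted to the splitting $L\mid P$; viewing $S_\Phi$ as an isotropic subspace of a symplectic $\mathbb F_p$-vector space, the existence of such an adapted generating set is clean precisely because $p$ is prime.
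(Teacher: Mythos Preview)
Your argument is correct and takes a genuinely different route from the paper. The paper proceeds \emph{iteratively} via Gottesman's shortening procedure: at each step it chooses a generating set for the current stabilizer in which two generators act as $X$ and $Z$ on the next logical qudit while the remaining generators act as $\identity$ there, discards those two (they become logical operators), and restricts the surviving generators to the remaining $t-k$ qudits. Repeating this $k$ times yields $t-2k$ independent generators and hence the stabilizer group of the $[t-k,k,\floor*{\frac{t}{2}}-k+1]_p$ code, by induction on $k$.

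You instead go straight to arbitrary $k$: you identify the candidate stabilizer as $S' = \{\,M|_P : M \in S_\Phi,\ M|_L \propto \identity\,\}$, verify by a one-line computation that it fixes each code basis state $\ket{\Phi_{\bm i}}$, and then count $\abs{S'}=p^{t-2k}$ using the entanglement-entropy formula for bipartite stabilizer states (trivial $S_\Phi\cap\mathcal G_L$ from the AME property, then symmetry of the entropy across $L\mid P$). A dimension count finishes the argument.

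Both approaches land on the same group $S'$. Your direct count is cleaner and avoids the induction, at the cost of invoking the stabilizer-state entropy formula, which is standard but not entirely elementary. The paper's iterative argument is more self-contained---it only needs the existence of a good generating set at each step---and has the incidental benefit of exhibiting the logical Pauli operators $\bar X,\bar Z$ explicitly as the discarded generators along the way.
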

\begin{proof}
Consider an AME$(t,p)$ stabilizer state with stabilizer $S$:
\begin{equation}
\ket{\Phi} = \sqrt{\frac{1}{p^m}} \sum_{\bm{i} \in \mathbb{Z}_p^m} \ket{i_1...i_m}_A \ket{\phi(\bm{i})}_B
\end{equation}
where $|A| = m = \floor*{\frac{t}{2}}$, $|B| = \ceil*{\frac{t}{2}}$.

We have that $M \ket{\Phi} = \ket{\Phi}$ for all $M \in S$, where $|S| = p^t$ so a minimal generating set for $S$ contains $t$ elements. We can always pick a generating set for $S$ so that $M_1$ and $M_2$ begin with $X$ and $Z$ respectively, and $M_3$ to $M_t$ begin with $\identity$~\cite{Gottesman:1997}. Define $M_j'$ to be $M_j$ restricted to the last $t-1$ qudits, where $j = 1,\dots,t$.

Consider the codespace, $\mathcal{C}$, for a $[t-1,1,\floor*{\frac{t}{2}}]_p$ error correcting code described by $T$:
\begin{equation}
\ket{\psi} = \sum_{i_1 \in \mathbb{Z}} a_{i_1} \ket{\Phi_{i_1}}
\end{equation}
where $\sum_i a_i^2 = 1$ and $\ket{\Phi_{i_1}} = \sqrt{p} \bra{i_1}\ket{\Phi}$. If we act on $\ket{\psi} \in \mathcal{C}$ with $M_j'$ we find $M_j' \ket{\psi} = \ket{\psi}$ for $j = 3,\dots,t$, $M_j'\ket{\psi} = \ket{\psi'} \in \mathcal{C}$ where $\ket{\psi'} \neq \ket{\psi}$ for $j=1,\dots,2$.

The group $S'$ generated by $M_j'$ for $j=3,\dots,t$ contains $|S'| = p^{t-2}$ elements, and it stabilizes the $[t-1,1,\floor*{\frac{t}{2}}]_p$ code described by $T$ with codespace $\mathcal{C}$.

This procedure for discarding two stabilizer generators from a $[n,k,d]$ code to obtain an $[n-1,k+1,d-1]$ code is always possible provided $d>1$~\cite{Gottesman:1997}. So we can repeat the procedure $\floor*{\frac{t}{2}} - 1$ times, demonstrating that the $[t-k,k,\floor*{\frac{t}{2}}-k+1]_p$ QECCs for $1 \leq k \leq \floor*{\frac{t}{2}}$ described by the perfect tensor are stabilizer codes.
\end{proof}

We also require that all the QECC used in our construction map logical Pauli operators to physical Pauli operators. It is known that for qubit stabilizer codes a basis can always be chosen so that this is true~\cite{Gottesman:1997}, and the same group-theoretic proof applies to qudit stabilizer codes.\footnote{The discussion in~\cite{Gottesman:1997} actually shows that there is an automorphism between $\mathcal{G}_{k,2}$ and $N(S)/S$, where $N(S)$ is the normalizer of $S$ in $\mathcal{G}_{n,2}$. As $N(S) \in \mathcal{G}_{n,2}$ this is sufficient. The discussion in~\cite{Gottesman:1997} can be extended to qudits of prime dimension by replacing phase factors of 4 with factors of $p$, and dimension factors of 2 with factors of $p$.}
The physical Pauli operators we obtain using this method are not given by acting on the logical Pauli operators with the encoding isometry, but they have the same action in the code subspace. So, we have that for qudit stabilizer codes it is always possible to pick a basis where $V P V^\dagger = P'VV^\dagger$ where $P$ is a $k$-qudit Pauli operator, $P'$ is an $n$-qudit Pauli operator, and $V$ is the encoding isometry of the QECC.

In our holographic QECC we do not have complete freedom to pick a basis, so we also need to show that we can pick this basis consistently. In order to show this we will require two lemmas about qudit stabilizer codes.

\begin{lemma} \label{sub_group}
The smallest subgroup, $G$, of the Pauli group $\mathcal{G}_{n,p}$ such that $\forall P \in \mathcal{G}_{n,p}$, $P \neq \identity^{\otimes n}$, $\exists M \in G$ where $MP \neq PM$ is the entire Pauli group.
\end{lemma}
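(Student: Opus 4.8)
The plan is to reduce the statement to linear algebra over $\mathbb{F}_p$ using the standard symplectic description of the Pauli group. First I would record that the centre of $\mathcal{G}_{n,p}$ is $Z = \langle \omega\identity\rangle \cong \mathbb{Z}_p$, and identify the quotient $V := \mathcal{G}_{n,p}/Z$ with $\mathbb{Z}_p^{2n}$ by sending the class of $\omega^a X^{\bm b}Z^{\bm c}$ to $(\bm b,\bm c)$. Under this identification the commutation relation quoted in \cref{codes_section} becomes a nondegenerate symplectic form: $P$ and $Q$ commute iff $\langle\bar P,\bar Q\rangle := \bm b\cdot\bm c' - \bm b'\cdot\bm c = 0$, where $\bar P,\bar Q\in V$ are the images of $P,Q$. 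Since any element of $Z$ commutes with everything, the hypothesis on $G$ has content only for noncentral $P$, and for such $P$ it says exactly that the image $\bar G\subseteq V$ of $G$ has trivial symplectic complement, $\bar G^{\perp}=0$.

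Next I would observe that, because $p$ is prime, every subgroup of $V\cong(\mathbb{Z}_p)^{2n}$ is an $\mathbb{F}_p$-subspace, and that nondegeneracy of the form gives $\dim\bar G^{\perp} = 2n - \dim\bar G$. Hence $\bar G^{\perp}=0$ forces $\dim\bar G = 2n$, i.e.\ $\bar G = V$.

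Then I would lift this conclusion back to $\mathcal{G}_{n,p}$. If $G$ were abelian, every pair of its elements would be symplectically orthogonal, so $\bar G$ would be isotropic and $\dim\bar G\le n<2n$, a contradiction (here we use $n\ge1$). So $G$ is nonabelian, whence its commutator subgroup $[G,G]$ is a nontrivial subgroup of $[\mathcal{G}_{n,p},\mathcal{G}_{n,p}]\le Z$; since $|Z|=p$ is prime this forces $[G,G]=Z$, so $Z\le G$. Combined with $\bar G = G/Z = V = \mathcal{G}_{n,p}/Z$, the correspondence theorem gives $G=\mathcal{G}_{n,p}$. Conversely $\mathcal{G}_{n,p}$ visibly has the detecting property (for a Pauli with nontrivial $\bm c$-part on qudit $j$ take $M=X_j$; for nontrivial $\bm b$-part take $M=Z_j$). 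Therefore $\mathcal{G}_{n,p}$ is the unique subgroup with the stated property, hence in particular the smallest one.

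The argument is essentially routine; there is no deep obstacle. The two points that need care are (i) interpreting the Lemma correctly, so that it is really a statement about Paulis modulo the centre — central elements commute with everything and must be tacitly excluded — and (ii) tracking the phase subgroup $Z$ when passing between $\mathcal{G}_{n,p}$ and $V$, since this is precisely what rules out a smaller ``phase-free'' detecting subgroup: such a $G$ would have $\bar G = V$ while $G\cap Z = \{\identity\}$, making $G\cong\mathbb{Z}_p^{2n}$ abelian, which the isotropy bound forbids.
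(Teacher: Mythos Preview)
Your proof is correct, and you handle the centre carefully — a point the paper glosses over. Your route is genuinely different from the paper's, though. The paper argues constructively: it greedily builds a minimal \emph{set} $A$ with the detecting property by repeatedly finding a noncentral element that commutes with everything in $A$ so far, shows via a degrees-of-freedom count (solving $m$ homogeneous equations in $2n$ unknowns) that this is possible whenever $|A|<2n$, and then asserts that the resulting $2n$ independent elements generate all of $\mathcal{G}_{n,p}$.

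Your argument replaces this iterative construction with the one-line symplectic fact $\dim \bar G^{\perp}=2n-\dim\bar G$, which immediately gives $\bar G=V$, and then you are explicit about lifting back through the centre via the commutator: nonabelian forces $Z\le G$. The paper's final step — that $2n$ independent elements generate the whole Pauli group — is exactly this lifting, but stated without justification; your isotropy/commutator argument is precisely what is needed to fill that gap. So your approach is both shorter and more rigorous on the phase subgroup, while the paper's has the virtue of being phrased without invoking symplectic terminology and of exhibiting an explicit minimal detecting set along the way.
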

\begin{proof}
Consider the following process for constructing a set $A$ element by element such that $\forall P \in \mathcal{G}_{n,p}$, $P \neq \identity^{\otimes n}$, $\exists M \in A$ where $MP \neq PM$:

\begin{enumerate}
\item Select an arbitrary element of $\mathcal{G}_{n,p}$, $P^{(1)} = \omega^{a^{(1)}} X^{\bm{b}^{(1)}} Z^{\bm{c}^{(1)}}$.
\item Pick an element $P^{(1')} = \omega^{a^{(1')}} X^{\bm{b}^{(1')}} Z^{\bm{c}^{(1')}}$ such that $\bm{b}^{(1')}\cdot \bm{c}^{(1)} \neq \bm{b}^{(1)}\cdot \bm{c}^{(1')}$. This ensures that $P^{(1')}$ does not commute with $P^{(1)}$, and $P^{(1')}$ is our first element of $A$.
\item Pick an arbitrary element, $P^{(i)}$ of $\mathcal{G}_{n,p}$ which commutes with every element of $A$ and is not the identity.
\item Choose any element, $P^{(i')}$ of $\mathcal{G}_{n,p}$ which does not commute with $P^{(i)}$, and add it to $A$.
\item Repeat steps (3) and (4) until $\forall P \in \mathcal{G}_{n,p}$, $P \neq \identity^{\otimes n}$, $\exists M \in A$ such that  $MP \neq PM$.
\end{enumerate}

When we construct $A$, every element $P^{(i')} = \omega^{a^{(i')}} X^{\bm{b}^{(i')}} Z^{\bm{c}^{(i')}}$ which we add to $A$ is independent from every element already in $A$. To see this note that by assumption there is some $P^{(i)}$ which commutes with every element in $A$, but does not commute with $P^{(i')}$. If $P^{(i')}$ was not independent from the other elements of $A$, and $\bm{b}^{(i')} = \sum_{k} \bm{b}^{(k)}$ and $\bm{c}^{(i')} = \sum_k \bm{c}^{(k)}$ where $P^{(k)} \in A$ for all $k$, then $\bm{b}^{(i')}\cdot \bm{c}^{(i)} =\sum_{k} \bm{b}^{(k)}\cdot \bm{c}^{(i)} = \sum_k \bm{b}^{(i)}\cdot \bm{c}^{(k)}$ and $\bm{b}^{(i)}\cdot \bm{c}^{(i')} = \sum_k \bm{b}^{(i)}\cdot \bm{c}^{(k)}$ so  $\bm{b}^{(i')}\cdot \bm{c}^{(i)} = \bm{b}^{(i)}\cdot \bm{c}^{(i')}$, and $P^{(i')}$ would commute with $P^{(i)}$, contradicting our initial assumption.

We need to determine the minimum number of elements in $A$ when this process terminates.

Suppose we have repeated steps (3) and (4) $m$ times, so that $|A| = m$. If there is an element $P^{(k)} = \omega^{a^{(k)}} X^{\bm{b}^{(k)}} Z^{\bm{c}^{(k)}}$ of $\mathcal{G}_{n,p}$ which commutes with every element of $A$ then $\bm{b}^{(i')}\cdot \bm{c}^{(k)} = \bm{b}^{(k)}\cdot \bm{c}^{(i')}$ for all $P^{(i')} \in A$. $P^{(k)}$ is described by $2n$ degrees of freedom: $b_1^{(k)},...b_n^{(k)}$ and $c_1^{(k)},...c_n^{(k)}$, and there are $m$ homogeneous equations which $P^{(k)}$ needs to satisfy.\footnote{The equations are homogeneous as all constant terms are equal to zero. Homogeneity of the equations ensures that the equations are not inconsistent.} Provided $m < 2n$, the set of equations is underdetermined, and we can always choose a $P^{(k)}$ which commutes with every element of $A$ and is not the identity. If $m=2n$ then the solution to the equations is uniquely determined, and is the identity. At this point we cannot continue with the process, so it terminates with $|A| = 2n$.

Therefore, the smallest set $A$ of elements of $\mathcal{G}_{n,p}$ such that $\forall P \in \mathcal{G}_{n,p}$, $P \neq \identity^{\otimes n}$, $\exists M \in A$ where $MP \neq PM$ contains $2n$ elements. At this stage $A$ is not a group because all the elements of $A$ are independent so the set isn't closed. Any $2n$ independent elements of $\mathcal{G}_{n,p}$ generate the entire group, so the smallest group $G$ which contains every element of $A$ is $\mathcal{G}_{n,p}$ itself.
\end{proof}

\begin{lemma} \label{choose}
In an $[n,k,d]_p$ stabilizer code, the action of an encoded Pauli operator $\overline{P}$ on any $d-1$ physical qudits can be chosen to be any element of $\mathcal{G}_{d-1,p}$.
\end{lemma}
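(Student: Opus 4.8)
The plan is to exploit the group-theoretic correspondence between logical Pauli operators and the normaliser $N(S)/S$ of the stabiliser group, together with the freedom we established in choosing the \emph{representative} of each logical Pauli coset. Recall from the discussion preceding \cref{sub_group} that for a qudit stabiliser code we may pick a basis in which $VPV^\dagger = P'VV^\dagger$, i.e.\ each logical Pauli $\overline{P}$ is represented by some physical Pauli $P' \in N(S)$, and that modifying $P'$ by any element of $S$ does not change its action on the code subspace. So the real content of the claim is: given any target $Q \in \mathcal{G}_{d-1,p}$ supported on a fixed set $T$ of $d-1$ physical qudits, and a logical Pauli $\overline{P}$ with representative $P' \in N(S)$, there is an element $M \in S$ such that $P'M$ acts as $Q$ on $T$ (i.e.\ $P'M|_T = Q$ up to phase).

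First I would translate this into a statement purely about the restriction of the stabiliser group to $T$. Let $S|_T = \{M|_T : M \in S\}$ denote the image of $S$ under restriction to the qudits in $T$ (forgetting the action elsewhere); this is a subgroup of $\mathcal{G}_{d-1,p}$. The key claim is then that $S|_T$ is \emph{all} of $\mathcal{G}_{d-1,p}$. Given this, we can choose $M \in S$ so that $M|_T = Q \cdot (P'|_T)^{-1}$, and then $P'M$ is a valid representative of $\overline{P}$ whose restriction to $T$ is exactly $Q$, proving the lemma. To prove $S|_T = \mathcal{G}_{d-1,p}$ I would argue by distance: if $S|_T$ were a proper subgroup of $\mathcal{G}_{d-1,p}$, then by \cref{sub_group} there would exist a nontrivial $P \in \mathcal{G}_{d-1,p}$ (supported only on $T$) commuting with every element of $S|_T$, hence commuting with every element of $S$ when viewed as an operator on all $n$ qudits (trivial action off $T$). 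Such a $P$ is a weight-$\leq(d-1)$ operator commuting with the whole stabiliser; by the characterisation of code distance recalled in \cref{codes_section} (the minimum weight of an operator commuting with $S$ but not in $S$ is $d$), $P$ must itself lie in $S$. But a stabiliser element supported only on $\leq d-1$ qudits is again weight $\leq d-1$, and stabiliser codes with $d>1$ have no nontrivial low-weight stabiliser elements — more carefully, one uses that $S$ contains no nonidentity element of weight $< d$ unless... — so one concludes $P$ acts trivially, a contradiction. Hence $S|_T = \mathcal{G}_{d-1,p}$.

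Let me sharpen the last step, since that is where I expect the main obstacle. The subtlety is that "$P \in S$ and $P$ has support in $T$" does not immediately force $P = \identity$: one needs that $S$ has no nontrivial element of weight $\leq d-1$. For a genuine $[n,k,d]$ stabiliser code this is true — any stabiliser element is a logical identity, and a nonzero-weight stabiliser of weight $<d$ would still commute with $S$ and not be a logical operator, which is fine, but it would contradict the code being \emph{non-degenerate}; for degenerate codes one must instead argue directly that $S|_T$ surjects. The cleaner route, which I would actually take, is a dimension count: $|S| = p^{n-k}$ and the kernel of the restriction map $S \to S|_T$ is $\{M \in S : M|_T = \identity\}$, i.e.\ stabiliser elements trivial on $T$; these form the stabiliser of the code obtained by "puncturing away" $T$, and counting shows $|S|_T| = p^{n-k}/|\ker| $, while the commutant argument via \cref{sub_group} shows $S|_T$ must be all of $\mathcal{G}_{d-1,p}$ provided no weight-$\leq(d-1)$ operator is a nontrivial logical operator — which is exactly the defining property of distance $d$. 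I would present the commutant argument as the main line and note the degenerate-case bookkeeping. Finally, once $S|_T = \mathcal{G}_{d-1,p}$ is established, the conclusion follows immediately by the coset-representative modification described above, and the same reasoning applies to \emph{any} logical operator $\overline{P}$, since all we used was that it has a representative in $N(S)$.
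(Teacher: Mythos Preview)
Your approach is essentially identical to the paper's: reduce the claim to showing that the restriction $S|_T$ of the stabilizer to any $(d-1)$-qudit set $T$ is the full Pauli group $\mathcal{G}_{d-1,p}$, then argue via \cref{sub_group} that a proper subgroup would leave some nontrivial weight-$(d-1)$ Pauli commuting with all of $S$, contradicting the distance. The paper's proof is exactly this, in two sentences.

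You are right to flag the degenerate-code subtlety, and in fact you are more careful here than the paper itself, which simply asserts that every weight-$(d-1)$ Pauli anticommutes with some stabilizer. Your attempted ``dimension count'' patch does not close the gap, though, and indeed the lemma as stated is \emph{false} for general degenerate codes: in the $[[9,1,3]]$ Shor code, $Z_1Z_2 \in S$ is supported on $T=\{1,2\}$, so $S|_T$ commutes with $Z_1Z_2$ and cannot be all of $\mathcal{G}_{2,2}$; concretely, no representative of $\overline{X}$ restricts to the identity on $\{1,2\}$. What rescues the lemma in context is that the codes actually used in the paper---obtained from stabilizer AME states via \cref{stab_tensor}---are automatically non-degenerate: every nonidentity stabilizer of the AME$(t,p)$ state has weight $>\lfloor t/2\rfloor$, hence every stabilizer of the derived $[t-k,k,\lfloor t/2\rfloor-k+1]$ code has weight $\geq \lfloor t/2\rfloor+1 > d-1$. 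So your main-line argument (and the paper's) is correct for the codes at hand; just state the non-degeneracy hypothesis explicitly rather than trying to remove it.
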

\begin{proof}
The encoded Pauli operator $\overline{P}$ is not unique, and the different possible physical operators are related by elements of the stabilizer. We therefore need to show that the stabilizer $S$ restricted to any set of $d-1$ qudits is the entire Pauli group $\mathcal{G}_{d-1,p}$.

An $[n,k,d]_p$ stabilizer code can correct all Pauli errors of weight $d-1$ and less. A correctable error doesn't commute with some element of the stabilizer, so for any Pauli operator of weight $d-1$ there $\exists M \in S$ such that $[M,P] \neq 0$. The result follows immediately from \cref{sub_group}.
\end{proof}

\begin{theorem} \label{pauli_rank}
If there exists a basis such that the QECC described by a (pseudo-)perfect tensor, $T$, from qudit $l$ to $t-1$ qudits maps Pauli operators to Pauli operators, then all other QECC described by $T$ which include qudit $l$ in the logical set also map Pauli operators to Pauli operators in that basis.
\end{theorem}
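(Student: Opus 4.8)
The plan is to exploit the fact, established inside the proof of \cref{stab_tensor}, that every QECC described by $T$ arises from the $[t-1,1,\lfloor t/2\rfloor]_p$ code by iterating Gottesman's reduction — discarding two stabilizer generators at a time to promote a physical qudit to a logical one — and to track the ``Pauli-to-Pauli'' property through this reduction. Throughout, the basis referred to is the fixed product basis on the $t$ legs of $T$, which is what makes ``Pauli'' meaningful.

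First I would fix the basis for which the $[t-1,1]_p$ code $\mathcal{C}_l$ with logical leg $l$ maps logical Paulis to physical Paulis, together with the resulting Pauli representatives $\overline{X}_l,\overline{Z}_l$ of the logical operators on leg $l$ (this also pins down the logical basis of that qudit). Let $\mathcal{C}_L$ be any QECC described by $T$ with $l\in L$; it is a $[t-|L|,|L|,\lfloor t/2\rfloor-|L|+1]_p$ code, and by \cref{stab_tensor} it is obtained from $\mathcal{C}_l$ by $|L|-1$ reduction steps, peeling off the legs of $L\setminus\{l\}$ in any order. At each step the two discarded generators become the new logical operators for the peeled-off leg, and the remaining generators, restricted, become the new code stabilizers; since all of these are restrictions of elements of the AME stabilizer group $S$, they are physical Paulis in the fixed basis.

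The substance of the argument is then to check that the logical operators on leg $l$ survive as Paulis. When a physical leg $q$ is removed, the pre-existing logical operators must be made to commute with the two new logical operators (which act as $X$ and $Z$ on $q$) and with the new stabilizers; since they already commute with the old stabilizer group, the only possible adjustment is to multiply a pre-existing logical operator that acts nontrivially on $q$ by an element of the current stabilizer group chosen to cancel its action on $q$ while leaving its action on the code space unchanged. Such an element exists because, by \cref{sub_group,choose}, the current code's stabilizer group restricted to the single qudit $q$ is all of $\mathcal{G}_{1,p}$ (valid as long as the current code distance exceeds $1$). Each adjustment multiplies the logical operator on leg $l$ by a Pauli lying in $S$, so inductively the logical operators on leg $l$ in $\mathcal{C}_L$ are Paulis still implementing the chosen logical $X_l,Z_l$; the logical operators on the remaining legs of $L$ are the discarded generators, already Paulis; and \cref{sub_group,choose} supply exactly the freedom needed to satisfy all the commutation relations simultaneously while keeping everything in Pauli form. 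Hence $\mathcal{C}_L$ maps Paulis to Paulis in the same basis.

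The step I expect to be the main obstacle is the simultaneous bookkeeping: ensuring that the fix-ups made to the logical operators of the newly promoted legs can all be carried out at once without forcing some other logical operator or stabilizer out of Pauli form, which is precisely where the group-theoretic counting of \cref{sub_group,choose} is indispensable. A separate edge case to dispose of is $|L|=\lfloor t/2\rfloor$, where the last code has distance $1$ and the reduction lemma no longer applies; there one argues directly that the relevant encoding map is a unitary whose (unnormalised) Choi state is the stabilizer state described by $T$, hence a unitary normalising the generalised Pauli group, so it maps Paulis to Paulis automatically.
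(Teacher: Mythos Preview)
Your proposal is correct, but the route differs from the paper's. You iterate Gottesman's reduction, peeling off one logical qudit at a time and invoking \cref{choose} at each step to cancel a single qudit's worth of Pauli support on the pre-existing logical operators. The paper instead does a single direct computation: it uses the relation $\ket{\Phi_{a,\bm{b}}}\propto\bra{\bm{b}}\ket{\Phi_a}$ between the basis states of the $[t-k,k]$ code and the $[t-1,1]$ code to expand $V'(P_1\otimes P_2)V'^\dagger$, and then invokes \cref{choose} \emph{once}, on the $[t-1,1]$ code, to pick the physical representative $Q$ of $P_1$ so that its action on the $k-1$ promoted qudits equals $P_2$. Since that code has distance $\lfloor t/2\rfloor$ and $k-1\le\lfloor t/2\rfloor-1$, this is always possible, and the restriction of $Q$ to the remaining qudits is the desired physical Pauli for $P_1\otimes P_2$.

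The paper's approach is cleaner on two counts: there is no inductive bookkeeping to verify, and the distance-$1$ edge case you single out never arises, because \cref{choose} is only ever applied to the full $[t-1,1]$ code. Your approach, on the other hand, is more operational and makes the connection to the code-reduction procedure of \cref{stab_tensor} explicit; it would be the natural argument if one actually wanted to write down the Pauli representatives algorithmically.
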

\begin{proof}
Let the AME state described by $T$ be given by:
\begin{equation}
\ket{\Phi} = \sum_{a \in \mathbb{Z}_p} \sum_{\bm{b} \in \mathbb{Z}_p^{k-1}} \sum_{\bm{c} \in \mathbb{Z}_p^{m-k}} \ket{a}_l \ket{b_1...b_{k-1}}_L \ket{c_1...c_{m-k}}_A \ket{\phi(a,\bm{b},\bm{c})}_B
\end{equation}
where $m = \floor*{\frac{t}{2}}$, and $|B| = \ceil*{\frac{t}{2}}$.

The basis states of the $[t-1,1,\floor*{\frac{t}{2}}]_p$ QECC from qudit $l$ to other $t-1$ qudits are given by:
\begin{equation}
\ket{\Phi_a} = \sum_{\bm{b} \in \mathbb{Z}_p^{k-1}} \sum_{\bm{c} \in \mathbb{Z}_p^{m-k}} \ket{b_1...b_{k-1}}_L \ket{c_1...c_{m-k}}_A \ket{\phi(a,\bm{b},\bm{c})}_B
\end{equation}

and the encoding isometry is given by:
\begin{equation}
V = \sum_a \ket{\Phi_a}\bra{a}
\end{equation}

The basis states of a $[t-k,k,\floor*{\frac{t}{2}}-k+1]_p$ QECC from a set $L$ qudits (where $l \in L$, $|L| = k$) to $t-k$ qudits is given by:
\begin{equation}
\ket{\Phi_{a,\bm{b}}} = \sum_{\bm{c} \in \mathbb{Z}_p^{m-k}} \ket{c_1...c_{m-k}}_A \ket{\phi(a,\bm{b},\bm{c})}_B
\end{equation}

and the encoding isometry is given by:
\begin{equation}
V' = \sum_a \ket{\Phi_{a,\bm{b}}}\bra{a,\bm{b}}
\end{equation}

By assumption we have:
\begin{equation}
VP_1V^\dagger = QVV^\dagger
\end{equation}
where $P_1 \in \mathcal{G}_{1,p}$ and $Q \in \mathcal{G}_{n,p}$, $n = t-1$.

Therefore:
\begin{equation}
\sum_{aa'} \ket{\Phi_a}\bra{a} P_1 \ket{a'}\bra{\Phi_{a'}} = Q\sum_{a} \ket{\Phi_a}\bra{\Phi_{a}}
\end{equation}

Consider the action of $V'$ on $P_1 \otimes P_2 \in \mathcal{G}_{k,p}$:

\begin{equation}
\begin{split}
V'(P_1 \otimes P_2)V'^\dagger & = \sum_{a,a',\bm{b,b'}}\ket{\Phi_{a,\bm{b}}}\bra{a,\bm{b}}(P_1 \otimes P_2 )\ket{a,\bm{b}}\bra{\Phi_{a,\bm{b}}} \\
& = \sum_{a,a',\bm{b,b'}} \bra{\bm{b}}\ket{\Phi_a} \bra{a}P_1\ket{a'} \bra{\bm{b}}P_2 \ket{\bm{b'}} \bra{\Phi_{a'}}\ket{\bm{b'}} \\
& = \sum_{a,a',\bm{b,b'}} \bra{\bm{b}}P_2 \ket{\bm{b'}} \bra{\bm{b}}\left(  \ket{\Phi_a}\bra{a} P_1 \ket{a'}\bra{\Phi_{a'}}  \right) \ket{\bm{b'}} \\
& = \sum_{a,\bm{b,b'}} \bra{\bm{b}}P_2 \ket{\bm{b'}} \bra{\bm{b}}\left( Q \ket{\Phi_a}\bra{\Phi_{a}}\right) \ket{\bm{b'}} \\
& = \sum_{a,\bm{b,b',b''}} \bra{\bm{b}}P_2 \ket{\bm{b'}}  \bra{\bm{b}}Q_2'\ket{\bm{b''}} Q' \ket{\Phi_{a,\bm{b''}}}\bra{\Phi_{a,\bm{b'}}} \\
& = \sum_{a,\bm{b}} Q' \ket{\Phi_{a,\bm{b}}}\bra{\Phi_{a,\bm{b}}} \text{   if }P_2 = Q_2' \\
& = V'V'^\dagger P'^{(n')}
\end{split}
\end{equation}
where and $Q_2'$ and $Q'$ indicate $Q$ restricted to the first $k-1$ and remaining $n-k-1$ qudits respectively ($Q = Q_2' \otimes Q'$).

Therefore, if $Q$ acts as $P_2$ on the first $k-1$ qudits, then $P_1 \otimes P_2$ maps to a Pauli under $V'$. The operator $Q$ is not unique, and from \cref{choose} we know that its action on $\floor*{\frac{t}{2}}-1$ qudits can be chosen to be any element of $\mathcal{G}_{\floor{\frac{t}{2}}-1,p}$. So we can choose that $Q$ acts as $P_2$ on the first $k-1$ qudits, for $k \leq \floor*{\frac{t}{2}}$.
\end{proof}

\section{Existence of (pseudo-)perfect stabilizer tensors}\label{appendix_5}

\subsection{Classical coding theory}
A classical linear $[n,k]_d$ code, $\mathcal{C}_{cl}$, encodes $k$ $d$-dimensional dits of information in $n$ dits. It can be described by a generator matrix $G^T: \mathbb{Z}_p^k \rightarrow \mathbb{Z}_d^n$, where information is encoded as $\bm{x} \rightarrow G^T \bm{x}$ for $\bm{x}\in \mathbb{Z}_d^k$. Equivalently, $\mathcal{C}_{cl}$ admits a description as the kernel of a parity check matrix $H: \mathbb{Z}_d^n \rightarrow \mathbb{Z}_d^{n-k}$. Consistency of the two descriptions implies $HG^T\bm{x} = 0$, $\forall \bm{x} \in \mathbb{Z}_d^n$, and hence the rows of $H$ are orthogonal to the rows of $G$.

The minimum distance $\delta$ of a classical code is defined as the minimum Hamming distance between any two code words. It is bounded by the classical Singleton bound, $\delta \leq n - k + 1$. Codes which saturate the classical Singleton bound are referred to as classical MDS codes.

Reed-Solomon codes are a class of classical MDS codes~\cite{Reed:1960}.\footnote{Reed Solomon codes can be defined over any finite field, but we only require the definition of Reed Solomon codes over $\mathbb{Z}_p$ for our construction.}

\begin{definition}
Let $p$ be a prime, and let $k,n$ be integers such that $k < n \leq p$. For a set $S = \{\alpha_1,\alpha_2,...,\alpha_n\} \in \mathbb{Z}_p$, the Reed-Solomon code over $\mathbb{Z}_p$ is defined as:
\begin{equation}
\mathcal{C}_{RS}[n,k] = \{\left(P(\alpha_1),P(\alpha_2),...,P(\alpha_n)  \right) \in \mathbb{Z}_p^n \mid P(X) \in \mathbb{Z}_p[X], \deg(P) \leq k-1 \}
\end{equation}
where $\mathbb{Z}_p[X]$ is the polynomial ring in $X$ over $\mathbb{Z}_p$.\footnote{The polynomial ring in $X$ over $\mathbb{Z}_p$, $\mathbb{Z}_p[X]$, is the set of polynomials $P(X) = a_0 + a_1X+a_2X^2+...+a_mX^m$ where $a_i \in \mathbb{Z}_p$.}
\end{definition}

To encode a message $\bm{a} = (a_0,a_1,...,a_{k-1}) \in \mathbb{Z}_p^{k}$ in the Reed-Solomon code define the polynomial:
\begin{equation}
P_{\bm{a}}(X) = a_0 + a_1 X + a_2X^2 + ...+a_{k-1}X^{k-1}
\end{equation}
and construct the codeword $\left(P_{\bm{a}}(\alpha_1), P_{\bm{a}}(\alpha_2),...,P_{\bm{a}}(\alpha_n) \right) \in \mathcal{C}_{RS}[n,k] $.

Reed-Solomon codes are linear codes, with generating matrix:
\begin{equation}
G = \left(
\begin{array}{c c c c c }
1 & 1 & 1 & .\hdots & 1 \\
\alpha_1 & \alpha_2 & \alpha_3 & ... & \alpha_n \\
\alpha_1^2 & \alpha_2^2 & \alpha_3^2 & ... & \alpha_n^{2} \\
\vdots & \vdots & \vdots & \ddots & \vdots \\
 \alpha_1^{k-1} & \alpha_2^{k-1} & \alpha_3^{k-1} & ... & \alpha_n^{k-1}
\end{array}
\right)
\end{equation}

The generator matrix can be put into standard form $G = [I_k | P]$ (where $P$ is a $k \times (n-k)$ matrix) using Gauss-Jordan elimination over the field $\mathbb{Z}_p$. The parity check matrix is then given by $H = [P^T|I_{n-k}]$.

\subsection{Constructing AME stabilizer states} \label{stab_perf}
An AME$(t,p)$ stabilizer state $\ket{\Phi}$ can be constructed from a classical $[t,\floor*{\frac{t}{2}}]_p$ MDS code with $\delta = \ceil*{\frac{t}{2}}+1$. The state is given by~\cite{Helwig:2013a}:
\begin{equation}
\ket{\Phi} = \frac{1}{d^{\frac{l}{2}}} \sum_{\bm{x}\in \mathbb{Z}_p^l} \ket{G \bm{x}}
\end{equation}
and has stabilizers $X^{G\bm{y}}$ for all $\bm{y} \in \mathbb{Z}_p^l$, and $Z^{\bm{y}}$ where $\bm{y}^T = \bm{z}^TH$ for all $\bm{z} \in \mathbb{Z}_p^m$. The full set of stabilizers is given by the generator matrix~\cite{Helwig:2013a}:
\begin{equation}
M = \left(
\begin{array}{@{}c|c@{}}
G & 0 \\
0 & H
\end{array}
\right)
\end{equation}
where $(\bm{\alpha} \mid \bm{\beta}) \equiv X^{\bm{\alpha}} \cdot Z^{\bm{\beta}}$ for $\bm{\alpha}, \bm{\beta} \in \mathbb{Z}_p^t$.

Reed-Solomon codes can be constructed for any $k,n$ satisfying $k < n  \leq p$~\cite{Reed:1960}, so by increasing $p$ this construction can provide AME$(t,p)$ stabilizer states for arbitrarily large $t$. By \cref{stab_tensor} the tensor which describes the AME$(t,p)$ stabilizer states will be a stabilizer (pseudo-)perfect tensor.

This construction is not optimised to minimise $p$ for a given $t$. It is possible to construct generalised Reed-Solomon codes which exist for $n = p+1$~\cite{Seroussi:1986}, which if used in this construction will give (pseudo-)perfect stabilizer tensors acting on lower dimensional qudits for certain values of $t$. There are also methods for constructing stabilizer perfect tensors for which $p \propto \sqrt{t}$ using cyclic and constacyclic classical MDS codes~\cite{Grassl:2015a}, but this method is significantly more involved than the one presented here, and does not work for pseudo-perfect tensors. In our construction there is no benefit to minimising $p$, so we have selected the simplest, most universal method for constructing (pseudo-)perfect stabilizer tensors.

\section{Perturbative simulations} \label{appendix_6}
In this appendix we collect some results regarding perturbative techniques, and introduce the new qudit perturbation gadgets which are used in this paper. All the perturbation gadgets we introduce are qudit generalisations of the qubit gadgets from~\cite{oliveira:2005}.

Let $\mathcal{H}$ be a Hilbert space decomposed as $\mathcal{H} = \mathcal{H}_- \oplus \mathcal{H}_+$. Let $\Pi_\pm$ be the projectors onto $\mathcal{H}_\pm$. For arbitrary operator $M$ define $M_{++} = \Pi_+M\Pi_+$, $M_{--} = \Pi_-M\Pi_-$, $M_{+-} = \Pi_+M\Pi_-$, and $M_{-+} = \Pi_-M\Pi_+$.

Consider an unperturbed Hamiltonian $H = \Delta H_0$, where $H_0$ is block-diagonal with respect to the split $\mathcal{H} = \mathcal{H}_- \oplus \mathcal{H}_+$, $(H_0)_{--} = 0$, $\lambda_{\text{{min}}}\left((H_0)_{++} \right) \geq 1$.

We will use \cref{second_order,third_order} from \cref{perturbative_simulations} to construct qudit perturbation gadgets, which generalise qubit gadgets from~\cite{oliveira:2005}. In our analysis we assume without loss of generality that every interaction is a Pauli-rank 2 interaction of the form $P_a + P_a^\dagger$.

\noindent\paragraph{Qudit subdivision gadget}\\
The subdivision gadget is used to simulate a $k$-local interaction by interactions which are at most $\ceil*{\frac{k}{2}}+1$-local.
We want to simulate the Hamiltonian:
\begin{equation}
H_{\text{target}} = H_{\text{else}} + (P_{A}\otimes P_{B} + P_{A}^\dagger\otimes P_{B}^\dagger)
\end{equation}
Let $\tilde{H} = H+V$ where:
\begin{equation}
H = \Delta \Pi_+
\end{equation}
\begin{equation}
V = H_1 + \Delta^{\frac{1}{2}}H_2
\end{equation}
where:
\begin{equation}
\Pi_+ = \ket{1}\bra{1}_w+\ket{2}\bra{2}_w+...+\ket{p-1}\bra{p-1}_w
\end{equation}
\begin{equation}
H_1 = H_{\text{else}} + 2\identity
\end{equation}
\begin{equation}
H_2 = -P_{A}\otimes X_w - P_A^\dagger \otimes X_w^\dagger + P_B\otimes X_w^\dagger + P_B^\dagger\otimes X_w
\end{equation}
The degenerate ground space of $H$ has the mediator qubit $w$ in the state $\ket{0}\bra{0}$ so $\Pi_- = \ket{0}\bra{0}_w$. This gives:

\begin{equation}
(H_{1})_{--}= \left(H_{\text{else}} + 2\identity \right) \otimes \ket{0}\bra{0}_w
\end{equation}
and:
\begin{equation}
(H_2)_{-+} = -P_A \otimes \ket{0}\bra{p-1}_w - P_A^\dagger \otimes \ket{0}\bra{1}_w + P_B \otimes \ket{0}\bra{1}_w + P_B^\dagger \otimes \ket{0}\bra{p-1}_w
\end{equation}
Therefore:
\begin{equation}
(H_2)_{-+}H_0^{-1}(H_2)_{+-} = \left(P_A \otimes P_B + P_A^\dagger \otimes P_B^\dagger - 2\identity  \right) \otimes \ket{0}\bra{0}_w
\end{equation}

If we define an isometry $W$ by $W\ket{\psi}_A = \ket{\psi}_A\ket{0}_w$ then:
\begin{equation}
 || W H_{\text{target}} W^\dagger - (H_1)_{--} + (H_2)_{-+}H_0^{-1}(H_2)_{+-}|| =  0
\end{equation}
Therefore \cref{second_order_eq} is satisfied for all $\epsilon \geq 0$. So, provided a $\Delta$ is picked which satisfies the conditions of \cref{second_order}, $\tilde{H}$ is a $(\frac{\Delta}{2},\eta,\epsilon)$-simulation of $H_{\text{target}}$.

%

\noindent\paragraph{Qudit 3-2 gadget}\\
The Hamiltonian we want to simulate is:
\begin{equation}
H_{\text{target}} = H_{\text{else}} + P_A\otimes P_B\otimes P_C + P_A^\dagger \otimes P_B^\dagger \otimes P_C^\dagger
\end{equation}
Let $\tilde{H} = H+V$ where:
\begin{equation}
H = \Delta \Pi_+
\end{equation}
\begin{equation}
V = H_1 + \Delta^{\frac{1}{3}} H_1' + \Delta^{\frac{2}{3}}H_2
\end{equation}
where:
\begin{equation}
\Pi_+ = \ket{1}\bra{1}_w + ... + \ket{p-1}\bra{p-1}_w
\end{equation}
\begin{equation}
\begin{split}
H_1 = H_{\text{else}} + \frac{1}{2}\left(P_A^2 \otimes P_C + (P_A^\dagger)^2\otimes P_C^\dagger + P_B^2 \otimes P_C + (P_B^\dagger)^2\otimes P_C^\dagger \right) \\
  + \frac{1}{2\sqrt{2}} \left[\left(-P_A+P_B\right)^2 \left(-P_A^\dagger+P_B^\dagger \right) + \left(-P_A^\dagger+P_B^\dagger\right)^2 \left(-P_A+P_B \right) \right]
\end{split}
\end{equation}
\begin{equation}
H_1' = - P_A\otimes P_B^\dagger - P_A^\dagger \otimes P_B
\end{equation}
\begin{equation}
H_2 =  \left(P_C \otimes \ket{p-1}\bra{1}_w + P_C^\dagger \otimes \ket{1}\bra{p-1}_w \right) + \frac{1}{\sqrt{2}} \left( (-P_A+P_B) \otimes X_w + (-P_A^\dagger+P_B^\dagger)\otimes X_w^\dagger \right)
\end{equation}

This gives:
\begin{equation}
(H_1)_{--} = H_1 \otimes \ket{0}\bra{0}_w
\end{equation}
\begin{equation}
(H_1')_{--} = H_1' \otimes \ket{0}\bra{0}_w
\end{equation}
\begin{equation}
\begin{split}
(H_2)_{++} = - \left(P_C \otimes \ket{p-1}\bra{1}_w + P_C^\dagger \otimes \ket{1}\bra{p-1}_w \right)
+ \frac{1}{\sqrt{2}} \left[ \left(-P_A+P_B \right)\otimes Q_w + \left(-P_A^\dagger+P_B^\dagger \right)\otimes R_w \right]
\end{split}
\end{equation}
where $Q = \ket{2}\bra{2}+...+\ket{p-1}\bra{p-1}$ and $R = \ket{1}\bra{1}+...+\ket{p-2}\bra{p-2}$.

\begin{equation}
(H_2)_{-+} = \frac{1}{\sqrt{2}}\left[ \left(-P_A + P_B \right)\otimes \ket{0}\bra{p-1}_w + \left(-P_A^\dagger + P_B^\dagger \right)\otimes \ket{0}\bra{1}_w \right]
\end{equation}
If we define an isometry $W$ by $W\ket{\psi}_A = \ket{\psi}_A\ket{0}_w$ then:
\begin{equation}
|| W H_{\text{{target}}} W^\dagger - (H_1)_{--} + (H_2)_{-+}H_0^{-1}(H_2)_{++}H_0^{-1}(H_2)_{+-}|| = 0
\end{equation}
Therefore \cref{third_order_eq1} is satisfied for all $\epsilon \geq 0$.

We also have:
\begin{equation}
(H_2)_{-+}H_0^{-1}(H_2)_{+-} = - P_A\otimes P_B^\dagger - P_A^\dagger \otimes P_B = (H_1')_{--}
\end{equation}
As required by \cref{third_order_eq2}. So, provided a $\Delta$ is picked which satisfies the conditions of \cref{third_order}, $\tilde{H}$ is a $(\frac{\Delta}{2},\eta,\epsilon)$-simulation of $H_{\text{target}}$.

\noindent\paragraph{Qudit crossing gadget}\\
We want to generate the Hamiltonian:
\begin{equation}
H_{\text{target}} = H_{\text{else}} + \alpha_{ad}\left(P_A\otimes P_D + P_A^\dagger \otimes P_D^\dagger\right) + \alpha_{bc}\left(P_B\otimes P_C +  P_B^\dagger \otimes P_C^\dagger\right)
\end{equation}

Set $\tilde{H} = H+V$ where:
\begin{equation}
H = \Delta \Pi_+
\end{equation}
\begin{equation}
V = H_1 + \Delta^{\frac{1}{2}}H_2
\end{equation}

where:
\begin{equation}
\Pi_+ = \ket{1}\bra{1}_w + ... + \ket{p-1}\bra{p-1}_w
\end{equation}
\begin{equation}
\begin{split}
H_1 = H_{\text{else}} + [\alpha_{ad}\alpha_{bc}(P_A\otimes P_B^\dagger + P_A^\dagger \otimes P_B) - \alpha_{ad}(P_A\otimes P_C^\dagger + P_A^\dagger \otimes P_C) \\
- \alpha_{bc}(P_B\otimes P_D^\dagger + P_B^\dagger \otimes P_D) + (P_C\otimes P_D^\dagger + P_C^\dagger \otimes P_D) \\
+ \identity(\alpha_{ad}^2 + \alpha_{bc}^2+2)]
\end{split}
\end{equation}
\begin{equation}
\begin{split}
H_2 = \frac{1}{\sqrt{2}}[-\alpha_{ad} (P_A\otimes X_w + P_A^\dagger \otimes X_w^\dagger) - \alpha_{bc}  (P_B\otimes X_w + P_B^\dagger \otimes X_w^\dagger) \\
+ (P_C\otimes X_w^\dagger + P_C^\dagger \otimes X) + (P_D\otimes X_w^\dagger + P_D^\dagger \otimes X_w)  ]
\end{split}
\end{equation}

Then:
\begin{equation}
(H_1)_{--} = H_1\otimes \ket{0}\bra{0}_w
\end{equation}
\begin{equation}
\begin{split}
(H_2)_{-+} = \sqrt{\frac{1}{2}}[ -\alpha_{ad}(P_A \otimes \ket{0}\bra{p-1}_w + P_A^\dagger \ket{0}\bra{1}_w) -\alpha_{bc}(P_B \otimes \ket{0}\bra{p-1}_w + P_B^\dagger \ket{0}\bra{1}_w) \\
+ (P_C \otimes \ket{0}\bra{1}_w + P_C^\dagger \ket{0}\bra{p-1}_w) + (P_D \otimes \ket{0}\bra{1}_w + P_D^\dagger \ket{0}\bra{p-1}_w)]
\end{split}
\end{equation}
If define an isometry $W$ by $W\ket{\psi}_A = \ket{\psi}_A\ket{0}_w$ then:
\begin{equation}
    || W H_{\text{{target}}} W^\dagger - (H_1)_{--} + (H_2)_{-+}H_0^{-1}(H_2)_{+-}|| = 0
\end{equation}

Therefore $\cref{second_order_eq}$ is satisfied for all $\epsilon \geq 0$. So, provided $\Delta$ is chosen to satisfy the conditions of \cref{second_order}, $\tilde{H}$ is a $(\frac{\Delta}{2},\eta,\epsilon)$-simulation of $H_{\text{target}}$.

%

\noindent\paragraph{Qudit fork gadget}\\
We want to generate the Hamiltonian:
\begin{equation}
H_{\text{target}} = H_{\text{else}} + \alpha_{ab}\left( P_A \otimes P_B + P_A^\dagger \otimes P_B^\dagger \right) + \alpha_{ac}\left( P_A \otimes P_C + P_A^\dagger \otimes P_C^\dagger \right)
\end{equation}
Let $\tilde{H} = H+V$ where:
\begin{equation}
H = \Delta \Pi_+
\end{equation}
\begin{equation}
V = H_1 + \Delta^{\frac{1}{2}}H_2
\end{equation}
where:

\begin{equation}
H_1 = H_{\text{else}} + \alpha_{ab}\alpha_{ac}\left(P_B \otimes P_C^\dagger + P_B^\dagger \otimes P_C \right) + \identity \left(1 + \alpha_{ab}^2 + \alpha_{ac} + \alpha_{ac}^2 \right)
\end{equation}

\begin{equation}
\begin{split}
H_2 = \frac{1}{\sqrt{2}}[ -(P_A \otimes X_w + P_A^\dagger \otimes X^\dagger_w) + \alpha_{ab}(P_B \otimes X_w^\dagger + P_B^\dagger \otimes X_w) \\
+ \alpha_{ac}(P_C \otimes X_w^\dagger + P_C^\dagger \otimes X_w) ]
\end{split}
\end{equation}

Then:
\begin{equation}
(H_1)_{--} = H_1\otimes \ket{0}\bra{0}_w
\end{equation}
\begin{equation}
\begin{split}
(H_2)_{-+} = \frac{1}{\sqrt{2}} [-(P_A \otimes \ket{0}\bra{p-1}_w + P_A^\dagger \otimes \ket{0}\bra{1}_w) + \alpha_{ab}(P_B \otimes \ket{0}\bra{1}_w + P_B^\dagger \otimes \ket{0}\bra{p-1}_w)  \\
+\alpha_{ac}(P_C \otimes \ket{0}\bra{1}_w + P_C^\dagger \otimes \ket{0}\bra{p-1}_w)]
\end{split}
\end{equation}

If define an isometry $W$ by $W\ket{\psi}_A = \ket{\psi}_A\ket{0}_w$ then:
\begin{equation}
    || W H_{\text{{target}}} W^\dagger - (H_1)_{--} + (H_2)_{-+}H_0^{-1}(H_2)_{+-}|| = 0
\end{equation}

Therefore $\cref{second_order_eq}$ is satisfied for all $\epsilon \geq 0$. So, provided $\Delta$ is chosen to satisfy the conditions of \cref{second_order}, $\tilde{H}$ is a $(\frac{\Delta}{2},\eta,\epsilon)$-simulation of $H_{\text{target}}$.

%

\section{Translational invariance in the boundary model} \label{appendix_ti}

In general the boundary model which results from pushing a translationally invariant bulk Hamiltonian through the HQECC will not be translationally invariant, but for particular choices of tessellation and (pseudo-)perfect tensor the boundary model will exhibit block translational invariance.

To see how this comes about consider the example discussed in \cref{example_2}.
First consider the symmetry of the honeycombing of $\mathbb{H}^3$.
The tessellation is the order-4 dodecahedral honeycomb.
The symmetry group of the dodecahedron is the icosahedral symmetry group, which is the Coxeter group $H_3$ with Coxeter diagram given in \cref{icosahedral-group}.
The rotation subgroup of this group is the alternating group $A_5$, and contains rotations by $\frac{2\pi}{5}$ about centres of pairs of opposite faces, rotations by $\pi$ about centres of pairs of opposite edges, and rotations by $\frac{2\pi}{3}$ about pairs of opposite vertices.
The symmetry group of the entire tessellation is the Coxeter group $\overline{BH}_3$, which has Coxeter diagram given in \cref{bh3-group}.\footnote{This is not the Coxeter diagram given for the tessellation in \cref{example_2}.
  In general a Coxeter group can have many different Coxeter diagrams depending on which presentation is used.
  In \cref{example_2} we used the presentation corresponding to reflections in the faces of the dodecahedron.
  Here we are using the Coxeter diagram which makes the link between $H_3$ and $\overline{BH}_3$ explicit.}
Clearly $H_3 < \overline{BH}_3$.
Therefore the symmetry group of the tessellation contains all of the rotational symmetries of the dodecahedron itself.

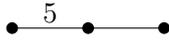
\begin{figure} \centering
  \begin{tikzpicture} \filldraw [black] (0,0) circle (2pt); \filldraw [black] (1,0) circle (2pt); \filldraw [black] (2,0) circle (2pt); \draw (0,0) -- (1,0); \draw (1,0)--(2,0); \node at (0.5,0.2) {5};
  \end{tikzpicture}
  \caption{Coxeter diagram for the icosahedral symmetry group $H_3$.}
  \label{icosahedral-group}
\end{figure}

\begin{figure} \centering
  \begin{tikzpicture} \filldraw [black] (0,0) circle (2pt); \filldraw [black] (1,0) circle (2pt); \filldraw [black] (2,0) circle (2pt); \filldraw [black] (3,0) circle (2pt); \draw (0,0) -- (1,0); \draw (1,0)--(2,0); \draw (2,0)--(3,0); \node at (0.5,0.2) {5}; \node at (2.5,0.2) {4};
  \end{tikzpicture}
  \caption{Coxeter diagram for the group $\overline{BH}_3$.}
  \label{bh3-group}
\end{figure}
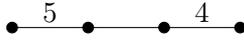

Cutting off the tessellation at some finite radius will not break the rotational symmetry. Therefore, so long as we can align the perfect tensors within the dodecahedral cells of the tessellations in such a way that the tensors don't break the symmetry, the HQECC will have the same rotational symmetry as the dodecahedron.

Ignoring the central tensor for now, it is clear that for the remaining tensors in the network it is possible to align them in such a way that rotational symmetry about at least one axis
is preserved.
To see this consider starting with an empty tessellation (of finite radius).
Pick an arbitrary cell in the tessellation, $P^{(w)}$, and place the pseudo-perfect tensor in that cell in an arbitrary orientation.
Now pick an axis of rotation, and consider rotating the tessellation by the minimum rotation about that axis which is in $H_3$.
This sends $P^{(w)}$ to $P^{(w')}$, and the resulting tensor in $P^{(w')}$ will have some particular orientation.
Place a tensor with this orientation in $P^{(w')}$.
We can now repeat this process, placing tensors in every cell which is equivalent to $P^{(w)}$ under rotation about this axis.
Then pick another empty cell in the tessellation, and repeat the process, keeping the axis of rotation the same.
We are guaranteed to be able to complete the process consistently as rotations about the same axis commute, and there are no conditions on how tensors in neighbouring cells have to be connected.

Now consider the central tensor.
Rotating the HQECC doesn't send the central tensor to another tensor in the network, it permutes 12 of the indices of the bulk tensor (leaving the final index, the bulk logical index, unchanged).
The stabilizer generators of the pseudo-perfect tensor used in the HQECC are given in \cref{gen-pseudo}.
Viewed as a isometry from any one index to the other twelve indices the pseudo-perfect tensor is the encoding isometry of a $[12,1,6]_{13}$ QECC.
Reed-Solomon codes are cyclic codes, so the pseudo-perfect tensor is symmetric under cyclic permutations of the 13 indices.
Which index we chose as the logical index is therefore not important.

Reading off from \cref{non-standard} one of the stabilizer generators of the AME(13,13) state is $X^{\otimes 13}$.
Therefore using the process described in~\cite{Gottesman:1997} for generating new stabilizer codes from old stabilizer codes we can construct a logical $X$ operator for the $[12,1,6]_{13}$ code as:
\begin{equation}
  \bar{X} = X^{\otimes 12}
\end{equation}

In order to construct a logical $Z$ operator we need to find an operator which commutes with every element of the stabilizer such that $\bar{X}\bar{Z} = \omega \bar{Z}\bar{X}$.
One such operator is:
\begin{equation}
  \bar{Z} = \left(Z^{12} \right)^{\otimes 12}
\end{equation}

Both the encoded $X$ and $Z$ operators on the central bulk index can be realised using operators which are symmetric under any permutation of the contracted tensor indices.
Therefore so can any operator we push through the central bulk tensor, so the central tensor does not break the rotational symmetry of the HQECC.

Since the HQECC (including the tensors) can be constructed to preserve rotational symmetry about at least one axis, a rotation about that axis will send the entire HQECC, including the boundary, to itself.
Therefore the boundary exhibits a form of `block translational invariance' - the Hamiltonian is a repeating pattern.

The existence of translationally invariant universal quantum Hamiltonians is an open question (in the classical case it has been shown that translationally invariant universal Hamiltonians do exist~\cite{kohler:2018}).
If translationally invariant universal quantum models were found it may be possible to construct a HQECC where the boundary Hamiltonian exhibits full translational invariance.

\bibliography{References_ads}
\bibliographystyle{abbrv}
\end{document}